\titleformat*{\section}{\large\bfseries}
\titleformat*{\subsection}{\it}
\newtheorem{thm}{Theorem}
\newtheorem{lem}{Lemma}
\newtheorem{prp}{Proposition}
\newtheorem{rem}{Remark}
\def\At{{\widetilde A}}
\def\ta{{\tau}}
\def\pd{\partial}
\def\al{{\alpha}}
\def\be{{\beta}}
\def\ga{{\gamma}}
\def\de{{\delta}}
\def\ep{{\varepsilon}}
\def\la{{\lambda}}
\def\fai{{\varphi}}
\def\ka{{\kappa}}
\def\kah{{\hat \ka}}
\def\lat{{\tilde \la}}
\def\Ga{{\Gamma}}
\def\non{{\nonumber}}
\title{{\bf Sparse Bayesian inference on gamma-distributed observations using shape-scale inverse-gamma mixtures}}
\author{}
\date{}
\begin{document}

\maketitle
\doublespacing

\vspace{-1.5cm}
\begin{center}
{\large Yasuyuki Hamura$^{1}$, Takahiro Onizuka$^{2}$,\\ Shintaro Hashimoto$^{2}$ and Shonosuke Sugasawa$^{3}$}
\end{center}

\medskip
\noindent
$^{1}$Graduate School of Economics, Kyoto University\\
$^{2}$Department of Mathematics, Hiroshima University\\
$^{3}$Center for Spatial Information Science, The University of Tokyo

\medskip
\medskip
\medskip
\begin{center}
{\bf \large Abstract}
\end{center}

In various applications, we deal with high-dimensional positive-valued data that often exhibits sparsity.  
This paper develops a new class of continuous global-local shrinkage priors tailored to analyzing gamma-distributed observations where most of the underlying means are concentrated around a certain value. 
Unlike existing shrinkage priors, our new prior is a shape-scale mixture of inverse-gamma distributions, which has a desirable interpretation of the form of posterior mean and admits flexible shrinkage.  
We show that the proposed prior has two desirable theoretical properties; Kullback-Leibler super-efficiency under sparsity and robust shrinkage rules for large observations.
We propose an efficient sampling algorithm for posterior inference. 
The performance of the proposed method is illustrated through simulation and two real data examples, the average length of hospital stay for COVID-19 in South Korea and adaptive variance estimation of gene expression data.

\bigskip\noindent
{\bf Key words}: Gamma distribution; Kullback-Leibler super-efficiency; Markov chain Monte Carlo; Tail-robustness

\newpage

\section{Introduction}

\label{sec:introduction}
In various statistical applications, we often face a sequence of positive-valued observations such as machine failure time, store waiting time, survival time under a certain disease, an income of a certain group, and so on. 
A common feature of the data is ``sparsity" in the sense that most of the underlying means of observations are concentrated around a certain value (grand mean) while a small part of the means is significantly away from the grand mean. 
To reflect the sparsity structure, a useful Bayesian technique is an idea of ``global-local shrinkage" \cite[e.g.][]{polson2012local} that provides adaptive and flexible shrinkage estimation of underlying means; when the observations are around the grand mean, the posterior mean strongly shrinks the observation toward the grand mean, but the observations that are away from the grand mean remain unshrunk.  

This paper proposes a new framework for sparse Bayesian inference on a sequence of positive-valued observations by using gamma sampling distributions for observations and develops a novel class of global-local shrinkage priors for positive-valued heterogeneous mean parameters based on shape-scale mixtures of inverse-gamma distributions. 
Specifically, we introduce a scaled beta (SB) distribution and its extension called inverse rescaled beta (IRB) distribution as mixing distributions in the shape-scale mixture. 
We discuss distributional properties, tail decay rate, and concentration around the origin of the proposed priors and develop an efficient sampling scheme from the posterior distribution. 
Moreover, we reveal two theoretical properties of the proposed prior, tail-robustness for large means and Kullback-Leibler supper-efficiency under sparsity.

There are several works on shrinkage inference of a sequence of positive-valued data. 
Under the gamma sampling model (as in our proposal), simultaneous estimation for rate/scale parameters was considered by several authors decades ago \citep[e.g.,][]{Berger1980, Ghosh1980, DasGupta1986, Dey1986}.
However, the classical framework does not take into account sparsity and provides only universal shrinkage regardless of the observed values. 
To address the sparsity in positive-valued data, \cite{Donoho2006} proposed a threshold-type estimator with the false discovery rate control, but the sampling model is an exponential distribution (a special case of gamma distribution).
Therefore, its applicability is quite limited. 
More recently, \cite{Lu2016} proposed an empirical Bayes shrinkage method customized for variance estimation using a $\chi^2$-distribution (a special case of gamma distribution) for the observed sampling variance and a finite mixture of inverse-gamma priors for the true variance. 
However, this approach does not address sparsity, and no theoretical results are discussed. 
More importantly, the existing methods only produce point estimates of the underlying means. In contrast, the proposed method can obtain full information on posterior distributions, enabling us to carry out uncertainty quantification.

In Bayesian analysis, the methodology and application of ``global-local shrinkage priors" have been developed last decades. 
Under Gaussian sequence or normal linear regression models, there have been a variety of shrinkage priors including the most famous horseshoe \citep{Carvalho2010} prior and its related priors \citep[e.g.][]{Armagan2013,bhadra2017horseshoe+,bhattacharya2015dirichlet,hamura2020shrinkage,zhang2020bayesian}.
Such prior is known to have an attractive shrinkage property, making it possible to strongly shrink small observations toward zero while keeping large observations unshrunk. 
Recently, techniques of global-local shrinkage priors for Gaussian data are extended to the (quasi-)sparse count data \citep[e.g.][]{Datta2016, Hamura2019}. 
Although several theoretical properties (e.g., Kullback-Leibler supper-efficiency and tail-robustness) have been revealed under the Gaussian and Poisson sampling distributions, theoretical properties of global-local shrinkage under the gamma sampling model are not fully discussed.
Furthermore, the theoretical development of the proposed prior requires substantial work due to the form of shape-scale mixtures that are rather different from the existing global-local shrinkage priors. 
To fill the gap, we contribute to the theoretical development of global-local shrinkage by showing Kullback-Leibler supper-efficiency and tail-robustness under the gamma sampling model.

The remainder of the paper is structured as follows. 
In Section \ref{sec:model}, we introduce settings and our hierarchical model, and we propose a global-local shrinkage prior based on a kind of beta distribution. Furthermore, we illustrate the properties of the marginal prior and posterior distributions for $\lambda_i$, and also discuss the selection of hyperparameters of the proposed priors. An efficient posterior computation algorithm is constructed via the Markov chain Monte Carlo method. In Section \ref{sec:theory}, we show two theoretical properties of the proposed priors. 
The performance of the proposed method is demonstrated through numerical studies in Section \ref{sec:sim}, and we apply the method to two real datasets related to the average length of hospital stay for COVID-19 in South Korea and variance estimation of gene expression data in Section \ref{sec:data}. 
Proofs and technical details are given in the Supplementary Material. 
R code implementing the proposed methods is available at Github repository (\url{https://github.com/sshonosuke/GLSP-gamma/}).

\section{Sparse Bayesian inference on gamma-distributed observations }
\label{sec:model} 

\subsection{Settings and models}
\label{subsec:model} 
Suppose we observe a sequence of gamma-distributed observations, denoted by $y_1,\ldots,y_n$.
For each $i = 1, \dots , n$, we assume the following gamma model $y_i$:
\begin{equation}\label{gamma}
y_i\mid\lambda_i \sim {\rm{Ga}} \left(\delta_i , \frac{\delta_i}{\lambda_i\eta_i}\right) , 
\end{equation}
where ${\rm Ga}(\alpha, \beta)$ denotes a gamma distribution with shape parameter $\alpha$ and rate parameter $\beta$, $\delta_i$ is a fixed constant, and $\lambda_i$ is a parameter of interest.
Under the model, $E(y_i)=\lambda_i\eta_i$ and $\eta_i$ is a structural component that may be modeled to incorporate covariates and other external information (e.g., spatial information). 
In what follows, we assume $\eta_i=1$ for simplicity, under which $\lambda_i$ is interpreted as the mean of $y_i$, but all the computation algorithms and analytical results are valid for the general form of $\eta_i$ as long as $\eta_i$ is conditioned on. 
As considered in \cite{Lu2016}, if $y_i$ and $\lambda_i$ are sampling and true variances, respectively, the choice is $\delta_i=n_i/2$, where $n_i$ is a sample size used to compute $y_i$.
Moreover, if $y_i$ is a sample mean based on $n_i$ samples generated from an exponential distribution ${\rm Exp}(1/\lambda_i)$, it holds that $\delta_i=n_i$, and it reduces the framework of a sequence of exponential data when $n_i=1$, considered in \cite{Donoho2006}.  
In the present framework, our interest lies in the simultaneous estimation of the sequence of positive-valued means $\lambda=(\lambda_1,\ldots,\lambda_n)$ by combining information of a given set of data $y = ( y_1 , \dots , y_n )$. 
In particular, we focus on the structure that most observations are located around the grand mean while some observations are very large. 
To carry out flexible Bayesian inference even under this situation, we employ an idea of global-local shrinkage that can provide customized shrinkage estimation of $\lambda_i$ depending on the location of the observed value $y_i$.

Specifically, we consider the following prior distribution for $\lambda_i$: 
\begin{equation}\label{prior-lam}
\lambda_i\mid u_i\sim {\rm{IG}} ( 1 + \tau u_i , \beta \tau u_i ), \ \ \ \ i=1,\ldots,n,
\end{equation}
where $\beta$ and $\tau$ are unknown global parameters and $u_i$ is a local parameter related to the customized shrinkage rule. 
The prior mean of $\lambda_i$ is $E(\lambda_i)=\beta$ so that $\beta$ is interpreted as a grand mean of underlying heterogeneous means. 
On the other hand, since ${\rm Var}(\lambda_i)=\beta^2/(\tau u_i-1)$ as long as $\tau u_i>1$, $\tau$ and $u_i$ control the scale of the prior. 
Unusual parametrization of (\ref{prior-lam}) is the dependence of both shape and scale parameters on the local parameter $u_i$ so that setting a mixing distribution for $u_i$ leads to a class of shape-scale mixtures of inverse-gamma distributions.  
However, this parametrization is essential to interpret the form of posterior means of $\lambda_i$.

Under the inverse-gamma prior (\ref{prior-lam}), the conditional posterior distributing of $\lambda_i$ given $u_i$ is ${\rm IG}(1+\delta_i+\tau u_i, \delta_iy_i+\beta \tau u_i)$, so that the posterior mean of $\lambda_i$ is given by 
\begin{align*}
E(\lambda_i \mid y_i )  
= E\left(\frac{\delta_i y_i+\beta \tau u_i}{\delta_i+\tau u_i} \mid y_i\right)
= \beta + \left\{1- E(\kappa_i\mid y_i)\right\}(y_i-\beta),
\end{align*}
where $\kappa_i=\tau u_i/(\delta_i + \tau u_i)\in (0,1)$ is known as {\it shrinkage factor} that determines the amount of shrinkage of $y_i$ toward the grand mean $\beta$. 
As desirable properties of $\kappa_i$, $E(\kappa_i\mid y_i)$ should be close to $1$ when $y_i$ is close to the grand mean, leading to strong shrinkage toward $\beta$, while $E(\kappa_i\mid y_i)$ should be sufficiently small for $y_i$ having large $y_i-\beta$ to prevent bias caused by over-shrinkage.  
We also note that the global parameter $\tau$ determines the overall shrinkage effect, whereas the local parameter $u_i$ allows $\kappa_i$ to vary over different observations.

\subsection{Global-local shrinkage priors}
\label{subsec:prior}

Our hierarchical model can be expressed as
\begin{align*}
y_i\mid\lambda_i \sim {\rm{Ga}} \left(\delta_i , \frac{\delta_i}{\lambda_i}\right),\ \ \ \
\lambda_i\mid u_i\sim {\rm{IG}} ( 1 + \tau u_i , \beta \tau u_i ), \ \ \ \
u_i\sim \pi(\cdot), 
\end{align*}
where priors for $\beta$ and $\tau$ are discussed at the end of this subsection.
For the local parameter $u_i$, we suggest two prior distributions. 
The first one is the scaled beta (SB) prior 
\begin{align*}
\pi_{\mathrm{SB}} ( u_i ) = \frac{1}{B(a, b)} \frac{{u_i}^{a - 1}}{(1 + u_i )^{a + b}},
\end{align*}
where $a,b>0$ are hyperparameters and $B(a,b)$ is the beta function. 
The SB distribution is also known as the beta prime distribution \citep[e.g.][]{Johnson1995}, and the related family of distributions has been often used in Bayesian statistics \citep[e.g.][]{Perez2017, Hamura2021}, especially in the context of shrinkage priors. 
As an alternative prior, we newly propose the inverse rescaled beta (IRB) prior
\begin{align*}
\pi_{\mathrm{IRB}} ( u_i ) &= \frac{1}{B(b, a)} \frac{1}{u_i (1 + u_i )} \frac{\{ \log (1 + 1 / u_i ) \} ^{b - 1}}{\{ 1 + \log (1 + 1 / u_i ) \} ^{b + a}}.
\end{align*}
Note that the IRB prior for $u_i$ is equivalent to using the rescaled beta prior \citep{Hamura2021} for $1 / u_i$.

Here, we summarize basic properties of $\pi_{\mathrm{SB}} ( u_i )$ and $\pi_{\mathrm{IRB}} ( u_i )$ under $u_i\to 0$ and $u_i\to\infty$. 
As is well known, the SB prior has the following properties.
\begin{itemize}
\item
\textit{Concentration at the origin}. 
As $u_i \to 0$, we have $\pi_{\mathrm{SB}} ( u_i ) \propto {u_i}^{a - 1}$. 
In particular, $\pi_{\mathrm{SB}} ( \kappa_i ) \to \infty $ as $\kappa_i \to 0$ if and only if $a < 1$. 
\item
\textit{Tail decay}. 
As $u_i \to \infty $, we have $\pi_{\mathrm{SB}} ( u_i ) \propto {u_i}^{- 1 - b}$. 
In particular, $\pi_{\mathrm{SB}} ( \kappa_i ) \to \infty $ as $\kappa_i \to 1$ if and only if $b < 1$. 
\end{itemize}
Meanwhile, ignoring log factors, we see that $\pi_{\mathrm{IRB}} ( u_i )$ has the following properties: 
\begin{itemize}
\item
\textit{Concentration at the origin}. 
As $u_i \to 0$, we have  
$\pi_{\mathrm{IRB}} ( u_i ) \approx {u_i}^{- 1}$. 
In particular, 
$\pi_{\mathrm{IRB}} ( \kappa_i ) \approx {\kappa_i}^{- 1} \to \infty $ as $\kappa_i \to 0$ whatever the value of $a > 0$ is. 
This is in contrast to the case of the SB prior. 
\item
\textit{Tail decay}. 
As $u_i \to \infty $, we have $\pi_{\mathrm{IRB}} ( u_i ) \propto {u_i}^{-1 - b}$. 
In particular, $\pi_{\mathrm{IRB}} ( \kappa_i ) \to \infty $ as $\kappa_i \to 1$ if and only if $b < 1$. 
This is exactly as in the case of the SB prior. 
\end{itemize}

In the context of existing global-local shrinkage priors, the concentration at both $\kappa_i=0$ and $\kappa_i=1$ is closely related to the properties of shrinkage and tail robustness of the marginal prior of the parameter of interest \cite[e.g.][]{Carvalho2010,Datta2016}. 
However, as shown in the subsequent section, the concentration at $\kappa_i=0$ leads to unnecessary shrinkage toward the origin in our framework, possibly because the local parameter depends not only on scale but also on shape unlike the existing formulation of global-local shrinkage. 
Hence, we should not pursue the concentration at $\kappa_i=0$ in the proposed model.
In fact, as shown in Proposition~\ref{prp:marginal}, the choice of $a$ (controlling the concentration at $\kappa_i=0$) is not related to the performance of shrinkage and tail robustness as the marginal prior of $\lambda_i$.

We discuss the priors for $\beta$ and $\tau$.
Remember that $\beta$ is a grand mean (i.e. shrinkage target of the posterior mean) of $\lambda_i$ and $\tau$ controls the overall shrinkage. 
It would be possible to fix $\beta$ or assign an informative prior for $\beta$ if the user has much information about $\beta$.
On the other hand, when there is not much prior information on $\beta$ and $\tau$, we recommend using proper but slightly diffuse priors.
In our numerical studies, we use priors, $\beta\sim {\rm Ga}(0.1, 0.1)$ and $\tau\sim {\rm Ga}(0.1, 0.1)$ as default priors, which are conditionally conjugate. 
Although improper priors can be assigned for $\beta$ and $\tau$, checking the posterior propriety given a certain form of improper prior is not straightforward due to the complicated hierarchical forms of the model.  
In the Supplementary Material, we discuss the conditions of posterior propriety under some forms of improper priors.
For example, using $\pi ( \beta ) \propto 1/\beta$ combined with a proper gamma prior for $\tau$ leads to posterior propriety under some conditions. 
Furthermore, we also note that the standard improper priors for scale parameters such as $\pi ( \beta ) \propto 1$ or $\pi ( \beta ) \propto 1 / \beta $ may not be necessarily reasonable under the hierarchical gamma model, that is, it is not clear whether these priors can be justified as objective ones such as reference priors.
Since we assume subjective priors for $\lambda_i$ and $u_i$, we may be able to consider reference priors for $\beta$ and $\tau$ using an idea of partial information prior \citep{sun1998reference}, but we do not pursue the detailed argument here.

\subsection{Marginal prior for $\lambda_i$}
\label{subsec:marginal_prior} 

In this section, we consider the behavior of the marginal prior of $\lambda_i$.
We assume that the grand mean $\beta$ and global shrinkage parameter $\tau$ are fixed at $1$ for simplicity so that the grand mean is $1$ in the following discussion. 
We first discuss the roles of the hyperparameters, $a$ and $b$, of the proposed priors, and then we propose particular choices of the hyperparameters.

The goal is to select $a$ and $b$ so that the marginal prior of $\lambda_i$ should ideally (G1) not be thick at the origin and have (G2) a fat right-tail and (G3) a spike at $1$. We provide the following analytical results concerning the behavior of the marginal prior for $\lambda_i$.

\begin{prp}
\label{prp:marginal} 
Suppose that either $\pi(u_i)=\pi_{\rm{SB}}(u_i) \propto {u_i}^{a - 1} / (1 + u_i )^{a + b}$ or $\pi(u_i)=\pi_{\rm{IRB}}(u_i) \propto [1 / \{ u_i (1 + u_i ) \} ] \{ \log (1 + 1 / u_i ) \} ^{b - 1} / \{ 1 + \log (1 + 1 / u_i ) \} ^{b + a}$. 
Then the marginal prior $p(\lambda_i)$ of $\lambda_i$ has the following properties:

\medskip
\noindent
{\rm (i)} As $\lambda_i \to 0$, 
\begin{align*}
p(\lambda_i) \approx \begin{cases} {\lambda_i}^{a - 1}, & \text{if \ $\pi(u_i) = \pi _{\rm{SB}} ( u_i )$}, \\ {\lambda_i}^{- 1}, & \text{if \ $\pi ( u_i ) = \pi _{\rm{IRB}} ( u_i )$}. \end{cases}
\end{align*}
\noindent
{\rm (ii)} As $\lambda_i \to \infty $, 
\begin{align*}
p(\lambda_i) \approx {\lambda_i}^{- 2}.
\end{align*}
\noindent
{\rm (iii)} As $\lambda_i \to 1$, 
\begin{align*}
p(\lambda_i) \to \begin{cases} \infty \text{,} & \text{if \ $b \le 1 / 2$}, \\ C_1 < \infty \text{,} & \text{if \ $b > 1 / 2$} \end{cases}
\end{align*}
for some finite positive constant $0 < C_1 < \infty $. 
\end{prp}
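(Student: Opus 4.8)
The plan is to reduce the marginal density to a one-parameter integral in $u_i$ and then run Laplace/Watson-type asymptotics at the two ends $u_i\to0$ and $u_i\to\infty$, controlling $\Gamma(1+u_i)$ by Stirling and $\pi$ by the origin/tail behaviour recorded in the bullet lists above.

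\textbf{Reduction.} With $\beta=\tau=1$ the conditional prior is $\mathrm{IG}(1+u_i,u_i)$, so $p(\lambda_i\mid u_i)=\{u_i^{1+u_i}/\Gamma(1+u_i)\}\,\lambda_i^{-2-u_i}e^{-u_i/\lambda_i}$ and, writing $\lambda_i^{-u_i}=e^{-u_i\log\lambda_i}$,
\[
p(\lambda_i)=\lambda_i^{-2}\int_0^\infty \frac{u_i^{1+u_i}}{\Gamma(1+u_i)}\,e^{-u_i g(\lambda_i)}\,\pi(u_i)\,du_i,\qquad g(\lambda_i):=\log\lambda_i+\frac1{\lambda_i}.
\]
Here $g$ is strictly convex on $(0,\infty)$ with unique minimum $g(1)=1$; $g(\lambda_i)\to\infty$ as $\lambda_i\to0$ or $\lambda_i\to\infty$, with $g(\lambda_i)\sim1/\lambda_i$ in the first case and $g(\lambda_i)\sim\log\lambda_i$ in the second. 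I will repeatedly use two elementary facts about the weight $w(u):=u^{1+u}/\Gamma(1+u)$: $w(u)\sim u$ as $u\to0$, and, applying Stirling to $\Gamma(1+u)=u\Gamma(u)$, $w(u)e^{-u}\sim\sqrt{u/(2\pi)}$ as $u\to\infty$; both hold in a two-sided sense on the relevant ranges.

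\textbf{Parts (i) and (ii): the regime $g(\lambda_i)\to\infty$.} Split the integral at $u_i=1$. On $(1,\infty)$, bound $w(u)\le C\sqrt u\,e^{u}$ and use that $\pi$ is bounded there, so this piece is $\le C\int_1^\infty\sqrt u\,e^{-u(g-1)}\pi(u)\,du=O(e^{-(g-1)})$ up to polynomial/log factors, negligible next to the main term. On $(0,1)$, replace $w(u)$ by its two-sided equivalent $u$ and $\pi(u)$ by its origin behaviour from the bullet lists; after the substitution $v=u_i g(\lambda_i)$ and dominated convergence one gets $\int_0^1 w(u_i)e^{-u_i g}\pi(u_i)\,du_i\asymp g^{-(a+1)}$ for the SB prior and $\asymp g^{-1}$ up to log factors for the IRB prior. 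Hence $p(\lambda_i)\asymp\lambda_i^{-2}g(\lambda_i)^{-(a+1)}$ (SB) and $p(\lambda_i)\asymp\lambda_i^{-2}g(\lambda_i)^{-1}$ up to logs (IRB). Substituting $g(\lambda_i)\sim1/\lambda_i$ as $\lambda_i\to0$ gives (i) ($\lambda_i^{a-1}$ for SB, $\lambda_i^{-1}$ up to logs for IRB), and $g(\lambda_i)\sim\log\lambda_i$ as $\lambda_i\to\infty$ gives (ii) ($\lambda_i^{-2}$ up to logs in both cases, the logarithmic factor absorbed into ``$\approx$'').

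\textbf{Part (iii): the regime $g(\lambda_i)\to1$.} Since $\lambda_i^{-2}\to1$ it suffices to study $J(\lambda_i):=\int_0^\infty h(u_i)\,e^{-u_i(g(\lambda_i)-1)}\,du_i$ with $h(u):=w(u)e^{-u}\pi(u)\ge0$. As $\lambda_i\to1$ we have $g(\lambda_i)-1\downarrow0$ and $e^{-u(g-1)}\uparrow1$, so by monotone convergence $\lim_{\lambda_i\to1}p(\lambda_i)=\int_0^\infty h(u)\,du$, finite or $+\infty$ according as $h$ is integrable or not. Near $u=0$, $h(u)\asymp u^a$ (SB) resp. $h(u)\asymp1$ up to logs (IRB), both integrable for $a>0$; near $u=\infty$, Stirling gives $h(u)\sim\sqrt{u/(2\pi)}\,\pi(u)\asymp u^{-1/2-b}$ in both cases, which is integrable iff $b>1/2$. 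Therefore $\lim_{\lambda_i\to1}p(\lambda_i)=C_1\in(0,\infty)$ when $b>1/2$ and $=+\infty$ when $b\le1/2$, which is (iii).

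\textbf{Main obstacle.} The substance lies in making the Laplace/Watson steps rigorous: establishing $w(u)\asymp u$ near $0$ and the Stirling asymptotic near $\infty$ in a form that can be pulled through the integral uniformly in $\lambda_i$; showing the $u_i$-away-from-$0$ contribution in (i)/(ii) is genuinely lower order for all large $g(\lambda_i)$; and carefully tracking the logarithmic corrections carried by the IRB prior, which are exactly what forces the ``$\approx$''/``ignoring log factors'' reading. Once Stirling is in hand the $b\le1/2$ versus $b>1/2$ dichotomy in (iii) is immediate, so the real bookkeeping is concentrated in parts (i) and (ii).
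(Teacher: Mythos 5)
Your proof is correct and follows essentially the same route as the paper's (Theorem S1 in the Supplement): the same factorization $p(\lambda_i)=\lambda_i^{-2}\int_0^\infty \{u^{u}/\Gamma(u)\}\,e^{-u\,g(\lambda_i)}\pi(u)\,\mathrm{d}u$, the same rescaling $v=u\,\xi_i$ with $\xi_i=g(\lambda_i)-1$ combined with dominated convergence and a crude bound showing the $u>1$ contribution is exponentially negligible for parts (i)--(ii), and Stirling plus monotone convergence at $\lambda_i=1$ for part (iii). The only material difference is that in part (iii) your monotone-convergence argument establishes divergence for $b\le 1/2$ without the precise rates $(1/\xi_i)^{1/2-b}$ and $\log(1/\xi_i)$ that the paper's Theorem S1(ii)--(iv) extracts, which is all that Proposition 1 actually requires.
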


Proposition \ref{prp:marginal} shows that our three goals are achieved whenever $b \le 1 / 2$ and $a > 1$ for the SB prior, and that goal (G1) is impossible to achieve under the IRB prior but (G2) and (G3) are achieved for $b \le 1 / 2$. 
This result is obtained from a more general theorem (Theorem S1), given in the Supplementary Material. 
Theorem S1 provides equivalents for the tail densities and density at $1$ of $\lambda $ under different priors for $u_i$ and relies on convergence theorems and approximations to prove them. 
We note that log factors are ignored in the above statement.

In more detail, Part (i) corresponds to shrinkage and non-shrinkage for small $\lambda_i$ under the SB prior with $a > 1$ and the IRB prior, respectively. 
Part (ii) corresponds to robustness for large $\lambda_i$ (i.e., the posterior mean of $\lambda_i$ does not shrink large $y_i$) under the proposed priors; if we fix $u_i$, then we necessarily have $p( \lambda_i ) \propto {\lambda_i}^{- 2 - u_i} < {\lambda_i}^{- 2}$ as $\lambda_i \to \infty $. 
Part (iii) corresponds to shrinkage for moderate $\lambda_i$ under the proposed priors with $b \le 1 / 2$; if we fix $u_i$, then $p( \lambda_i )$ never diverges at $\lambda_i = 1$. 

In other words, the left tail of $\pi(u_i)$ can affect the left tail of $p(\lambda_i)$ if we use the SB prior with $a \le 1$ or the IRB prior; the right tail of $p(\lambda_i)$ is guaranteed to be sufficiently heavy for any values of the hyperparameters; we can expect that a sufficient amount of prior probability mass is put around $\lambda_i = 1$ if we choose $b \le 1 / 2$ for the SB and IRB priors. 
Based on these findings, we propose to use $a > 1$ for the SB prior and $b \le 1 / 2$ for both the SB and IRB priors. 
In particular, our default choices are $a = 2$ and $b = 1 / 2$ for both the priors.

The marginal prior densities of $\lambda_i$ under the SB and IRB priors are illustrated in Figure \ref{fig:prior}.  
As expected, it can be seen from the right panel that the right tail of $p(\lambda_i)$ is heavier under the proposed priors than the global shrinkage prior (denoted by GL in Figure \ref{fig:prior}) when $u_i$ is fixed, that is, $\lambda_i \sim {\rm{IG}} (2, 1)$. 
Also, it is confirmed that the IRB prior makes the right tail heavier than the SB prior. 
The left panel shows that the hyperparameter $a$ of the SB and IRB priors causes a trade-off between undesirable tail thickness at the origin and desirable tail thickness at infinity. 
However, for the case of the SB prior, we at least have that $p(\lambda_i) \to 0$ as $\lambda_i \to 0$ for $a = 2$ and for $a = 3$. 
The most remarkable point we want to stress here is that under each of the proposed priors, %
$p(\lambda_i)$ has a spike at $\lambda_i = 1$. 
This means that a large shrinkage effect is expected when we use one of the proposed priors, and this is quite in contrast to the case of fixing $u_i = 1$, where the mode of $p( \lambda _i )$ is significantly shifted to the left. 

Finally, the choice $a = 2$ may seem slightly strange in the literature on global-local shrinkage priors. 
Under $u_i\sim \mathrm{SB}(a,b)$, the shrinkage factor $\kappa_i=u_i/(1+u_i)$ follows the beta distribution $\mathrm{Beta}(a,b)$. 
The well-known horseshoe prior \citep{Carvalho2010} corresponds to the case $(a,b)=(1/2,1/2)$, and the resulting prior distribution of $\kappa_i$ is $\mathrm{Beta}(1/2,1/2)$, which has the popular U-shaped density. 
For our model, we do not adopt the choice $(a,b)=(1/2,1/2)$, since 
setting $a = 1 / 2$ causes 
unexpected tail-robustness (or lack of desirable shrinkage toward the grand mean) around the origin and since using $a > 1$ does not affect tail-robustness around infinity much (see also Section \ref{subsec:tail-robustness}). 

\begin{figure}[!htb]
\centering
\includegraphics[width = \linewidth]{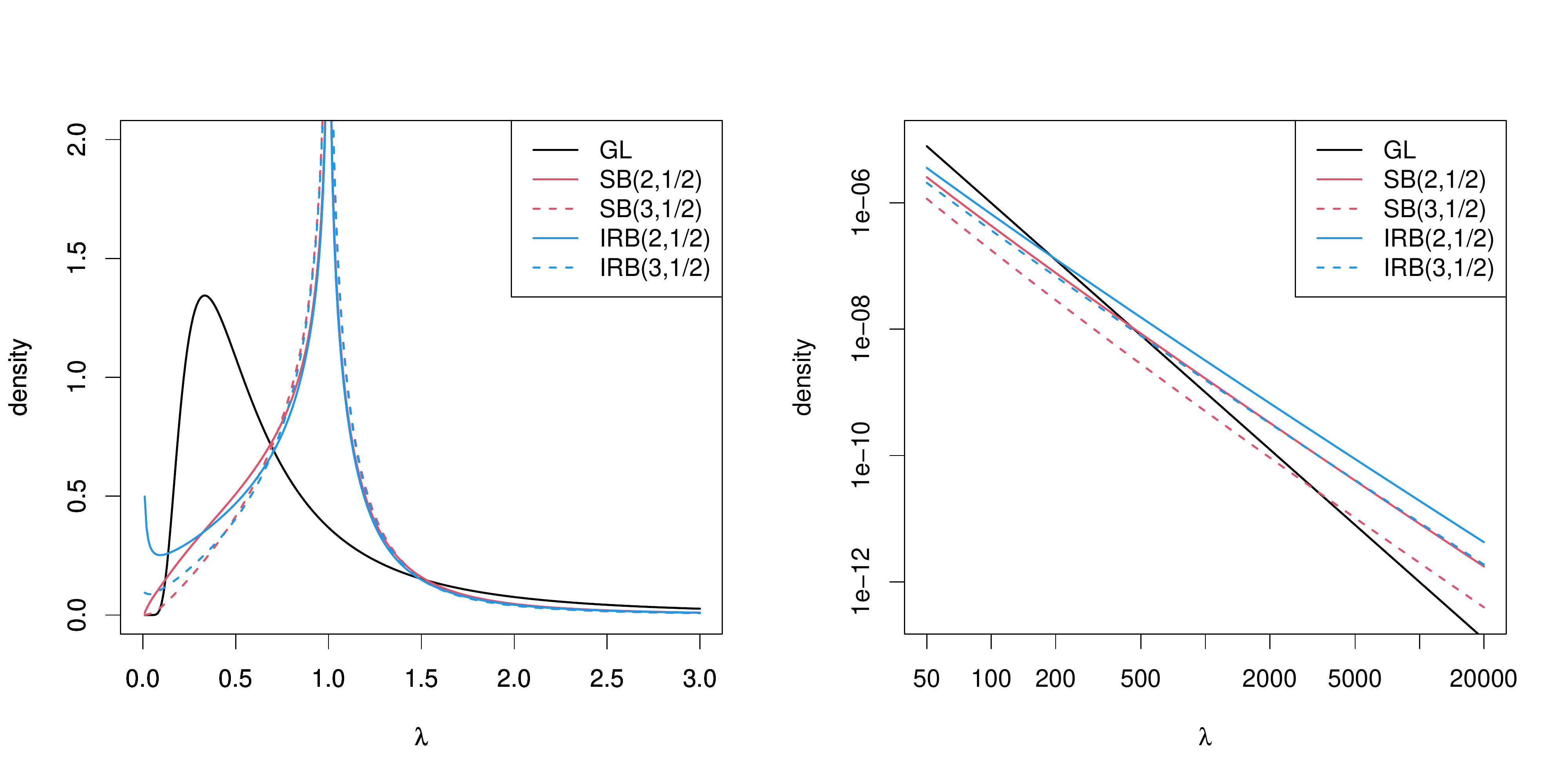}
\caption{Marginal prior densities for $\lambda_i$. The right panel is an enlarged version of the left panel in the log-scaled tail region.}
\label{fig:prior}
\end{figure}

\subsection{Marginal posterior of $\lambda_i$}
\label{subsec:marginal_posterior} 
Here, we discuss the flexibility of the proposed prior distributions. 
As an artificial example, we suppose that $m=50$ observations (the first 46 observations are $5$ and the others are $7, 15, 30, 50$) are observed. 
Furthermore, we set $\delta_i = 5$.
We show marginal posterior distributions of the shrinkage factor $\kappa_i$ given $y \in \{ 7, 15, 30, 50\} $ in Figure \ref{fig:posterior}. 
The marginal posterior under the global shrinkage prior ($u_i=1$) does not depend on $y$ and over-shrinks the posterior density under a large signal such as $y=30$ and $y=50$. Also, the global shrinkage method does not have strong shrinkage near the grand mean when $y=7$.
On the other hand, Figure \ref{fig:posterior} shows that the posterior of $\kappa_i$ under the SB and IRB priors change flexibly according to the observed values, as expected from the design of the priors. 
Comparing the two priors, it can be seen that the IRB posterior is more concentrated around $\kappa_i=0$ than the SB prior when $y_i$ is large.
Therefore, we recommend using IRB prior to situations where tail-robustness is required.

We further investigate the behavior of the posterior distribution through posterior means and variances of $\lambda_i$ as a function of $y_i$.
To see the properties of the local shrinkage property, the hyperparameters in the three priors are fixed to their posterior means obtained to make Figure \ref{fig:posterior}.
We set $(\log y_1,\ldots,\log y_{100})$ to equally-spaced 100 points from $-4$ to $4$, and computed posterior means and variances of $\lambda_i  \ (i=1,\ldots,100)$ based on the three priors. 
The results are shown in Figure \ref{fig:pos-mean}.
It is observed that both the posterior mean and variance of the GL prior are simple functions of $y_i$.
On the other hand, the proposed two priors, SB and IRB, strongly shrink $y_i$ around the grand mean while do not shrink large or small $y_i$. 
Moreover, the posterior variances of the proposed two priors are small around the grand mean due to the strong shrinkage property and those are large when the observed value is large.

\begin{figure}[!htb]
\centering
\includegraphics[width = \linewidth]{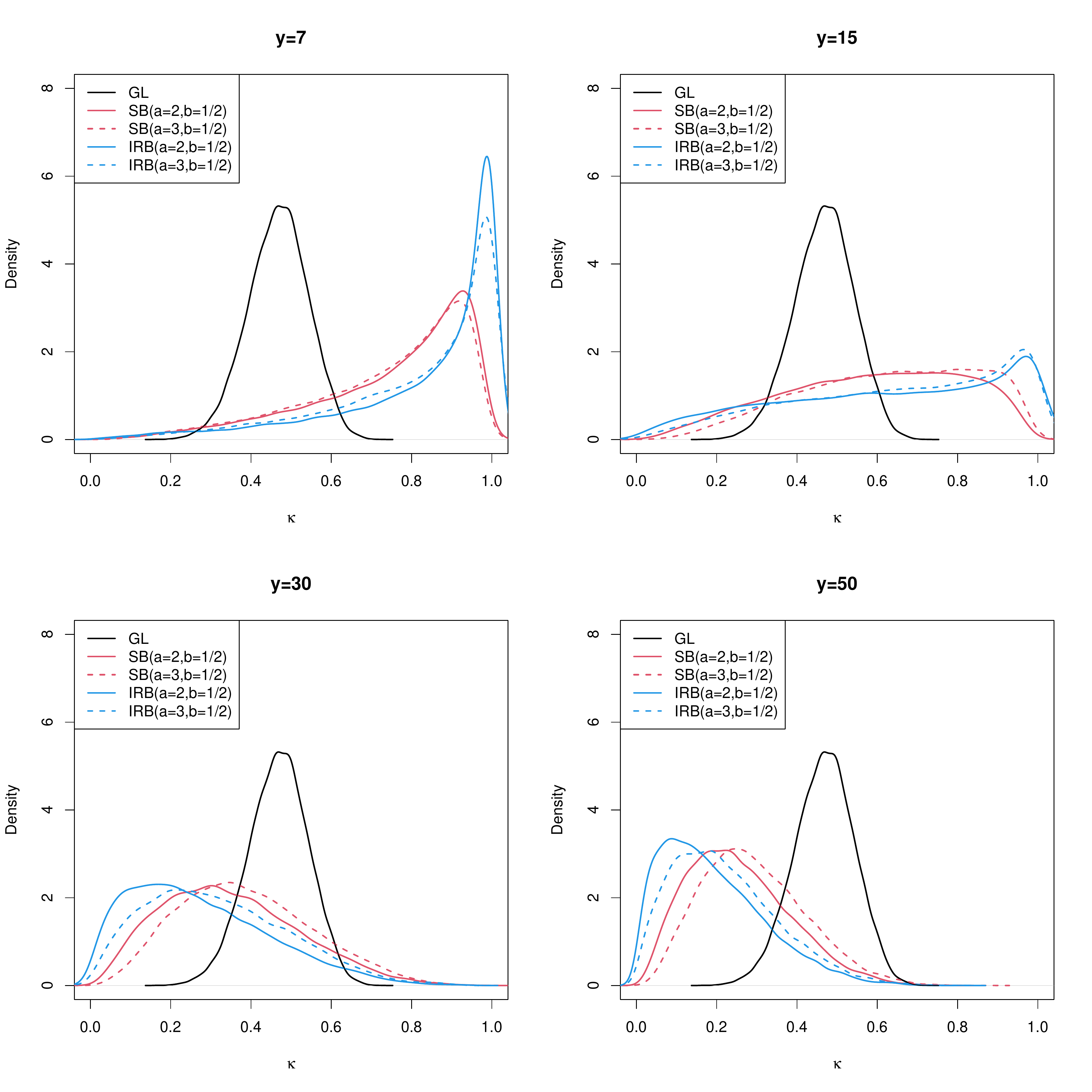}
\caption{Marginal posterior densities for the shrinkage factor $\kappa_i$ under four types of observed values.}
\label{fig:posterior}
\end{figure}

\begin{figure}[!htb]
\centering
\includegraphics[width = \linewidth]{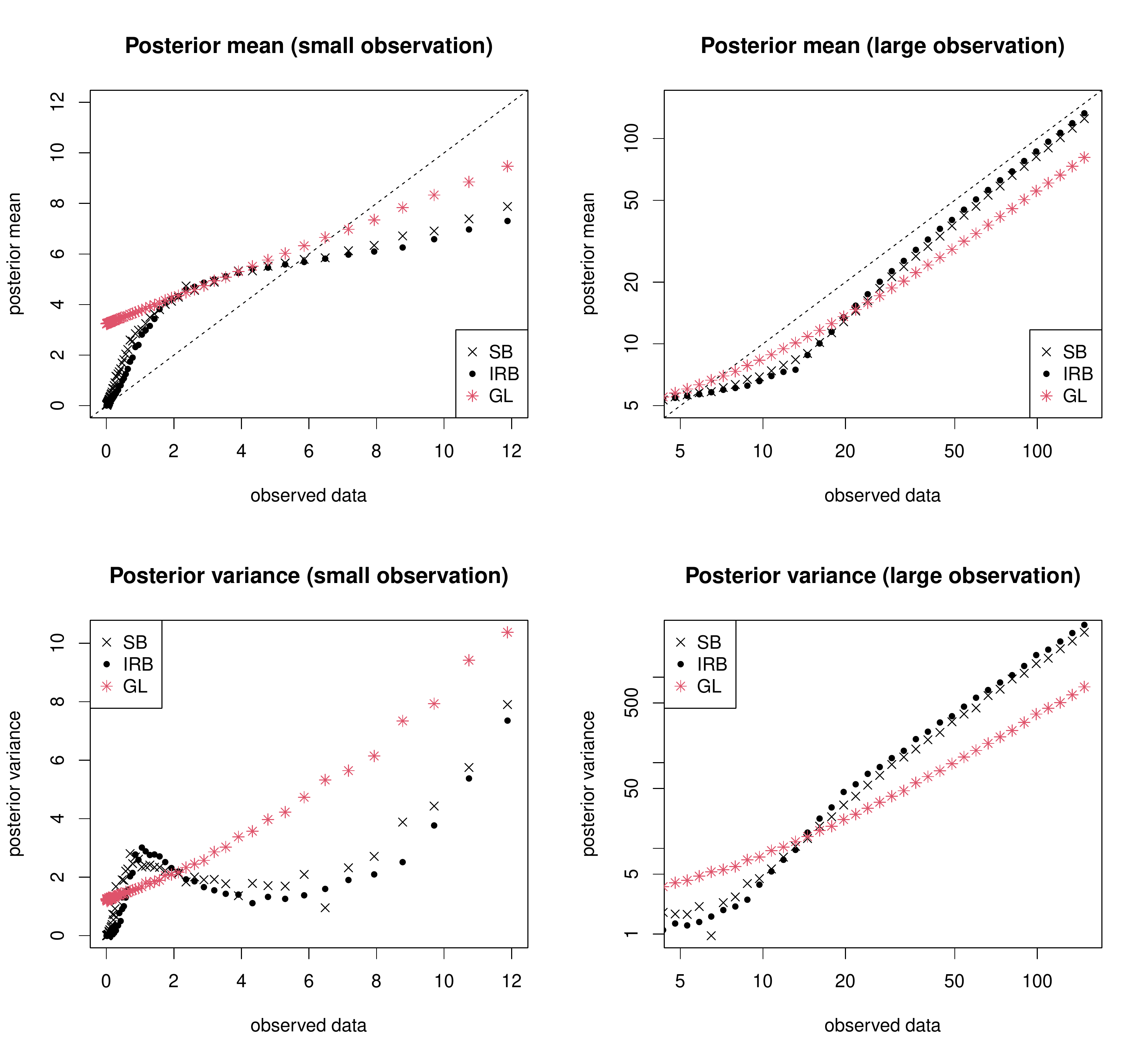}
\caption{Posterior means and variances of $\lambda_i$ for various observed values. }
\label{fig:pos-mean}
\end{figure}

\subsection{Posterior computation}
\label{subsec:mcmc} 
We provide an efficient Metropolis within the Gibbs algorithm for our model by using the approximation method of \cite{Miller2019}. 
Here, we consider the case of the SB prior. 
The details of posterior computation under the IRB prior are given in the Supplementary Material. 
In order to simplify sampling of $\tau$, we make the change of variables $\nu _i = \tau u_i$ for $i = 1, \dots , n$. 
Then the overall posterior distribution of $(\lambda,\beta,\tau, \nu )$ given $y $ is expressed by
\begin{align*}
p( \lambda , \beta , \tau , \nu \mid y ) &\propto \pi( \beta ) \pi( \tau ) \frac{1}{\tau ^n} \prod_{i = 1}^{n} \left\{ \pi ( \nu _i / \tau ) \frac{\beta ^{\nu _i + 1} {\nu _i}^{\nu _i}}{\Gamma ( \nu _i )} \frac{1}{{\lambda_i}^{\nu _i + 2}} e^{- \beta \nu _i / \lambda_i} \frac{1}{{\lambda_i}^{\delta_i}} \exp \left( - \frac{\delta_i y_i}{\lambda_i} \right) \right\} , 
\end{align*}
where $\nu=(\nu_1,\dots, \nu_n)$. 
Since the SB prior density is expressed as
\begin{align*}
\pi_{\mathrm{SB}} ( u_i ) &= \frac{1}{B(a, b)} \frac{{u_i}^{a - 1}}{(1 + u_i )^{a + b}} = \frac{1}{\Gamma(a)\Gamma(b)} \int_{0}^{\infty } {t_i}^{a + b - 1} e^{- t_i} {u_i}^{a - 1} e^{- t_i u_i} \mathrm{d}{t_i}
\end{align*}
for all $i = 1, \dots , n$, it follows that 
\begin{align*}
p( \lambda , \beta , \tau , \nu \mid y ) &\propto \int_{(0, \infty )^n} \bigg[ \pi( \beta ) \pi( \tau ) \frac{1}{\tau ^{n a}} \\
&\quad \times \prod_{i = 1}^{n} \left\{ {t_i}^{a + b - 1} e^{- t_i} {\nu _i}^{a - 1} e^{- t_i \nu _i / \tau } \frac{\beta^{\nu _i + 1} {\nu _i}^{\nu _i}}{\Gamma(\nu _i)} \frac{1} {{\lambda_i}^{\nu _i + 2}} e^{- \beta \nu _i / \lambda_i} \frac{1}{{\lambda_i}^{\delta_i}} e^{- ( \delta_i y_i ) /  \lambda_i } \right\} \bigg] \mathrm{d}t.
\end{align*}
We consider $t =(t_1, \dots, t_n) \in (0, \infty )^n$ as a set of additional latent variables. 
For the global parameters, we consider the conjugate gamma priors $\pi (\beta)=\mathrm{Ga}(\beta \mid a_{\beta},b_{\beta})$ and $\pi(\tau)=\mathrm{Ga}(\tau \mid a_{\tau},b_{\tau})$.

The variables $\lambda$, $\beta$, $\tau$, $t $, and $\nu $ are updated in the following way. 
\begin{itemize}
\item[-] 
Sample $\lambda_i \sim {\rm{IG}} ( \delta_i + \nu _i + 1, \delta_i y_i  + \beta \nu _i )$ independently for $i = 1, \dots , n$. 
\item[-] 
Sample $\beta \sim {\rm{Ga}} ( \sum_{i=1}^n \nu _i + n + a_{\beta } , \sum_{i=1}^n \nu _i / \lambda_i + b_{\beta } ) $. 
\item[-] 
Sample $\tau \sim {\rm{GIG}} ( - n a + a_{\tau } , 2 b_{\tau } , 2 \sum_{i=1}^n t_i \nu _i ) $, where ${\rm GIG}(a, b,\gamma)$ has density proportional to $x^{a - 1} \exp(- bx/2 - \gamma / 2x)$. 
\item[-] 
Sample $t_i \sim {\rm{Ga}} ( a + b, 1 + \nu _i/ \tau )$ independently for $i = 1, \dots , n$. 
\item[-]
The full conditional distribution of $\nu_i$ is proportional to 
\begin{align*}
&\prod_{i = 1}^{n} \{ {\rm{Ga}} ( \nu _i \mid a, t_i / \tau ) {\rm{Ga}} (1 / \lambda_i \mid \nu _i , \beta \nu _i ) \} , 
\end{align*}
which can be accurately approximated by using the method of \cite{Miller2019} for each $i = 1, \dots , n$. The method is based on the gamma approximation of intractable probability density function by matching the first- and second-derivatives of log densities. 
We use the approximate full conditional distributions as proposal distributions in independent Metropolis-Hastings (MH) steps. 
\end{itemize}

The full conditional distributions of parameters and latent variables other than $\nu_i$ are of familiar forms.
Even for the full conditional of $v_i$, we can efficiently sample from the distribution. 
Note that the number of latent variables in the proposed priors is larger than that of GL prior to exhibit global-local shrinkage properties. 
Hence, the computation time of the MCMC algorithm with the proposed priors can be longer than that of the GL prior.
Specifically, in the example given in Section~\ref{subsec:marginal_posterior}, the computation times of SB and IRB to generate 5000 posterior samples are around 5 seconds while that of GL is less than 1 second. 
Such an increase in computational costs would be a reasonable price for the desirable shrinkage properties.

\section{Theoretical properties}
\label{sec:theory}

In this section, we analytically compare properties of different priors for $u_i$ and, in particular, show two properties of the proposed priors, namely, tail-robustness for large observations (Section \ref{subsec:tail-robustness}) and desirable Kullback-Leibler risk bound under sparsity (Section \ref{subsec:KL}). 
For simplicity, we fix $\beta=\tau=1$ in what follows so that all the theoretical results are conditional on the hyperparameters.

\subsection{Tail-robustness %
for large observations }
\label{subsec:tail-robustness}

For a prior $\pi(u_i)$ of local parameter $u_i$, we consider the class given by 
\begin{align}
&\sup_{u \geqslant 1} \{ u \pi (u) \} < \infty, \label{eq:C2} \\
&\pi(u) \sim C \frac{u^{\alpha - 1}}{\{ 1 + \log (1 + 1 / u) \} ^{1 + \gamma }} \quad \text{as $u \to 0$ for some $\alpha \geqslant 0$ and $\gamma \geqslant - 1$}, \label{eq:C3} 
\end{align}
where $C$ is a positive constant. The notation $f(x)\sim g(x)$ means $\lim_{x\to 0} f(x)/ g(x)=1$. Condition \eqref{eq:C2} is a technical condition satisfied by most priors. 
Condition \eqref{eq:C3} is a condition on the tail of $\pi(u_i)$ at the origin and is satisfied by both the SB and the IRB priors. 
Because we consider proper distributions only in this paper, the case of $\alpha = 0$ and $\gamma \le 0$ is excluded. 

We consider the tail robustness of the 
Bayes estimator of $\lambda_i$ given by 
\begin{align*}
\hat{\lambda}_i &= \hat{\lambda}_{i}^{\rm{ML}} - E(\kappa_i \mid y_i) ( \hat{\lambda} _{i}^{\rm{ML}} - 1),
\end{align*}
where $\hat{\lambda}_{i}^{\rm{ML}} = y_i$ and $\kappa_i = u_i / ( \delta_i + u_i )$. 
Specifically, we show that the expected shrinkage factor, $E( \kappa_i \mid y_i )$, converges to zero as $y_i \to \infty $. 

\begin{thm}
\label{prp:tail_robustness} 
There exists a function $\kappa^{*}: (0, \infty ) \to (0, \infty )$ such that 
\begin{align*}
E(\kappa_i \mid y_i) &\sim \frac{1}{\delta_i} (1 + \alpha ) \kappa^{*} \left( \delta_i y_i \right) \to 0 
\end{align*}
as $y_i \to \infty $.
\end{thm}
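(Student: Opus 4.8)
The plan is to first write $E(\kappa_i\mid y_i)$ as an explicit ratio of integrals over the local parameter and then run a Laplace/Watson-type asymptotic analysis. With $\beta=\tau=1$, integrating $\lambda_i$ out of the hierarchy (using $\lambda_i\mid u_i,y_i\sim\mathrm{IG}(\delta_i+1+u_i,\,\delta_iy_i+u_i)$) gives a marginal likelihood $p(y_i\mid u_i)\propto h(u_i;y_i)$ with
\begin{align*}
h(u_i;y_i)=\frac{u_i^{1+u_i}}{\Gamma(1+u_i)}\,\frac{\Gamma(\delta_i+1+u_i)}{(\delta_iy_i+u_i)^{\delta_i+1+u_i}},
\end{align*}
so that $p(u_i\mid y_i)\propto\pi(u_i)h(u_i;y_i)$ and $E(\kappa_i\mid y_i)=N(y_i)/D(y_i)$, where $N(y_i)=\int_0^\infty\{u_i/(\delta_i+u_i)\}\,\pi(u_i)h(u_i;y_i)\,\mathrm{d}u_i$ and $D(y_i)=\int_0^\infty\pi(u_i)h(u_i;y_i)\,\mathrm{d}u_i$. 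The whole argument is about these two integrals, the key point being that the factor $(\delta_iy_i+u_i)^{-u_i}=\exp\{-u_i\log(\delta_iy_i+u_i)\}$ forces the posterior of $u_i$, as $y_i\to\infty$, to concentrate on a neighbourhood of the origin of width of order $1/\log(\delta_iy_i)$.

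Second, I would show that the region $u_i\ge\varepsilon$ contributes negligibly to both $N(y_i)$ and $D(y_i)$, for an arbitrary fixed $\varepsilon\in(0,1)$. On $[\varepsilon,1]$ one bounds $h(u_i;y_i)\le C(\delta_iy_i)^{-(\delta_i+1+\varepsilon)}$, which is of smaller order than the $(0,\varepsilon)$-contribution found in the next step. On $[1,\infty)$ I would use \eqref{eq:C2} (i.e.\ $\pi(u_i)\lesssim 1/u_i$) together with the Stirling bounds $\Gamma(1+u_i)\ge\sqrt{2\pi u_i}\,(u_i/e)^{u_i}$ and $\Gamma(\delta_i+1+u_i)\lesssim u_i^{\delta_i+1/2}(u_i/e)^{u_i}$ to get $\pi(u_i)h(u_i;y_i)\lesssim u_i^{-1}\{u_i/(\delta_iy_i+u_i)\}^{\delta_i+1+u_i}$, and then split $[1,\infty)$ into $[1,\sqrt{\delta_iy_i}\,]$, $[\sqrt{\delta_iy_i}\,,\delta_iy_i]$ and $[\delta_iy_i,\infty)$: on the first range I bound $\{u_i/(\delta_iy_i+u_i)\}^{\delta_i+1+u_i}$ by $\{u_i/(\delta_iy_i)\}^{\delta_i+1+u_i}$ and use $\int_1^\infty u^{\delta_i}e^{-\beta u}\,\mathrm{d}u=O(e^{-\beta}/\beta)$ with $\beta=\tfrac12\log(\delta_iy_i)$, obtaining $O((\delta_iy_i)^{-(\delta_i+3/2)})$, while on the other two ranges $u_i/(\delta_iy_i+u_i)$ is bounded away from $1$ (or $\{u_i/(\delta_iy_i+u_i)\}^{u_i}\le e^{-\delta_iy_i/2}$), giving super-exponentially small contributions; since $u_i/(\delta_i+u_i)\le1$ the same bounds dispose of the tail of $N(y_i)$.

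On $(0,\varepsilon)$ I would invoke \eqref{eq:C3} together with the uniform-in-$u_i$ expansions $u_i^{u_i}=1+o(1)$, $\Gamma(1+u_i)=1+o(1)$, $\Gamma(\delta_i+1+u_i)=\Gamma(\delta_i+1)(1+o(1))$, $(\delta_iy_i+u_i)^{\delta_i+1}=(\delta_iy_i)^{\delta_i+1}(1+o(1))$, $(\delta_iy_i+u_i)^{u_i}=(\delta_iy_i)^{u_i}(1+o(1))$ and $u_i/(\delta_i+u_i)=(u_i/\delta_i)(1+o(1))$, which reduce everything to the one-parameter family
\begin{align*}
I_k(x)=\int_0^\varepsilon\frac{u^{\alpha+k}}{\{1+\log(1+1/u)\}^{1+\gamma}}\,x^{-u}\,\mathrm{d}u,\qquad k\in\{0,1\},
\end{align*}
with $x=\delta_iy_i$: one gets $D(y_i)\sim C\Gamma(\delta_i+1)(\delta_iy_i)^{-(\delta_i+1)}I_0(\delta_iy_i)$ and $N(y_i)\sim(C/\delta_i)\Gamma(\delta_i+1)(\delta_iy_i)^{-(\delta_i+1)}I_1(\delta_iy_i)$. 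Substituting $u=v/\log x$ gives $I_k(x)=(\log x)^{-(\alpha+k+1)}\int_0^{\varepsilon\log x}v^{\alpha+k}e^{-v}\{1+\log(1+\tfrac{\log x}{v})\}^{-(1+\gamma)}\,\mathrm{d}v$; since $1+\log(1+\tfrac{\log x}{v})=(\log\log x)(1+o(1))$ for each fixed $v>0$, a dominated-convergence argument (domination by $Cv^{\alpha+k}e^{-v}$ on $[0,\sqrt{\log x}\,]$, where $1+\log(1+\log x/v)\gtrsim\log\log x$, and a crude $x^{-\varepsilon}$-type bound on $[\sqrt{\log x}\,,\varepsilon\log x]$) yields $I_k(x)\sim\Gamma(\alpha+k+1)(\log x)^{-(\alpha+k+1)}(\log\log x)^{-(1+\gamma)}$. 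Dividing, the factors $C$, $\Gamma(\delta_i+1)$, $(\delta_iy_i)^{-(\delta_i+1)}$, $(\log(\delta_iy_i))^{-(\alpha+1)}$ and $(\log\log(\delta_iy_i))^{-(1+\gamma)}$ all cancel, leaving
\begin{align*}
E(\kappa_i\mid y_i)=\frac{N(y_i)}{D(y_i)}\sim\frac{1}{\delta_i}\,\frac{I_1(\delta_iy_i)}{I_0(\delta_iy_i)}\sim\frac{1}{\delta_i}\,\frac{\Gamma(\alpha+2)}{\Gamma(\alpha+1)}\,\frac{1}{\log(\delta_iy_i)}=\frac{1+\alpha}{\delta_i}\,\frac{1}{\log(\delta_iy_i)},
\end{align*}
so the claim holds with $\kappa^{*}(x)=1/\log x$ for $x>1$ (extended arbitrarily on $(0,1]$), and $\kappa^{*}(\delta_iy_i)\to0$.

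I expect the main obstacle to be the uniform tail estimate on $u_i\ge1$ in the second step: one must combine \eqref{eq:C2} with sufficiently sharp Stirling bounds and choose the partition of $[1,\infty)$ so that in every regime the decay of $(\delta_iy_i+u_i)^{-u_i}$ beats the polynomial growth $\Gamma(\delta_i+1+u_i)/\Gamma(1+u_i)\asymp u_i^{\delta_i}$ (a naive bound leaves a tail of the same polynomial order as the main term). The accompanying bookkeeping for the slowly varying factor $\{1+\log(1+1/u)\}^{-(1+\gamma)}$ in the Watson step is the other place needing care, in particular the borderline case $\alpha=0$, where $\gamma>0$ is what keeps $\pi$ and $I_0$ well behaved near the origin; everything else is routine Laplace-type asymptotics.
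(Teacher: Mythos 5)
Your argument is correct and reaches the right limit, but it follows a genuinely different route from the paper's. The paper also starts from the ratio $I_{1,0}(\delta_iy_i;\delta_i)/I_{0,0}(\delta_iy_i;\delta_i)$ of the same two integrals, but instead of truncating at $\varepsilon$ and estimating the tail by hand, it (a) inserts a regularizing factor $e^{-\varepsilon u}$ and proves (Lemma S2, via the covariance inequality and monotonicity in $y$) that the regularization error in the \emph{ratio} is uniformly controlled as $\varepsilon\to0$, so that all subsequent limits can be taken over the whole half-line with an exponentially decaying integrand; and (b) performs the \emph{exact} change of variables $v=y\,\varphi(u/y)$ with $\varphi(u)=u\log(1+1/u)$, which turns $\{u/(y+u)\}^{u}$ into $e^{-v}$ identically, and then applies dominated convergence (Lemmas S3--S4, where the construction of the dominating functions is the long technical part). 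This yields $\kappa^{*}(y)=y\varphi^{-1}(1/y)$ exactly, with $\kappa^*(y)\sim1/\log y$ established in a separate remark; your linearized substitution $u=v/\log x$ after the approximation $(\delta_iy_i+u)^{-u}\approx(\delta_iy_i)^{-u}$ on $(0,\varepsilon)$ gives $1/\log(\delta_iy_i)$ directly, which is an asymptotically equivalent (hence equally valid) choice of $\kappa^*$. Your approach is more elementary and self-contained Watson-type asymptotics; the price is exactly the tail work you anticipate, and there is one small repair needed there: on $[\delta_iy_i,\infty)$ the bound $\pi(u)\lesssim u^{-1}$ from \eqref{eq:C2} combined with $\{u/(\delta_iy_i+u)\}^{\delta_i+1}\le1$ leaves a divergent $\int u^{-1}\,\mathrm{d}u$, so on that last range you should instead bound $h(u;y_i)\le Ce^{-\delta_iy_i/2}$ uniformly and use properness of $\pi$ (i.e.\ $\int_{\delta_iy_i}^{\infty}\pi(u)\,\mathrm{d}u\le1$) to get an exponentially small contribution; with that substitution every regime of your partition does beat the main term $(\delta_iy_i)^{-(\delta_i+1)}(\log\delta_iy_i)^{-(\alpha+1)}(\log\log\delta_iy_i)^{-(1+\gamma)}$, and the rest of your bookkeeping (including the $\alpha=0$, $\gamma>0$ boundary case) is sound.
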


Since the local parameter depends on not only the scale parameter but also the shape parameter, the evaluation of the posterior mean requires a detailed investigation of integrals involving gamma functions, where the details of the proof are given in the Supplementary Material. 
The constant $\alpha \geqslant 0$ is related to the tail of $\pi(u_i)$ at the origin. 
The heavier the tail is, the faster the expected shrinkage factor converges to zero. 
Finally, we note that if we fix $u_i = 1$, then $E(\kappa_i \mid y_i) = 1 / ( \delta_i + 1) $ does not converge to 0 as $y_i \to \infty $.

In the Supplementary Material, we further investigate the rate of $\kappa ^{*} (y)$, which shows that $\kappa ^{*} (y) = 1 / \log y$ as $y \to \infty $. 
This means that $E(\kappa_i \mid y_i)$ converges to $0$ very slowly as $y_i \to \infty $, while it remains a positive constant when $u_i=1$ even under $y_i\to \infty$.  
This property of $E(\kappa_i \mid y_i)$ indicates that $(E[\lambda_i \mid y_i]-y_i)/y_i\to 0 $ as $y_i\to\infty$, which is known as weakly tail-robust \citep{Hamura2019}.
Such property is also adopted to show the robustness of shrinkage under count response \citep{Datta2016} and correlated normal response \citep{okano2022locally}.

In the Supplementary Material, we also investigate the behavior of $E( \kappa _i \mid y_i )$ as $y_i \to 0$, where it is shown that $E( \kappa _i \mid y_i ) \to 0$ as $y_i \to 0$ if either $\pi ( u_i ) = \pi _{\rm{SB}} ( u_i )$ with $\delta_i \ge a$ or $\pi ( u_i ) = \pi _{\rm{IRB}} ( u_i )$.
This indicates that %
tail-robustness for a small observation is also established.  
\subsection{Kullback-Leibler super-efficiency under sparsity}
\label{subsec:KL}

We now consider the predictive efficiency for the proposed method \citep[e.g.][]{Polson2010, Carvalho2010, Datta2016}.  
In particular, we discuss the Kullback-Leibler divergence between the true sampling density and the Bayes predictive density under the proposed global-local shrinkage prior. We consider the following one-dimensional model
\begin{align*}
y\sim \mathrm{Ga} \left( \delta , \frac{\delta}{\lambda } \right) , \quad \lambda\sim \mathrm{IG}(1+u,u),\quad u\sim \pi(u).
\end{align*}
In the above model, let $f(y\mid\lambda) = \mathrm{Ga}(y\mid \delta , \delta / \lambda )$ and let $\lambda_0$ be the true value of $\lambda$. We define the Kullback-Leibler (KL) divergence between $f(y\mid\lambda)$ and $f(y\mid\lambda_0)$ by $D^{\mathrm{KL}}(\lambda_0,\lambda)=D^{\mathrm{KL}}(f(y\mid\lambda_0),f(y\mid\lambda))$. Then we have
\begin{align*}
D^{\mathrm{KL}}(\lambda_0,\lambda)= \delta \left( \frac{1 / \lambda}{1 / \lambda_0} -1-\log \frac{1 / \lambda}{1 / \lambda_0} \right) = \delta \left( \frac{\lambda_0}{\lambda } - 1 - \log \frac{\lambda_0}{\lambda} \right) .
\end{align*}
Furthermore, the KL neighborhood around $\lambda_0$ is defined by 
\begin{align*}
A_{\varepsilon}(\lambda_0)=\left\{\lambda \in (0,\infty)\mid D^{\mathrm{KL}}(\lambda_0,\lambda)<\varepsilon \right\}.
\end{align*}
We assume that the prior $p(\lambda)$ is information dense in the sense of $\mathrm{pr}(\lambda \in A_{\varepsilon}(\lambda_0))>0$ for all $\varepsilon>0$. 
From the Proposition 4 in \cite{Barron1987}, we have the Ces\'aro-mean risk $R_n$ is expressed by
\begin{align}\label{barron}
R_n\le \varepsilon - n^{-1} \log \mathrm{pr}(\lambda \in A_{\varepsilon}(\lambda_0)),
\end{align}
where $R_n=n^{-1}\sum_{k=1}^n D^{\mathrm{KL}}(f(y\mid\lambda_0)\mid\hat{f}_k(\lambda))$ and $\hat{f}_k(\lambda)$ is the Bayes predictive density under KL divergence using the posterior density based on $k\le n$ observations $y_1,\dots,y_k$. We now evaluate the prior probability $\mathrm{pr}(\lambda \in A_{\varepsilon}(\lambda_0))$ in the right-hand side of \eqref{barron} when $\lambda_0=1$.

Although we proved the theorem for the univariate case, the convergence in the multivariate case is derived from a component-wise application. 

\begin{thm}
\label{prp:KL} 
Assume that the true sampling model is $\mathrm{Ga}(\delta, \delta / \lambda _0 )$. 
For $\lambda_0\ne 1$, the Ces\'aro-mean risk for Bayes predictive density $\hat{f}_n$, which is the posterior mean of the density function $f(\cdot \mid\lambda)$, satisfies 
\begin{align*}
R_n = O\left( n^{- 1} \log n\right).
\end{align*}
If $\lambda_0 = 1$ and if $\pi(u) \propto u^{- 1 - b}$ as $u \to \infty $ for some $0 < b \le 1 / 2$, then 
\begin{align*}
R_n = O\left\{n^{-1}\left(\log n -\log \log n\right)\right\}. 
\end{align*}
\end{thm}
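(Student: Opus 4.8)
The plan is to combine the Barron-type bound \eqref{barron}, $R_n \le \varepsilon - n^{-1}\log\mathrm{pr}(\lambda\in A_\varepsilon(\lambda_0))$, valid for every $\varepsilon>0$, with a sharp lower bound on the prior mass of the shrinking Kullback--Leibler ball $A_\varepsilon(\lambda_0)$, and then to optimize over $\varepsilon=\varepsilon_n$. A preliminary step common to both cases is to locate $A_\varepsilon(\lambda_0)$: since $D^{\mathrm{KL}}(\lambda_0,\lambda)=\delta(\lambda_0/\lambda-1-\log(\lambda_0/\lambda))$ is smooth in $\lambda$, vanishes only at $\lambda=\lambda_0$, and satisfies $D^{\mathrm{KL}}(\lambda_0,\lambda)=\{\delta/(2\lambda_0^2)\}(\lambda-\lambda_0)^2\{1+o(1)\}$ as $\lambda\to\lambda_0$, there is a constant $c=c(\delta,\lambda_0)>0$ with $A_\varepsilon(\lambda_0)\supseteq(\lambda_0-c\sqrt\varepsilon,\lambda_0+c\sqrt\varepsilon)$ for all small $\varepsilon>0$. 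Thus everything reduces to lower-bounding the mass the marginal prior $p$ of $\lambda$ assigns to a $\sqrt\varepsilon$-interval about $\lambda_0$.

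If $\lambda_0\ne 1$, the key point is that $p$ is continuous and strictly positive at $\lambda_0$. Indeed $p(\lambda_0)=\int_0^\infty \mathrm{IG}(\lambda_0\mid 1+u,u)\pi(u)\,\mathrm{d}u$, and up to a subexponential ($\sqrt u$-type) factor $\mathrm{IG}(\lambda_0\mid 1+u,u)$ is proportional to $\{\lambda_0^{-1}e^{1-1/\lambda_0}\}^u$, where $\lambda_0^{-1}e^{1-1/\lambda_0}<1$ because $\lambda\mapsto 1-\log\lambda-1/\lambda$ attains its maximum value $0$ only at $\lambda=1$; hence the integrand decays exponentially in $u$, giving finiteness, and positivity and continuity follow by dominated convergence. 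Therefore $\mathrm{pr}(\lambda\in A_\varepsilon(\lambda_0))\ge\int_{\lambda_0-c\sqrt\varepsilon}^{\lambda_0+c\sqrt\varepsilon}p(\lambda)\,\mathrm{d}\lambda\ge c'\sqrt\varepsilon$ for small $\varepsilon$, and \eqref{barron} gives $R_n\le\varepsilon-(2n)^{-1}\log\varepsilon+O(n^{-1})$; taking $\varepsilon_n=1/n$ yields $R_n=O(n^{-1}\log n)$. (The same argument applies when $\lambda_0=1$ and $b>1/2$, where $p(1)<\infty$ by Proposition~\ref{prp:marginal}(iii).)

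If $\lambda_0=1$ and $\pi(u)\propto u^{-1-b}$ as $u\to\infty$ with $0<b\le1/2$, then $p$ diverges at $1$, so instead of using $p(1)$ I would exploit the mixture. Conditionally on $u$, $\lambda\mid u\sim\mathrm{IG}(1+u,u)$ has mean $1$, variance $1/(u-1)$, and value at $\lambda=1$ equal to $u^u e^{-u}/\Gamma(u)\sim\sqrt{u/(2\pi)}$ by Stirling; moreover this density is locally log-concave near $1$, so on a $1/\sqrt u$-scale it changes only by bounded factors. Using the containment from the preliminary step, $\mathrm{pr}(\lambda\in A_\varepsilon(1))\ge\int_0^\infty\mathrm{pr}(|\lambda-1|<c\sqrt\varepsilon\mid u)\pi(u)\,\mathrm{d}u$, and I would bound the inner probability in two regimes. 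For $u\gtrsim 1/\varepsilon$, Chebyshev gives $\mathrm{pr}(|\lambda-1|<c\sqrt\varepsilon\mid u)\ge 1/2$, contributing at least a constant times $\int_{1/\varepsilon}^\infty\pi(u)\,\mathrm{d}u\asymp\varepsilon^b$. For $1\lesssim u\lesssim 1/\varepsilon$ the interval lies within $O(1)$ conditional standard deviations of the mean, so by local log-concavity the conditional density stays $\gtrsim\sqrt u$ there and $\mathrm{pr}(|\lambda-1|<c\sqrt\varepsilon\mid u)\gtrsim\sqrt{u\varepsilon}$, contributing at least a constant times $\sqrt\varepsilon\int_1^{1/\varepsilon}u^{-1/2-b}\,\mathrm{d}u$, which is $\asymp\varepsilon^b$ for $b<1/2$ and $\asymp\sqrt\varepsilon\,|\log\varepsilon|$ for $b=1/2$. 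Hence $\mathrm{pr}(\lambda\in A_\varepsilon(1))\gtrsim\varepsilon^b$ for $b<1/2$ and $\gtrsim\sqrt\varepsilon\,|\log\varepsilon|$ for $b=1/2$. Plugging into \eqref{barron} and optimizing over $\varepsilon$ (with $\varepsilon_n\asymp1/n$) gives $R_n\le(b/n)\log n+O(n^{-1})$ when $b<1/2$ and $R_n\le(2n)^{-1}\log n-n^{-1}\log\log n+O(n^{-1})$ when $b=1/2$; in either case $R_n=O\{n^{-1}(\log n-\log\log n)\}$, and the improvement over the generic $(2n)^{-1}\log n$ upper bound---a smaller constant, or an extra $-n^{-1}\log\log n$ term---is the claimed super-efficiency.

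The main obstacle is the middle-regime estimate for $\lambda_0=1$: one must lower-bound $\mathrm{pr}(|\lambda-1|<c\sqrt\varepsilon\mid u)$ uniformly over the whole range $1\lesssim u\lesssim1/\varepsilon$, which forces explicit, non-asymptotic control of $u^u e^{-u}/\Gamma(u)$ (a Stirling bound with uniform error), together with a quantitative statement that $\lambda\mapsto\mathrm{IG}(\lambda\mid1+u,u)$ does not drop by more than a constant over the $\sqrt\varepsilon$-interval, that is, that $\sqrt\varepsilon$ is indeed a small fraction of the $1/\sqrt u$ spread precisely when $u\lesssim1/\varepsilon$. Keeping all constants uniform while the two scales $\sqrt\varepsilon$ and $1/\sqrt u$ interact is where the real work lies; the remaining steps are the elementary expansion of $D^{\mathrm{KL}}$, the calculus optimization of $\varepsilon$, and the appeal to \eqref{barron}.
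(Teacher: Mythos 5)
Your proposal is correct, and it shares the inevitable skeleton with the paper's proof (the Barron bound \eqref{barron} plus a lower bound on $\mathrm{pr}(\lambda\in A_\varepsilon(\lambda_0))$ with $\varepsilon=1/n$), but the way you lower-bound the prior mass is genuinely different in the sparse case. The paper works with $\tilde\lambda=1/\lambda$, uses only the crude width bound $|A_\varepsilon|\gtrsim\varepsilon/\delta$, and for $\lambda_0=1$ replaces $e^{-\rho(\tilde\lambda)u}$ by $e^{-\varepsilon u/\delta}$ uniformly on the KL ball, so that the whole gain comes from the single $u$-integral $\int_1^\infty u^{-1/2-b}\,u^{1/2}e^{-\varepsilon u/\delta}\,\mathrm{d}u\ge\int_1^\infty u^{-1}e^{-\varepsilon u/\delta}\,\mathrm{d}u\asymp\log(1/\varepsilon)$; the inequality $u^{-1/2-b}\ge u^{-1}$ for $b\le 1/2$ handles both subcases at once and yields $\mathrm{pr}(A_\varepsilon(1))\gtrsim\varepsilon\log(1/\varepsilon)$. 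You instead use the sharp $\sqrt\varepsilon$ width of the KL ball and a two-regime conditional analysis (Chebyshev for $u\gtrsim 1/\varepsilon$, a pointwise density bound for $u\lesssim 1/\varepsilon$), obtaining the sharper orders $\varepsilon^b$ for $b<1/2$ and $\sqrt\varepsilon\,|\log\varepsilon|$ for $b=1/2$ --- these match the marginal-density asymptotics $p(\lambda)\asymp|\lambda-1|^{2b-1}$ (resp.\ $\log(1/|\lambda-1|)$) of the paper's Theorem S1, so they are right, and they give a better leading constant in the risk at the cost of a case split and a uniform Stirling estimate. Your ``local log-concavity'' step is the only loosely stated point, but the precise fact you need is immediate: the ratio of $\mathrm{IG}(\lambda\mid 1+u,u)$ to its value at $\lambda=1$ equals $\lambda^{-2}e^{-u\xi(\lambda)}$ with $\xi(\lambda)=1/\lambda-1+\log\lambda\asymp(\lambda-1)^2$, so it is bounded below by a constant whenever $u\lesssim 1/\varepsilon$ and $|\lambda-1|\lesssim\sqrt\varepsilon$, exactly as you anticipate. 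In short: the paper buys brevity and uniformity in $b$; your route buys sharper constants and makes visible where the $\varepsilon^b$ versus $\sqrt\varepsilon\log(1/\varepsilon)$ transition at $b=1/2$ comes from.
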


The proof of the theorem is given in the Supplementary Material. The results indicate that the Ces\'aro-mean risk achieves the optimal rate of convergence for the finite-dimensional parametric family when $\lambda_0 \ne 1$, while the risk has the super-efficient rate of Kullback-Leibler convergence for $\lambda_0=1$. The latter phenomenon is called {\it Kullback-Leibler super-efficiency}, which is a kind of higher-order optimality, and such results are commonly adopted to show theoretical superiority in handling sparsity in the context of global-local shrinkage priors \citep[e.g.][]{Polson2010, Carvalho2010, Datta2016}.
Theorem \ref{prp:KL} relates the right tail of $\pi(u_i)$ to the risk given in \eqref{barron}. 
To achieve Kullback-Leibler super-efficiency, it is sufficient to use $\pi(u_i)$ with a sufficiently heavy tail ($b \le 1/2$). 
Thus, $b$ plays a role in controlling sparsity at the grand mean. 
We remark that fixing $u_i = 1$ corresponds to using a point mass prior for $u_i$ and hence to violation of the sufficient condition that $\pi(u) \propto u^{-1-b}$ as $u \to \infty $.

\section{Simulation studies}
\label{sec:sim}

We evaluate the performance of Bayesian and frequentist shrinkage methods under gamma response. 
Let $y_i\sim \mathrm{Ga}(\delta_i, \delta_i/\lambda_i)$ for $i=1,\ldots,n(=200)$ and $\delta_i=5$. 
We consider the following six scenarios of the true mean $\lambda_i$:
\begin{align*}
\text{(Scenario 1)}& \ \ \lambda_i\sim 0.95 \delta_\mu + 0.05 {\rm Ga}(20\mu, 2), \ \ \ \ 
\text{(Scenario 2)} \ \ \lambda_i\sim 0.9 \delta_\mu + 0.1 {\rm Ga}(20\mu, 2), \\
\text{(Scenario 3)}& \ \ \lambda_i\sim 0.95 \delta_\mu + 0.05 \mu |t_3|, \ \ \ \ 
\text{(Scenario 4)} \ \ \lambda_i\sim 0.9 {\rm Ga}(5\mu, 5) + 0.1 \mu|t_1|,\\
\text{(Scenario 5)}& \ \ \lambda_i\sim 0.9 \delta_\mu + 0.1 {\rm Ga}(10\mu, 2), \ \ \ \ 
\text{(Scenario 6)} \ \ \lambda_i\sim 0.85 \delta_\mu + 0.15 {\rm Ga}(10\mu, 2), 
\end{align*}
where $\mu=5$, $\delta_a$ denotes a point mass at $a$ and $t_c$ denotes a $t$-distribution with $c$ degrees of freedom. In the first three scenarios, most of the true means $\lambda_i$ is exactly equal to $\mu=5$, and a small part of true means are very large compared with $\mu$. 
In scenario 4, most true means are concentrated around $\mu$ (not exactly equal to 0).

For the simulated data, we apply six methods, the proposed scaled beta (SB) and inverse rescaled beta (IRB) priors, the global shrinkage (GL) prior (setting $u_i=1$ in the proposed model), shrinkage estimators given by \cite{DasGupta1986}, denoted by DG, adaptive variance shrinkage estimators by \cite{Lu2016}, denoted by VS, and maximum likelihood (ML) estimator $y_i$.
Note that the DG method is to provide decision-theoretic point estimates of $\lambda_i$ by minimizing a weighted quadratic loss function, and the VS method uses a finite mixture of inverse-gamma distributions as a prior distribution for $\lambda_i$. 
The tuning parameters in the SB and IRB priors are set to $a=2$ and $b=1/2$. 
We used non-informative gamma priors, $\beta\sim {\rm Ga(0.1, 0.1)}$ and $\tau\sim {\rm Ga(0.1, 0.1)}$ for SB, IRB and GL priors. 
For the Bayesian methods, 3000 posterior samples are generated after discarding the first 2000 samples as burn-in. 
Note that we used the R package ``vashr" (\url{https://github.com/mengyin/vashr}) to apply the VS method, where the degrees of freedom of $\chi^2$-distribution is set to $2\delta_i(=10)$.

We first investigate the shrinkage property of the proposed global-local shrinkage priors compared with the other methods. 
In Figure~\ref{fig:sim-shrink}, we show scatter plots of observed values and point estimates (posterior means for the Bayesian methods) produced by five shrinkage methods under scenario 1. 
It is observed that the standard shrinkage methods, GL, DG, and VS, linearly shrink the observed value $y_i$, that is, the shrinkage factor is constant regardless of $y_i$.
On the other hand, the proposed SB and IRB priors more strongly shrink the observed values around $\lambda_i=5$, showing the adaptive shrinkage property of the global-local shrinkage prior.

\begin{figure}[!htb]
\centering
\includegraphics[width = \linewidth]{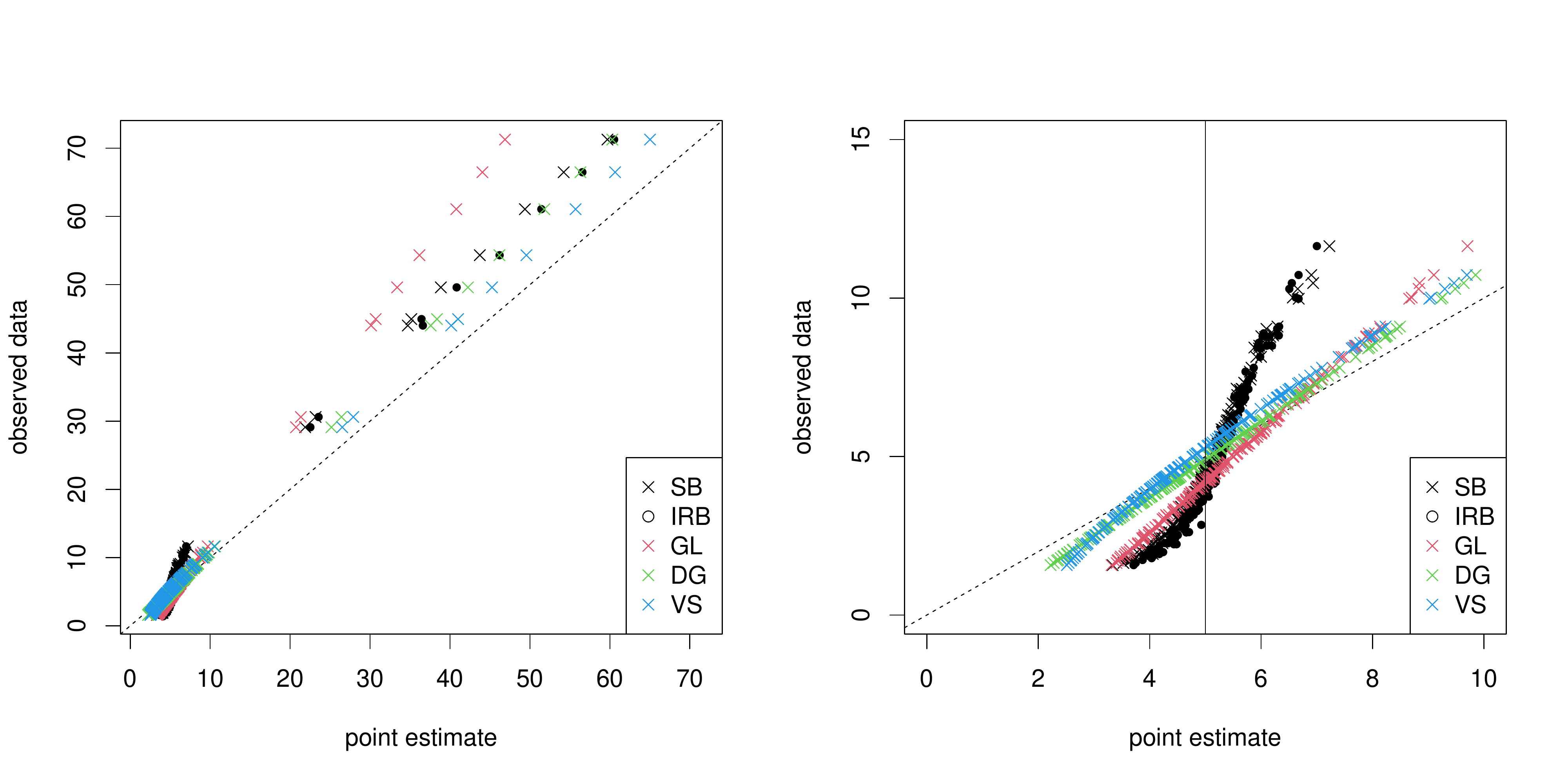}
\caption{Scatter plots of observed values (ML) and point estimates obtained from five shrinkage methods. 
The right panel is an enlarged version of the left panel.
The vertical line in the right panel indicates the location of null signals.  }
\label{fig:sim-shrink}
\end{figure}%

We next evaluate mean absolute percentage error (MAPE), defined as $n^{-1}\sum_{i=1}^n\lambda_i^{-1} | \lambda_i-\widehat{\lambda}_i |$ with a point estimate $\widehat{\lambda}_i$.
We present boxplots of MAPE for 1000 replications in Figure~\ref{fig:sim}.
The results indicate that the proposed SB and IRB provide more accurate point estimates than the other methods in all the scenarios, except for IRB under Scenario 3. 
The amount of improvement of the proposed methods is remarkable when the null and non-null signals are well-separated, as in Scenarios 1 and 2.
Comparing SB and IRB, SB tends to provide a smaller overall MAPE than IRB.
To compare the two methods more precisely, we also computed MAPE only for non-null signals.
The averaged values of MAPE for non-null signals are given in Table~\ref{tab:sim-mape}, which shows that IRB performs slightly better than SB for the estimation of non-null signals, and this is consistent with the stronger tail-robustness property of IRB than that of SB.

Furthermore, we computed the coverage probability (CP) and average length (AL) of $95\%$ credible/confidence intervals.
We only consider the ML method for the frequentist methods since DG and VS do not provide interval estimation. 
The $95\%$ confidence interval of ML can be obtained as $(y_i/{\rm P}_{G}(0.975; \delta_i,\delta_i), y_i/{\rm P}_{G}(0.025; \delta_i,\delta_i))$, where ${\rm P}_G(\cdot;\alpha,\beta)$ denotes the probability function of ${\rm Ga}(\alpha, \beta)$.
The CP and AL averaged over 1000 Monte Carlo replications are given in Table~\ref{tab:sim}.
It can be seen that all three Bayesian methods have empirical CP values larger than the nominal level of $0.95$ except in Scenario 6, whereas the interval lengths of SB and IRB tend to be smaller than GL and ML in all the scenarios.

\begin{figure}[!htb]
\centering
\includegraphics[width = \linewidth]{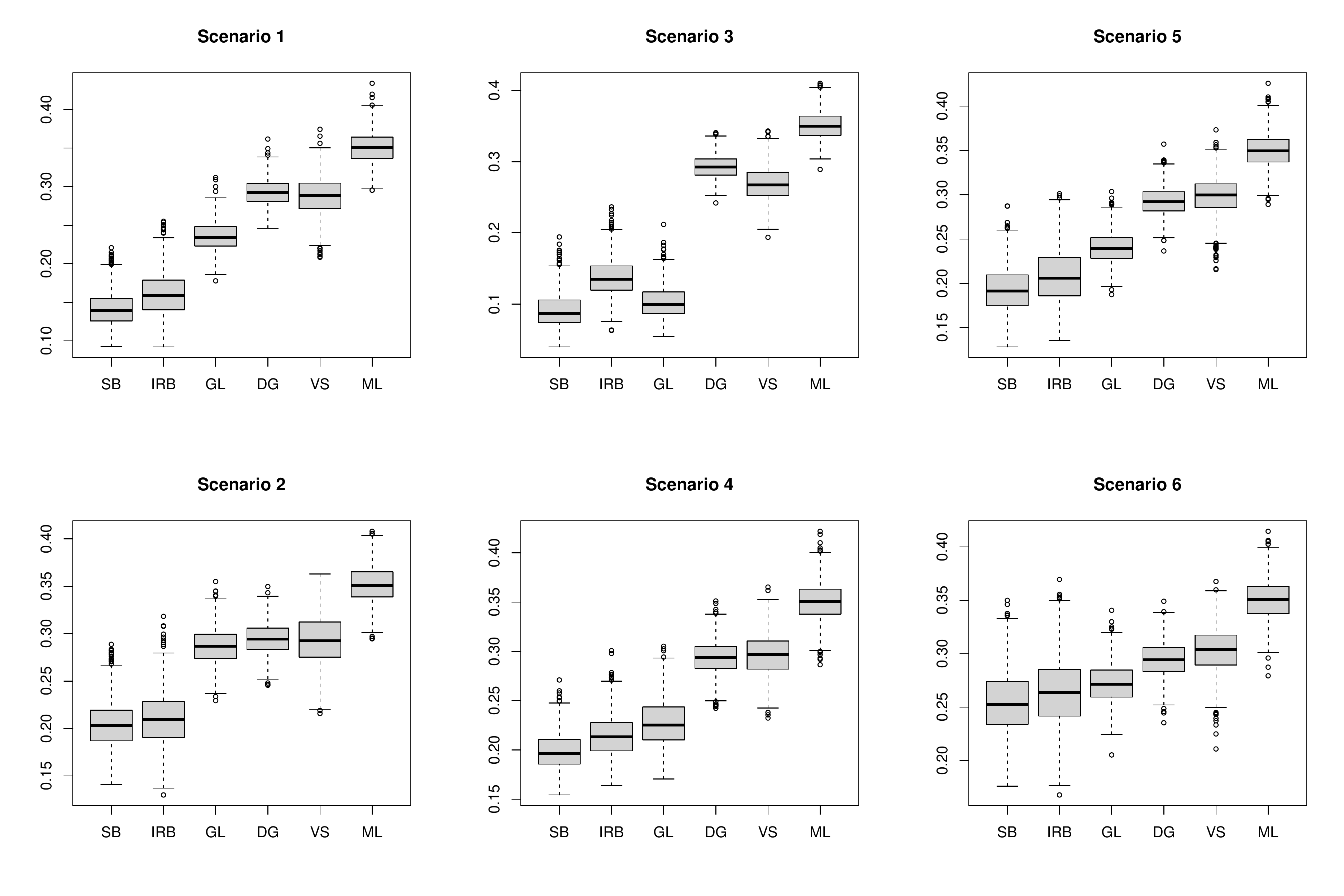}
\caption{Boxplots of mean absolute percentage errors (MAPE) for 1000 Monte Carlo replications. }
\label{fig:sim}
\end{figure}%

\begin{table}[!htb]
\caption{Mean absolute percentage errors (MAPE) for non-null signals averaged over 1000 Monte Carlo replications. }
\label{tab:sim-mape}
\begin{center}
\begin{tabular}{cccccccccccc}
\hline
Scenario &  & 1 & 2 & 3 & 4 & 5 & 6 \\
\hline
SB &  & 0.365 & 0.359 & 0.304 & 0.292 & 0.388 & 0.384 \\
IRB &  & 0.360 & 0.360 & 0.284 & 0.286 & 0.385 & 0.386 \\
\hline
\end{tabular}
\end{center}
\end{table}

\begin{table}[!htb]
\caption{Coverage probabilities and average lengths of $95\%$ credible/confidence intervals averaged over 1000 Monte Carlo replications. }
\label{tab:sim}
\begin{center}
\begin{tabular}{cccccccccccc}
\hline
 &  &  \multicolumn{4}{c}{Coverage probability} && \multicolumn{4}{c}{Average length } \\
Scenario &  & SB & IRB & GL & ML &  & SB & IRB & GL & ML \\
\hline
1 &  & 98.7 & 98.4 & 97.6 & 95.0 &  & 1.13 & 1.19 & 1.58 & 2.59 \\
2 &  & 97.3 & 97.1 & 97.1 & 95.0 &  & 1.32 & 1.33 & 1.79 & 2.59 \\
3 &  & 98.1 & 98.4 & 97.9 & 95.0 &  & 0.81 & 1.09 & 0.91 & 2.59 \\
4 &  & 95.6 & 96.0 & 97.2 & 95.0 &  & 1.12 & 1.23 & 1.43 & 2.59 \\
5 &  & 96.6 & 96.3 & 96.7 & 95.0 &  & 1.23 & 1.27 & 1.56 & 2.59 \\
6 &  & 94.4 & 94.0 & 96.2 & 95.0 &  & 1.37 & 1.39 & 1.69 & 2.59 \\

\hline
\end{tabular}
\end{center}
\end{table}

\section{Real data example}
\label{sec:data}

\subsection{Average admission period of COVID-19 in Korea}
\label{sec:covid}

We first apply the global-local shrinkage techniques to estimate the average length of hospital stay of COVID-19-infected persons. We use the data set available at Kaggle (\url{https://www.kaggle.com/kimjihoo/coronavirusdataset}), where the date of admission and discharge is observed for 1587 individuals in Korea. 
We then group these individuals regarding 98 cities and three classes of age, young (39 or less), middle (from 40 to 69), and old (70 or more), resulting in $n=185$ groups after omitting empty groups. 
Assuming exponential distributions with group-specific mean for admission period (days) of each individual, the group-wise sample mean is distributed as ${\rm Ga}(n_i, n_i/\lambda_i)$ for $i=1,\ldots,n$, where $n_i$ is the number of individuals within the $i$th group and $\lambda_i$ is the true mean of admission period specific to the $i$th group. 
Note that $n_i$ ranges from 1 to 258, and the scatter plot of $n_i$ and $y_i$ are given in Figure \ref{fig:covid-plot}.

We apply the proposed SB prior as well as GL and DG methods. 
Regarding the prior distributions for grand mean $\beta$ in the SB and GL models, we assign a non-informative prior, ${\rm Ga}(0.1, 0.1)$.
Furthermore, we set $a=2$ and $b=1/2$ in the proposed prior distributions. 
The posterior means for SB and GL are computed based on 10000 posterior samples (after discarding 3000 samples), whose histograms are shown in Figure \ref{fig:covid}. 
The posterior mean of the grand mean $\beta$ in the SB model was $22.0$ ($95\%$ credible interval was $(21.3, 22.8)$), which is consistent with the evidence that the average admission period is around $21$ \citep[e.g.][]{jang2021comorbidities}.
On the other hand, the posterior mean of the grand mean $\beta$ in the GL model was $24.4$, where $95\%$ credible interval was $(22.5, 26.6)$.
We also present the histogram of shrinkage estimates made by DG. 
It is observed that SB strongly shrinks the observed values toward the grand mean $\beta$ so that most of the posterior means of the average admission period are concentrated around the grand mean.
This is because most groups having large sample means have small sample sizes, and such unreliable information is strongly shrunk.
We found that only a single group (old age class of Gyeongsan-si) has a much larger average admission period, about 35 days. 
Since the sample size of this group is $107$, and the sample mean is about 37, the posterior result seems reasonable. 
To see more detailed results, we present scatter plots of observed values and posterior means against sample size $n_i$, in Figure \ref{fig:covid-plot}.
It is observed that the amount of shrinkage (i.e., the difference between observed values and posterior means) decreases as $n_i$ increases, and observations having small sample sizes strongly shrunk toward the grand mean. 
From Figure \ref{fig:covid}, it can also be seen that GL also provides reasonably shrunk estimates of $\lambda_i$ and DG does not, but the proposed SB prior can provide strongly shrunk point estimates. 
Moreover, the average length of $95\%$ credible intervals made by SB was 19.0, which was considerably smaller than the 22.9 produced by GL.

\begin{figure}[!htb]
\centering
\includegraphics[width = \linewidth]{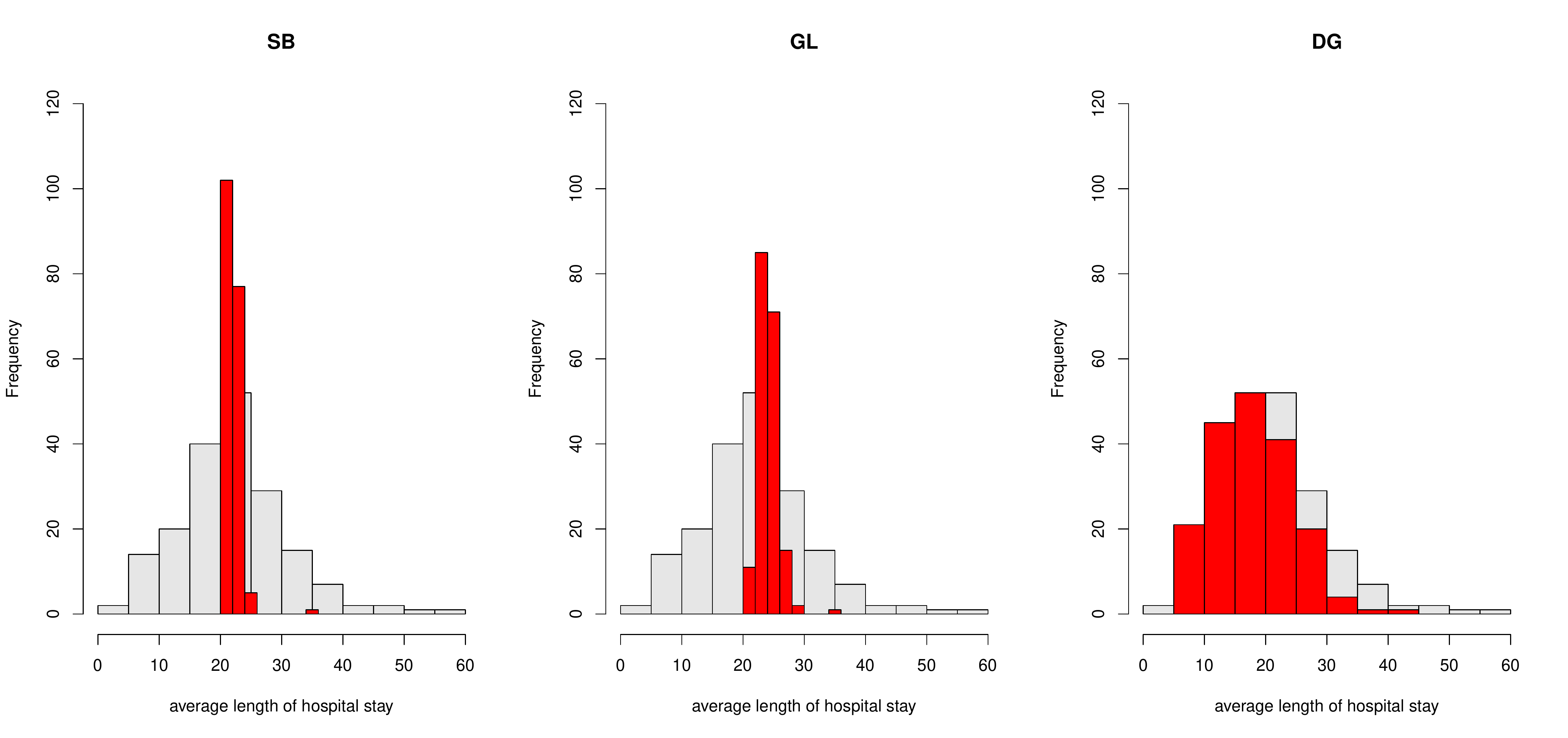}
\caption{Histograms of shrinkage estimates (red) and observed values (grey) of the average admission period. }
\label{fig:covid}
\end{figure}%

\begin{figure}[!htb]
\centering
\includegraphics[width = \linewidth]{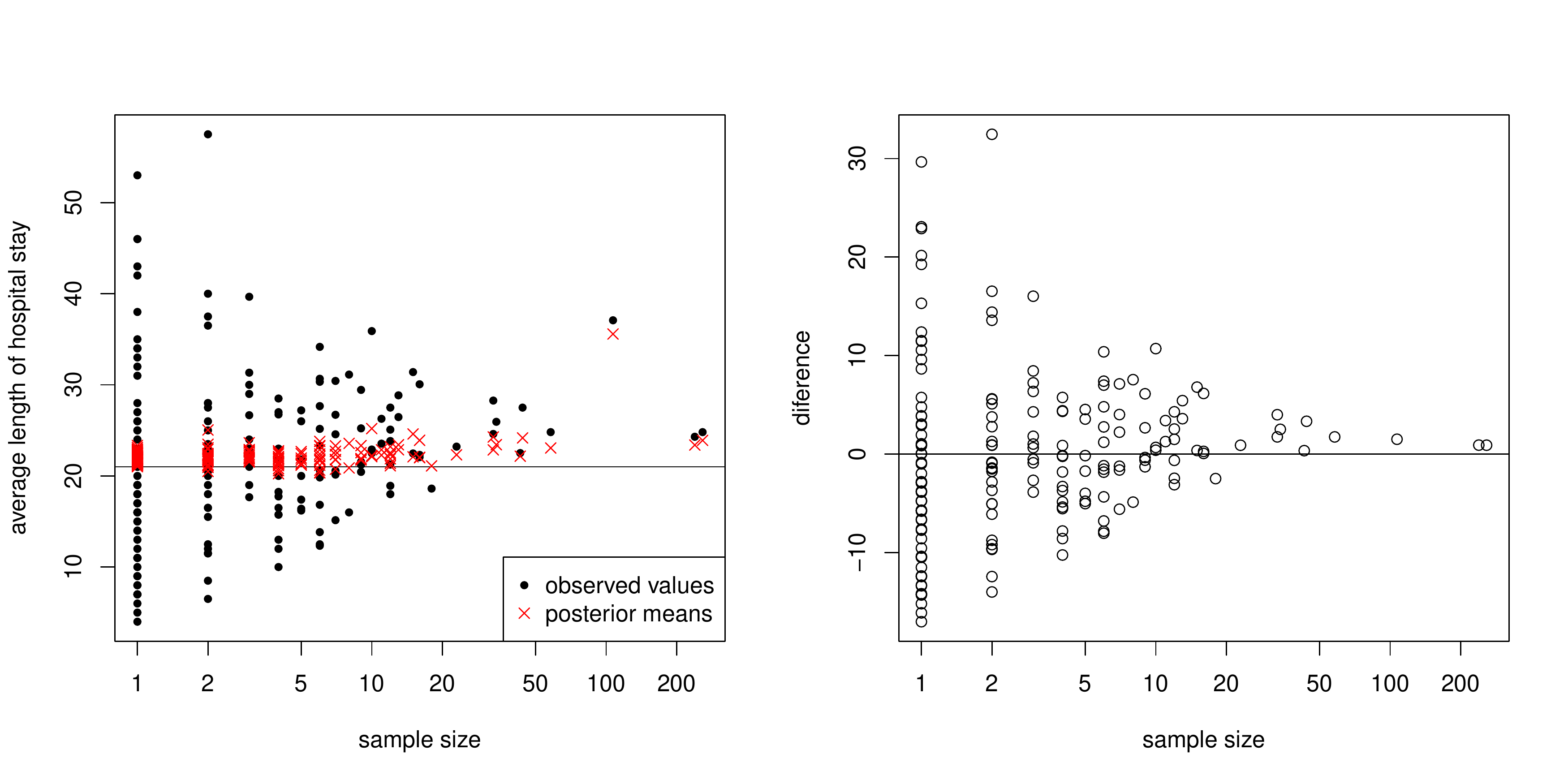}
\caption{Left: Scatter plot of sample size $n_i$ and average admission.
Right: Scatter plot of sample size $n_i$ and difference of $y_i$ and posterior mean of $\lambda_i$.}
\label{fig:covid-plot}
\end{figure}%

\subsection{Variance estimation of gene expression data}
\label{sec:gene}

We next apply the shrinkage methods to variance estimation of gene expression data. 
As noted in \cite{Lu2016}, in gene expression analysis aimed at identifying deferentially expressed genes, accurate estimation of the unknown variance is an essential step since it directly relates to the degree of statistical significance. 
We use a popular prostate cancer dataset from
\cite{singh2002gene}. 
In this dataset, there are gene expression values for $n=6033$ genes for $50$ subjects in control subjects.
We compute sampling variances of $n$ gene expressions, distributed as ${\rm Ga}( n_i/2, n_i/2\lambda_i)$ for $i=1,\ldots,n$, where $\lambda_i$ is the true variance of the $i$th gene expression. 
By assigning non-informative priors, ${\rm Ga}(0.1, 0.1)$ for $\beta$ in the SB, IRB, and GL models as well as the VS method. 
In the Bayesian methods, we computed posterior means using 2000 posterior samples after discarding the first 1000 samples.
The histograms of posterior means are shown in Figure~\ref{fig:gene}.
As confirmed in the previous example, we can see that the proposed SB and IRB priors can provide more shrunk estimates than the other methods. 
Furthermore, average lengths of $95\%$ credible intervals made by SB and IRB were 0.640 and 0.639, respectively, which are smaller than 0.663 by GL.
This shows the efficiency of the proposed priors.

\begin{figure}[!htb]
\centering
\includegraphics[width = \linewidth]{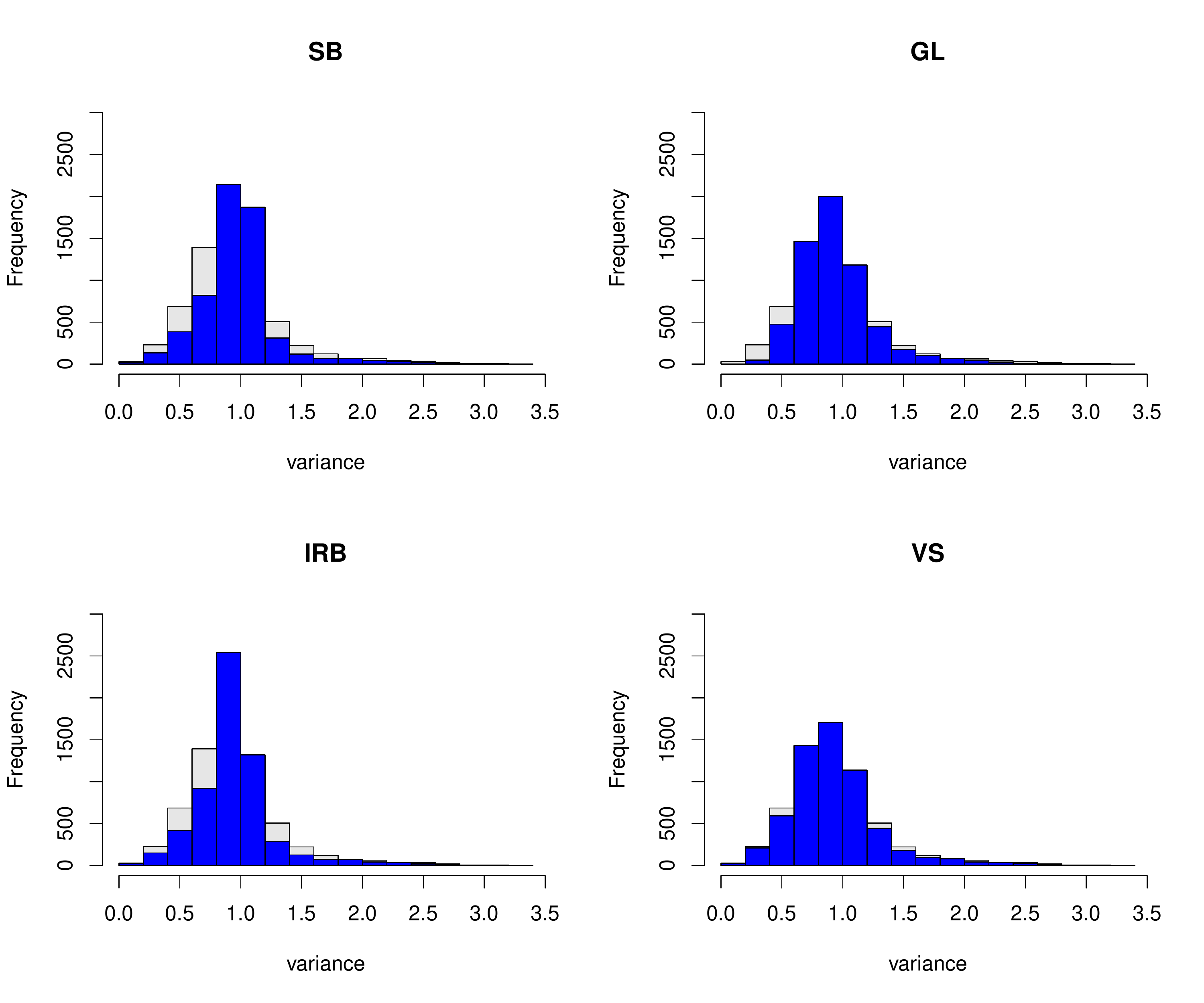}
\caption{Histograms of shrinkage estimates (blue) and observed values (grey) of variances of gene expression data. }
\label{fig:gene}
\end{figure}%

\section{Discussion}
\label{sec:discussion}

We proposed a new class of continuous global-local shrinkage priors for high-dimensional positive-valued parameters based on shape-scale mixtures of inverse-gamma distributions.
Although this paper focuses on a sequence of gamma-distributed observations, it can also be useful in other models.
One notable application would be using a flexible error distribution in a regression model for gamma-distributed observations, such as gamma regression or accelerated failure time models.
For the latter model, the proposed distribution may cast an alternative to the Bayesian nonparametric approach \citep[e.g.][]{hanson2006modeling,kuo1997bayesian}, and some comparisons would be an interesting future study. 
Although the approximate sampling method of \cite{Miller2019} is adopted in sampling from the local parameter in the proposed priors, it might be worth implementing the more recent data augmentation technique of \cite{hamura2022data}.

\section*{Acknowledgments}
This work is partially supported by Japan Society for Promotion of Science (KAKENHI) grant numbers 22K20132, 20J10427, 19K11852, 21K13835, and 21H00699.

\vspace{5mm}
\bibliographystyle{chicago}
\bibliography{ref}

\newpage

\newpage
\setcounter{page}{1}
\setcounter{equation}{0}
\renewcommand{\theequation}{S\arabic{equation}}
\setcounter{section}{0}
\renewcommand{\thelem}{S\arabic{lem}}
\setcounter{thm}{0}
\renewcommand{\thethm}{S\arabic{thm}}
\setcounter{prp}{0}
\renewcommand{\theprp}{S\arabic{prp}}
\setcounter{table}{0}
\renewcommand{\thesection}{S\arabic{section}}
\setcounter{table}{0}
\renewcommand{\thetable}{S\arabic{table}}
\setcounter{figure}{0}
\renewcommand{\thefigure}{S\arabic{figure}}

\begin{center}
{\LARGE\bf Supplementary Materials for ``Sparse Bayesian inference on gamma-distributed observations using shape-scale inverse-gamma mixtures"}
\end{center}

\begin{center}
{\large Yasuyuki Hamura$^{1}$, Takahiro Onizuka$^{2}$,\\ Shintaro Hashimoto$^{2}$ and Shonosuke Sugasawa$^{3}$}
\end{center}

\medskip
\noindent
$^{1}$Graduate School of Economics, Kyoto University\\
$^{2}$Department of Mathematics, Hiroshima University\\
$^{3}$Center for Spatial Information Science, The University of Tokyo

\vspace{1cm}
This Supplementary Material provides 
proofs and technical details related to the main text.

\section{Preliminaries}
The following facts %
will be used later in this Supplementary Material. 
\begin{itemize}
\item
As $x \to 0$, $\Ga (x) \sim 1 / x$. 
As $x \to \infty $, $\Ga (x) \sim (2 \pi )^{1 / 2} x^{x - 1 / 2} e^{- x}$. 
\item
For any $x > 0$, 
\begin{align}
&{e^{- 1 / (12 x)} \over (2 \pi )^{1 / 2}} < {x^{x - 1 / 2} \over \Ga (x) e^x} < {1 \over (2 \pi )^{1 / 2}} \text{,} \non \\
&\log x - {{\Ga }' (x) \over \Ga (x)} - {1 / 2 \over x} = \log x - \psi (x) - {1 / 2 \over x} = 2 \int_{0}^{\infty } {1 \over t^2 + x^2} {t \over e^{2 \pi t} - 1} \mathrm{d}t \text{.} \non %
\end{align}
\item
Let $c \in \mathbb{R}$. 
Let $\Psi \colon (0, \infty ) \to [0, \infty )$ be a continuously differentiable function. 
Suppose that $x {\Psi }' (x) / \Psi (x) \to c$ as $x \to 0$ (as $x \to \infty $). 
Then, for any $v > 0$, 
\begin{align}
{\Psi (v x) \over \Psi (x)} \to v^c \non 
\end{align}
as $x \to 0$ (as $x \to \infty $). 
\end{itemize}

\section{Tail properties of the marginal prior for $\la _i$}
Here, we investigate properties of the tails of the marginal prior $p( \la _i )$ of $\la _i$ under the proper prior $u_i \sim \pi ( u_i )$. 

\begin{thm}%
\label{prp:marginal_general} 
Let $\xi _i = 1 / \la _i - 1 - \log (1 / \la _i ) > 0$. 
The marginal prior of $\la _i$ has the following properties. 
\begin{itemize}
\setlength{\leftskip}{12pt}
\item[{\rm{(i)}}] 
Suppose that there exist $\alpha \geqslant 0$ and $\gamma \geqslant - 1$ such that $\pi (u) \propto u^{\alpha - 1} / \{ 1 + \log (1 + 1 / u) \} ^{1 + \gamma }$ as $u \to 0$. 
Then, as $\la _i \to 0$ and $\la _i \to \infty $, we have $\xi _i \to \infty $ and 
\begin{align*}
p( \la _i ) &\sim \Ga ( \al + 1) {1 \over {\la _i}^2} {1 \over {\xi _i}^2} \pi \left( {1 \over \xi _i} \right) \propto {1 \over {\la _i}^2} {1 \over {\xi _i}^{\al + 1}} {1 \over \{ 1 + \log (1 + \xi _i ) \} ^{1 + \ga }}.
\end{align*}
\item[{\rm{(ii)}}]
Suppose that there exists $b > 0$ such that $\pi (u) \propto u^{- 1 - b}$ as $u \to \infty $. 
Then, as $\la _i \to 1$, we have $\xi _i \to 0$ and 
\begin{align*}
p( \la _i ) &\begin{cases} \displaystyle \sim {\Ga (1 / 2 - b) \over (2 \pi )^{1 / 2}} \left( {1 \over \xi _i} \right) ^{1 + 1 / 2} \pi \left( {1 \over \xi _i} \right) \propto \left( {1 \over \xi _i} \right) ^{1 / 2 - b} \to \infty \text{,} & \text{if $b < 1 / 2$}, \\ \displaystyle \to \infty \text{,} & \text{if $b = 1 / 2$} \text{,} \\ \displaystyle \sim \int_{0}^{\infty } \pi (u) {u^u e^{- u} \over \Ga (u)} du < \infty \text{,} & \text{if $b > 1 / 2$}.\end{cases}
\end{align*}
\item[{\rm{(iii)}}]
Suppose that there exists $a > 0$ such that $\pi (u) = \{ 1 / B(a, 1 / 2) \} u^{a - 1} / (1 + u)^{a + 1 / 2}$ for all $u \in (0, \infty )$. 
Then, as $\la _i \to 1$, we have $\xi _i \to 0$ and 
\begin{align*}
p( \la _i ) \sim {1 \over (2 \pi )^{1 / 2}} \left( \log {1 \over \xi _i} \right) \left( {1 \over \xi _i} \right) ^{3 / 2} \pi \left( {1 \over \xi _i} \right) \propto \log {1 \over \xi _i} \to \infty.
\end{align*}
\item[{\rm{(iv)}}]
Suppose that there exists $a > 0$ such that $\pi (u) = \{ 1 / B(1 / 2, a) \} [1 / \{ u (1 + u) \} ] \{ \log (1 + 1 / u) \} ^{1 / 2 - 1} / \{ 1 + \log (1 + 1 / u) \} ^{1 / 2 + a}$ for all $u \in (0, \infty )$. 
Then, as $\la _i \to 1$, we have $\xi _i \to 0$ and 
\begin{align*}
p( \la _i ) \sim {1 \over (2 \pi )^{1 / 2}} \left( \log {1 \over \xi _i} \right) \left( {1 \over \xi _i} \right) ^{3 / 2} \pi \left( {1 \over \xi _i} \right) \propto \log {1 \over \xi _i} \to \infty , 
\end{align*}
which is exactly as in part (iii). 
\end{itemize}
\end{thm}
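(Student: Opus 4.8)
The plan is to reduce all four parts to the asymptotics of a single one-dimensional integral. Writing out the $\mathrm{IG}(1+u,u)$ density, using $u^{1+u}/\Ga(1+u) = u^u/\Ga(u)$, and noting $\la _i^{-u-2}e^{-u/\la _i} = \la _i^{-2}e^{-u}e^{-u\xi _i}$ for $\xi _i = 1/\la _i - 1 - \log(1/\la _i)$, one obtains
\begin{align*}
p( \la _i ) = {1 \over {\la _i}^2} \int_{0}^{\infty } g(u)\, e^{- u \xi _i}\, \pi (u)\, \mathrm{d}u , \qquad g(u) := {u^u e^{- u} \over \Ga (u)} .
\end{align*}
By strict convexity of $x \mapsto x - 1 - \log x$ we have $\xi _i > 0$, with $\xi _i \to \infty $ as $\la _i \to 0$ or $\la _i \to \infty $, and $\xi _i \to 0$ (while $\la _i^2 \to 1$) as $\la _i \to 1$. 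The facts recorded in the Preliminaries give the only properties of $g$ that are needed: $g(u) \sim u$ as $u \to 0$ and $g(u) \sim u^{1/2}/(2\pi )^{1/2}$ as $u \to \infty $. So everything comes down to the behaviour of $I(\xi ) = \int_0^\infty g(u) e^{-u\xi }\pi (u)\,\mathrm{d}u$ as $\xi \to \infty $ (part (i)) and as $\xi \to 0$ (parts (ii)--(iv)).

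For part (i) I would run a Watson's-lemma argument at $u = 0$: split $I(\xi )$ at a small fixed $\ep > 0$, bound the tail $\int_\ep^\infty $ by an exponentially decaying term using $\sup_{u \geqslant \ep } g(u) e^{-u/2} < \infty $, and on $(0,\ep )$ replace $g(u)\pi (u)$ by its leading term $C u^{\al }/\{ 1 + \log (1+1/u) \}^{1+\ga }$. Substituting $v = u\xi $ and applying dominated convergence, with the third bullet of the Preliminaries (case $c = 0$) used to pull the slowly varying logarithmic factor through the limit, yields $I(\xi ) \sim C\Ga (\al +1)\xi ^{-\al -1}/\{ 1 + \log (1+\xi ) \}^{1+\ga } = \Ga (\al +1)\xi ^{-2}\pi (1/\xi )$; dividing by $\la _i^2$ and recalling $\xi _i \to \infty $ gives the stated equivalents.

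Parts (ii)--(iv) concern $\xi \to 0$, where the outcome is dictated by the integrability at infinity of $g(u)\pi (u) \asymp u^{1/2}\cdot u^{-1-b} = u^{-1/2-b}$. If $b > 1/2$ this is integrable, so dominated convergence gives $I(\xi ) \to \int_0^\infty g(u)\pi (u)\,\mathrm{d}u = \int_0^\infty \pi (u) u^u e^{-u}/\Ga (u)\,\mathrm{d}u =: C_1 \in (0,\infty )$. If $b < 1/2$, substituting $v = u\xi $ in the tail and again invoking the regular-variation fact produces $I(\xi ) \sim \{ \Ga (1/2-b)/(2\pi )^{1/2} \}\xi ^{-(1/2-b)}\cdot(\text{const}) = \{ \Ga (1/2-b)/(2\pi )^{1/2} \}(1/\xi )^{3/2}\pi (1/\xi ) \to \infty $. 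In the boundary case $b = 1/2$ one has $g(u)\pi (u) \sim c_{\star } u^{-1}$ as $u \to \infty $, with $c_\star = \{ (2\pi )^{1/2}B(a,1/2) \}^{-1}$ for SB and $c_\star = \{ (2\pi )^{1/2}B(1/2,a) \}^{-1}$ for IRB (the two $B$'s coincide, which is why (iii) and (iv) agree); writing $g(u)\pi (u) = h(u)/u$ with $h(u) \to c_\star $, I would split $\int_0^\infty h(u)u^{-1}e^{-u\xi }\,\mathrm{d}u$ into a region where $h$ is still varying (contributing $O(1)$, including the $u \to 0$ end, where $\int_0 g\pi < \infty $ keeps the contribution $O(1)$) and a far tail where $h \approx c_\star $ (contributing $c_\star \log (1/\xi ) + O(1)$), so that $I(\xi ) \sim c_\star \log (1/\xi )$. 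Since $c_\star = (2\pi )^{-1/2}(1/\xi )^{3/2}\pi (1/\xi )(1+o(1))$, this is exactly the asserted $(2\pi )^{-1/2}(\log (1/\xi ))(1/\xi )^{3/2}\pi (1/\xi )$; dividing by $\la _i^2 \to 1$ finishes parts (iii)--(iv) and the $b = 1/2$ divergence in part (ii).

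The main obstacle will be the boundary case $b = 1/2$: pinning down the exact coefficient of $\log (1/\xi )$ requires a uniform estimate of $g(u)\pi (u) - c_\star /u$ that is integrable against $e^{-u\xi }$ uniformly for small $\xi $, together with careful bookkeeping of the $O(1)$ contributions from both the non-tail region and the $u \to 0$ end. A secondary difficulty, present in part (i) and in the IRB case, is verifying the regular-variation hypothesis $u\Psi '(u)/\Psi (u) \to 0$ for the relevant slowly varying factors before the third bullet of the Preliminaries can be applied, and checking that the domination needed for the substituted integrals holds uniformly as $\xi $ approaches its limit.
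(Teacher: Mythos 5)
Your reduction to $p(\la _i)=\la _i^{-2}\int_0^\infty g(u)e^{-u\xi _i}\pi (u)\,\mathrm{d}u$ with $g(u)=u^ue^{-u}/\Ga (u)$ is exactly the paper's starting point (the paper writes $u\pi (u)f(u)$ with $f(u)=u^{u-1}e^{-u}/\Ga (u)$, which is the same thing), and your treatment of parts (i) and (ii) --- exponential suppression of the far region, the substitution $v=u\xi $, and dominated convergence with the slow variation of the logarithmic factors --- mirrors the paper's proof step for step, including the use of the Stirling bounds from the Preliminaries to dominate $1(u<\xi _i)f(u/\xi _i)$. Where you genuinely diverge is the boundary case $b=1/2$ in parts (iii)--(iv). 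The paper integrates by parts, writing $\int_1^\infty u\pi f e^{-u\xi _i}\,\mathrm{d}u=\int_1^\infty u^{-1}\cdot u^2\pi f e^{-u\xi _i}\,\mathrm{d}u$ and integrating the $u^{-1}$ to $\log u$; it then needs the pointwise inequality $2/u+\pi '(u)/\pi (u)+f'(u)/f(u)\geqslant 0$, verified separately for the SB and IRB densities via the integral representation of $\log x-\psi (x)-1/(2x)$, to show the non-leading term stays bounded by monotone convergence, and extracts the $\log (1/\xi _i)$ from the remaining term by the same substitution-plus-domination machinery. Your route is an Abelian argument: since $h(u):=ug(u)\pi (u)\to c_\star =(2\pi )^{-1/2}/B(a,1/2)$, split at a large $M$ with $|h-c_\star |<\ep $ beyond $M$, use $\int_M^\infty u^{-1}e^{-u\xi }\,\mathrm{d}u\sim \log (1/\xi )$, and let $\ep \downarrow 0$; the $(0,M)$ piece is $O(1)$ because $g\pi $ is integrable there. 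This is correct, shorter, and strictly more general --- it needs only $\pi (u)\sim Cu^{-3/2}$ at infinity rather than the exact SB/IRB functional forms, and it dispenses entirely with the sign condition on the logarithmic derivatives. (Indeed the ``main obstacle'' you flag is not really one: the $\ep $--$M$ split already gives the exact coefficient without any uniform integrable bound on $g\pi -c_\star /u$.) What the paper's heavier machinery buys is explicit dominating functions of the kind reused in its Lemma S4 for the tail-robustness theorem; for Theorem S1 alone your argument suffices.
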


For part (i), the SB and IRB priors correspond to setting $\al = a$ and $\ga = - 1$ and setting $\al = 0$ and $\ga > 0$, respectively. 
Part (ii) is directly applicable to both the SB and IRB priors. 
Although part (ii) does not consider the speed with which $p( \la _i )$ tends to infinity as $\la _i \to 1$ when $b = 1 / 2$, this boundary case is treated in parts (iii) and (iv) for the SB and IRB priors with $b = 1 / 2$, respectively. 

In deriving Proposition 1 in the main manuscript from Theorem \ref{prp:marginal_general}, we note that 
\begin{align}
\xi _i \sim \begin{cases} 1 / \la _i \text{,} & \text{as $\la _i \to 0$} \text{,} \\ \log \la _i \text{,} & \text{as $\la _i \to \infty $} \text{,} \\ ( \la _i - 1)^2 / 2 \text{,} & \text{as $\la _i \to 1$} \text{.} \end{cases} \non 
\end{align}
It follows from parts (iii) and (iv) that 
\begin{align}
p( \la _i ) &\propto \log {1 \over | \la _i - 1|} \to \infty \non 
\end{align}
as $\la _i \to 1$ for the SB and IRB priors with $b = 1 / 2$.

\begin{proof}[Proof of Theorem \ref{prp:marginal_general}]
Let $\xi _i = 1 / \la _i - 1 - \log (1 / \la _i ) \geqslant 0$ and let 
\begin{align}
f(u) = {u^{u - 1} e^{- u} \over \Ga (u)} \non 
\end{align}
for $u \in (0, \infty )$. 
Then the marginal density can be written as 
\begin{align}
p( \la _i ) &= {1 \over {\la _i}^2} \int_{0}^{\infty } \pi (u) {u^u \over \Ga (u)} {1 \over {\la _i}^u} e^{- u / \la _i} \mathrm{d}u = {1 \over {\la _i}^2} \int_{0}^{\infty } u \pi (u) f(u) e^{- u \xi _i} \mathrm{d}u \text{.} \non 
\end{align}

For part (i), 
\begin{align}
\left| {\int_{0}^{\infty } u \pi (u) f(u) e^{- u \xi _i} \mathrm{d}u \over \int_{0}^{1} u \pi (u) f(u) e^{- u \xi _i} \mathrm{d}u } - 1 \right| &\le {\int_{1}^{\infty } u \pi (u) f(u) e^{- u \xi _i} \mathrm{d}u \over \int_{0}^{1 / 2} u \pi (u) f(u) e^{- u \xi _i} \mathrm{d}u } \non \\
&\le {\int_{1}^{\infty } u \pi (u) f(u) e^{- u} \mathrm{d}u \over \int_{0}^{1 / 2} u \pi (u) f(u) e^{- u} \mathrm{d}u } {e^{- ( \xi _i - 1)} \over e^{- (1 / 2) ( \xi _i - 1)}} \to 0 \non 
\end{align}
as $1 < \xi _i \to \infty $. 
Therefore, as $1 / \la _i$ or $\la _i$ tends to infinity, we have $\xi _i \to \infty $ and 
\begin{align}
{{\la _i}^2 p( \la _i ) \over \pi (1 / \xi _i ) / {\xi _i}^2} &\sim {1 \over \pi (1 / \xi _i ) / {\xi _i}^2} \int_{0}^{1} u \pi (u) f(u) e^{- u \xi _i} \mathrm{d}u \non \\
&= {1 \over \pi (1 / \xi _i ) / {\xi _i}^2} \int_{0}^{\xi _i} {1 \over \xi _i} {u \over \xi _i} \pi (u / \xi _i ) f(u / \xi _i ) e^{- u} \mathrm{d}u \non \\
&= \int_{0}^{\infty } u e^{- u} {1(u < \xi _i ) \pi (u / \xi _i ) \over \pi (1 / \xi _i )} 1(u < \xi _i ) f(u / \xi _i ) \mathrm{d}u \text{.} \non 
\end{align}
Note that 
\begin{align}
&\infty > {\sup_{u \in (0, 1)} ( \pi (u) / [ u^{\al - 1} / \{ 1 + \log (1 + 1 / u) \} ^{1 + \ga } ]) \over \inf_{u \in (0, 1)} ( \pi (u) / [ u^{\al - 1} / \{ 1 + \log (1 + 1 / u) \} ^{1 + \ga } ])} \geqslant {1(u < \xi _i ) \pi (u / \xi _i ) \over \pi (1 / \xi _i )} \to u^{\al - 1} \non 
\end{align}
for all $u \in (0, \infty )$ since $1 + \log (1 + \xi _i )$ is a slowly varying function of $\xi _i \to \infty $ and that 
\begin{align}
&\infty > \sup_{u \in (0, 1)} {u^u \over \Ga (1 + u)} \geqslant 1(u < \xi _i ) {(u / \xi _i )^{u / \xi _i} e^{- u / \xi _i} \over \Ga (1 + u / \xi _i )} = 1(u < \xi _i ) f(u / \xi _i ) \to 1 \non 
\end{align}
for all $u \in (0, \infty )$. 
Then, by the dominated convergence theorem, 
\begin{align}
{{\la _i}^2 p( \la _i ) \over \pi (1 / \xi _i ) / {\xi _i}^2} &\sim \int_{0}^{\infty } u^{\al } e^{- u} \mathrm{d}u = \Ga ( \al + 1) \text{,} \non 
\end{align}
and this proves part (i). 

For part (ii), suppose first that $b < 1 / 2$. 
Then 
\begin{align}
p( \la _i ) &\sim \int_{0}^{\infty } u \pi (u) f(u) e^{- u \xi _i} \mathrm{d}u = \int_{0}^{1} u \pi (u) f(u) e^{- u \xi _i} \mathrm{d}u + \int_{1}^{\infty } u \pi (u) f(u) e^{- u \xi _i} \mathrm{d}u \non \\
&= \int_{0}^{1} u \pi (u) f(u) e^{- u \xi _i} \mathrm{d}u + \int_{\xi _i}^{\infty } {1 \over \xi _i} {u \over \xi _i} \pi \Big( {u \over \xi _i} \Big) f \Big( {u \over \xi _i} \Big) e^{- u} \mathrm{d}u \text{.} \non 
\end{align}
By the monotone convergence theorem, 
\begin{align}
\int_{0}^{1} u \pi (u) f(u) e^{- u \xi _i} \mathrm{d}u &\to \int_{0}^{1} u \pi (u) f(u) \mathrm{d}u \le \Big[ \sup_{u \in (0, 1)} \{ u f(u) \} \Big] \int_{0}^{1} \pi (u) \mathrm{d}u < \infty \text{.} \non 
\end{align}
Meanwhile, 
\begin{align}
&\int_{\xi _i}^{\infty } {1 \over \xi _i} {u \over \xi _i} \pi \Big( {u \over \xi _i} \Big) f \Big( {u \over \xi _i} \Big) e^{- u} \mathrm{d}u / \Big\{ {1 \over {\xi _i}^2} \pi \Big( {1 \over \xi _i} \Big) f \Big( {1 \over \xi _i} \Big) \Big\} \non \\
&= \int_{\xi _i}^{\infty } {1 \over u^{1 / 2 + b}} {(u / \xi _i )^{1 + b} \pi (u / \xi _i ) \over (1 / \xi _i )^{1 + b} \pi (1 / \xi _i )} {(u / \xi _i )^{1 / 2} f(u / \xi _i ) \over (1 / \xi _i )^{1 / 2} f(1 / \xi _i )} e^{- u} \mathrm{d}u \text{.} \non 
\end{align}
For all $u \in (0, \infty )$, we have that 
\begin{align}
\infty > \frac{ \sup_{u \in (1, \infty )} \{ u^{1 + b} \pi (u) \} }{ \inf_{u \in (1, \infty )} \{ u^{1 + b} \pi (u) \} } \geqslant {(u / \xi _i )^{1 + b} \pi (u / \xi _i ) \over (1 / \xi _i )^{1 + b} \pi (1 / \xi _i )} \to 1 \non 
\end{align}
by assumption and also that 
\begin{align}
{(u / \xi _i )^{1 / 2} f(u / \xi _i ) \over (1 / \xi _i )^{1 / 2} f(1 / \xi _i )} &= \frac{ \displaystyle {(u / \xi _i )^{u / \xi _i - 1 / 2} e^{- u / \xi _i} \over \Ga (u / \xi _i )} }{ \displaystyle {(1 / \xi _i )^{1 / \xi _i - 1 / 2} e^{- 1 / \xi _i} \over \Ga (1 / \xi _i )} } \begin{cases} \displaystyle \le \exp \Big( {1 \over 12 / \xi _i} \Big) \downarrow 1 \\ \displaystyle \geqslant \exp \Big( - {1 \over 12 u / \xi _i} \Big) \uparrow 1 \text{.} \end{cases} \non 
\end{align}
Therefore, by the dominated convergence theorem, 
\begin{align}
\int_{\xi _i}^{\infty } {1 \over u^{1 / 2 + b}} {(u / \xi _i )^{1 + b} \pi (u / \xi _i ) \over (1 / \xi _i )^{1 + b} \pi (1 / \xi _i )} {(u / \xi _i )^{1 / 2} f(u / \xi _i ) \over (1 / \xi _i )^{1 / 2} f(1 / \xi _i )} e^{- u} \mathrm{d}u \to \int_{0}^{\infty } u^{1 / 2 - b - 1} e^{- u} \mathrm{d}u = \Ga (1 / 2 - b) \text{.} \non 
\end{align}
Since 
\begin{align}
{1 \over {\xi _i}^2} \pi \Big( {1 \over \xi _i} \Big) f \Big( {1 \over \xi _i} \Big) \sim \Big( {1 \over \xi _i} \Big) ^{1 / 2 - b} {(1 / \xi _i )^{1 / \xi _i - 1 / 2} e^{- 1 / \xi _i} \over \Ga (1 / \xi _i )} \to \infty \text{,} \non 
\end{align}
it follows that 
\begin{align}
p( \la _i ) &\sim \int_{\xi _i}^{\infty } {1 \over \xi _i} {u \over \xi _i} \pi \Big( {u \over \xi _i} \Big) f \Big( {u \over \xi _i} \Big) e^{- u} \mathrm{d}u \sim \Ga (1 / 2 - b) {1 \over {\xi _i}^2} \pi \Big( {1 \over \xi _i} \Big) f \Big( {1 \over \xi _i} \Big) \to \infty \text{.} \non 
\end{align}
Next, suppose that $b > 1 / 2$. 
Then 
\begin{align}
p( \la _i ) &\sim \int_{0}^{\infty } u \pi (u) f(u) e^{- u \xi _i} \mathrm{d}u \non \\
&\to \int_{0}^{\infty } u \pi (u) f(u) \mathrm{d}u = \int_{0}^{\infty } u^{1 / 2} \pi (u) {u^{u - 1 / 2} e^{- u} \over \Ga (u)} \mathrm{d}u \le {1 \over (2 \pi )^{1 / 2}} \int_{0}^{\infty } u^{1 / 2} \pi (u) \mathrm{d}u < \infty \text{.} \non 
\end{align}
by the monotone convergence theorem. 
Finally, suppose that $b = 1 / 2$. 
Then 
\begin{align}
p( \la _i ) &\sim \int_{0}^{\infty } u \pi (u) f(u) e^{- u \xi _i} \mathrm{d}u \non \\
&\to \int_{0}^{\infty } u \pi (u) f(u) \mathrm{d}u = \int_{0}^{\infty } u^{1 / 2} \pi (u) {u^{u - 1 / 2} e^{- u} \over \Ga (u)} \mathrm{d}u \geqslant {1 \over (2 \pi )^{1 / 2}} \int_{0}^{\infty } u^{1 / 2} \pi (u) e^{- 1 / (12 u)} \mathrm{d}u = \infty \text{.} \non 
\end{align}
by the monotone convergence theorem. 

For parts (iii) and (iv), 
\begin{align}
p( \la _i ) &\sim %
\int_{0}^{1} u \pi (u) f(u) e^{- u \xi _i} \mathrm{d}u + \int_{1}^{\infty } u \pi (u) f(u) e^{- u \xi _i} \mathrm{d}u \text{,} \non %
\end{align}
where 
\begin{align}
\int_{0}^{1} u \pi (u) f(u) e^{- u \xi _i} \mathrm{d}u \to \int_{0}^{1} u \pi (u) f(u) \mathrm{d}u < \infty \non 
\end{align}
as in part (ii). 
By integration by parts, 
\begin{align}
&\int_{1}^{\infty } u \pi (u) f(u) e^{- u \xi _i} \mathrm{d}u = \int_{1}^{\infty } {1 \over u} u^2 \pi (u) f(u) e^{- u \xi _i} \mathrm{d}u \non \\
&= \Big[ ( \log u) u^2 \pi (u) f(u) e^{- u \xi _i} \Big] _{1}^{\infty } - \int_{1}^{\infty } ( \log u) u^2 \pi (u) f(u) e^{- u \xi _i} \Big\{ {2 \over u} + {{\pi }' (u) \over \pi (u)} + {f' (u) \over f(u)} - \xi _i \Big\} \mathrm{d}u \non \\
&= - \int_{1}^{\infty } ( \log u) u^2 \pi (u) f(u) e^{- u \xi _i} \Big\{ {2 \over u} + {{\pi }' (u) \over \pi (u)} + {f' (u) \over f(u)} \Big\} \mathrm{d}u + \xi _i \int_{1}^{\infty } ( \log u) u^2 \pi (u) f(u) e^{- u \xi _i} \mathrm{d}u \text{.} \non 
\end{align}
Since 
\begin{align}
{2 \over u} + {{\pi }' (u) \over \pi (u)} + {f' (u) \over f(u)} %
&= {a + 1 / 2 \over u (1 + u)} + \log u - \psi (u) - {1 / 2 \over u} \non \\
&= {a + 1 / 2 \over u (1 + u)} + 2 \int_{0}^{\infty } {1 \over t^2 + u^2} {t \over e^{2 \pi t} - 1} \mathrm{d}t \geqslant 0 \non 
\end{align}
for all $u \in (0, \infty )$ for part (iii) and since 
\begin{align}
{2 \over u} + {{\pi }' (u) \over \pi (u)} + {f' (u) \over f(u)} %
&= {1 / 2 \over u (1 + u)} + {1 / 2 \over u (1 + u)} \Big\{ {1 \over \log (1 + 1 / u)} - u \Big\} \non \\
&\quad + {1 / 2 + a \over 1 + \log (1 + 1 / u)} {1 \over u (1 + u)} + 2 \int_{0}^{\infty } {1 \over t^2 + u^2} {t \over e^{2 \pi t} - 1} \mathrm{d}t \geqslant 0 \non 
\end{align}
for all $u \in (0, \infty )$ for part (iv), we have, by the monotone convergence theorem, 
\begin{align}
&\int_{1}^{\infty } ( \log u) u^2 \pi (u) f(u) e^{- u \xi _i} \Big\{ {2 \over u} + {{\pi }' (u) \over \pi (u)} + {f' (u) \over f(u)} \Big\} \mathrm{d}u \non \\
&\to \int_{1}^{\infty } ( \log u) u^2 \pi (u) f(u) \Big\{ {a + 1 / 2 \over u (1 + u)} + 2 \int_{0}^{\infty } {1 \over t^2 + u^2} {t \over e^{2 \pi t} - 1} dt \Big\} \mathrm{d}u \non \\
&\le \Big[ \sup_{u \in (1, \infty )} \{ u^{3 / 2} \pi (u) u^{1 / 2} f(u) \} \Big] \int_{1}^{\infty } {\log u \over u (1 + u)} \Big\{ a + {1 \over 2} + 2 {u (1 + u) \over u^2} \int_{0}^{\infty } {t \over e^{2 \pi t} - 1} dt \Big\} \mathrm{d}u < \infty \non 
\end{align}
for part (iii) and 
\begin{align}
&\int_{1}^{\infty } ( \log u) u^2 \pi (u) f(u) e^{- u \xi _i} \Big\{ {2 \over u} + {{\pi }' (u) \over \pi (u)} + {f' (u) \over f(u)} \Big\} \mathrm{d}u \non \\
&\to \int_{1}^{\infty } ( \log u) u^2 \pi (u) f(u) \Big[ {1 / 2 \over u (1 + u)} + {1 / 2 \over u (1 + u)} \Big\{ {1 \over \log (1 + 1 / u)} - u \Big\} \non \\
&\quad + {1 / 2 + a \over 1 + \log (1 + 1 / u)} {1 \over u (1 + u)} + 2 \int_{0}^{\infty } {1 \over t^2 + u^2} {t \over e^{2 \pi t} - 1} dt \Big] \mathrm{d}u \non \\
&\le \Big[ \sup_{u \in (1, \infty )} \{ u^{3 / 2} \pi (u) u^{1 / 2} f(u) \} \Big] \int_{1}^{\infty } {\log u \over u (1 + u)} \Big[ {1 \over 2} + {1 \over 2} (1 + u - u) \non \\
&\quad + {1 / 2 + a \over 1 + \log (1 + 1 / u)} + 2 {u (1 + u) \over u^2} \int_{0}^{\infty } {t \over e^{2 \pi t} - 1} dt \Big] \mathrm{d}u < \infty \non 
\end{align}
for part (iv). 
Meanwhile, 
\begin{align}
&\xi _i \int_{1}^{\infty } ( \log u) u^2 \pi (u) f(u) e^{- u \xi _i} \mathrm{d}u / \Big\{ \Big( \log {1 \over \xi _i} \Big) (1 / \xi _i )^2 \pi (1 / \xi _i ) f(1 / \xi _i ) \Big\} \non \\
&= \int_{\xi _i}^{\infty } {\log (u / \xi _i) \over \log (1 / \xi _i )} u^2 {\pi (u / \xi _i ) \over \pi (1 / \xi _i )} {f(u / \xi _i ) \over f(1 / \xi _i )} e^{- u} \mathrm{d}u \text{.} \non 
\end{align}
Note that 
\begin{align}
{\log (u / \xi _i) \over \log (1 / \xi _i )} u^2 {\pi (u / \xi _i ) \over \pi (1 / \xi _i )} {f(u / \xi _i ) \over f(1 / \xi _i )} e^{- u} \to e^{- u} \non 
\end{align}
for all $u \in (0, \infty )$. 
Also, note that 
\begin{align}
0 &\le {\log (u / \xi _i) \over \log (1 / \xi _i )} u^2 {\pi (u / \xi _i ) \over \pi (1 / \xi _i )} {f(u / \xi _i ) \over f(1 / \xi _i )} e^{- u} \non \\
&\le \{ 1 + \log (1 + u) \} u^2 u^{a - 1} \Big( {1 + 1 / \xi _i \over 1 + u / \xi _i} \Big) ^{a + 1 / 2} {1 \over u^{1 / 2}} e^{\xi _i / 12} e^{- u} \non \\
&\le \{ 1 + \log (1 + u) \} u^2 u^{a - 1} \Big( 1 + {1 \over u^{a + 1 / 2}} \Big) {1 \over u^{1 / 2}} e^{1 / 12} e^{- u} \non \\
&= e^{1 / 12} \{ 1 + \log (1 + u) \} ( u^{a + 1 / 2} + 1) e^{- u} \non 
\end{align}
for all $u \in ( \xi _i , \infty )$ for part (iii) and that 
\begin{align}
0 &\le {\log (u / \xi _i) \over \log (1 / \xi _i )} u^2 {\pi (u / \xi _i ) \over \pi (1 / \xi _i )} {f(u / \xi _i ) \over f(1 / \xi _i )} e^{- u} \non \\
&\le \{ 1 + \log (1 + u) \} u^2 {1 \over u} {1 + 1 / \xi _i \over 1 + u / \xi _i} \Big\{ {\log (1 + \xi _i ) \over \log (1 + \xi _i / u)} \Big\} ^{1 / 2} \Big\{ {1 + \log (1 + \xi _i ) \over 1 + \log (1 + \xi _i / u)} \Big\} ^{1 / 2 + a} {1 \over u^{1 / 2}} e^{\xi _i / 12} e^{- u} \non \\
&\le \{ 1 + \log (1 + u) \} u^2 {1 \over u} \Big( 1 + {1 \over u} \Big) \Big\{ {\xi _i \over ( \xi _i / u) / (1 + 1 / u)} \Big\} ^{1 / 2} (1 + u)^{1 / 2 + a} {1 \over u^{1 / 2}} e^{1 / 12} e^{- u} \non \\
&= e^{1 / 12} \{ 1 + \log (1 + u) \} (1 + u)^{3 / 2 + a} \Big( 1 + {1 \over u} \Big) ^{1 / 2} e^{- u} \non 
\end{align}
for all $u \in ( \xi _i , \infty )$ for part (iv). 
Then, by the dominated convergence theorem, 
\begin{align}
\int_{\xi _i}^{\infty } {\log (u / \xi _i) \over \log (1 / \xi _i )} u^2 {\pi (u / \xi _i ) \over \pi (1 / \xi _i )} {f(u / \xi _i ) \over f(1 / \xi _i )} e^{- u} \mathrm{d}u \to \int_{0}^{\infty } e^{- u} \mathrm{d}u = 1 \text{.} \non 
\end{align}
Thus, we conclude that 
\begin{align}
p( \la _i ) \sim \Big( \log {1 \over \xi _i} \Big) (1 / \xi _i )^2 \pi (1 / \xi _i ) f(1 / \xi _i ) \to \infty \text{.} \non 
\end{align}
This completes the proof. 
\end{proof}

\section{Lemmas}
In this section, we prove four lemmas, which will be used in the next section. 
Let 
\begin{align}
\fai (u) = u \log \Big( 1 + {1 \over u} \Big) \non 
\end{align}
for $u \in (0, \infty )$. 

\begin{lem}
\label{lem:fai} 
The function $\fai ( \cdot )$ has the following properties. 
\begin{itemize}
\setlength{\leftskip}{12pt}
\item[{\rm{(i)}}]
${\varphi }' (u) = \log (1 + 1 / u) - 1 / (1 + u) > 0$ for all $u \in (0, \infty )$. 
\item[{\rm{(ii)}}]
${\varphi }'' (u) < 0$ for all $u \in (0, \infty )$. 
\item[{\rm{(iii)}}]
$\lim_{u \to 0} \varphi (u) = 0$ and $\lim_{u \to \infty } \varphi (u) = 1$. 
\item[{\rm{(iv)}}]
$y {\varphi }^{- 1} (v / y) = v / \log \{ 1 + 1 / {\varphi }^{- 1} (v / y) \} \to 0$ as $v < y \to \infty $ for all $v \in (0, \infty )$. 
\item[{\rm{(v)}}]
${\varphi }' (u) \geqslant 1 / \{ 2 (1 + u)^2 \} $ for all $u \in (0, \infty )$. 
\item[{\rm{(vi)}}]
There exists $0 < c_1 < 1$ such that
\[{\fai }' ( \fai ^{- 1} (1 / y)) / {\fai }' ( \fai ^{- 1} (v / y)) \le (1 / v) \{ {\fai }^{- 1} (v / y) / {\fai }^{- 1} (1 / y) \} / (1 - c_1 )\]
for all $v, y \in (0, \infty )$ satisfying $y > \max \{ 1, 2 v \} $. 
\item[{\rm{(vii)}}]
There exists $c_2 > 0$ such that ${\fai }^{- 1} (v / y) / {\fai }^{- 1} (1 / y) \le v^{c_2}$ for all $v, y \in (0, \infty )$ satisfying $v > 1$ and $y > 2 v$. 
\end{itemize}
\end{lem}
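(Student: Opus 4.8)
The plan for parts (i)--(v) is routine calculus. Differentiating $\varphi(u)=u\{\log(u+1)-\log u\}$ gives $\varphi'(u)=\log(1+1/u)-1/(1+u)$, and a second differentiation and simplification give $\varphi''(u)=-1/\{u(1+u)^2\}$; part (i) is then the elementary inequality $\log(1+x)>x/(1+x)$ evaluated at $x=1/u$, and part (ii) is immediate. For (iii), $u\log u\to 0$ gives the limit $0$ at the origin and $\log(1+t)/t\to1$ with $t=1/u$ gives the limit $1$ at infinity; consequently $\varphi$ is a strictly increasing bijection of $(0,\infty)$ onto $(0,1)$, so $\varphi^{-1}\colon(0,1)\to(0,\infty)$ is well defined, increasing and continuous with $\varphi^{-1}(z)\to0$ as $z\to0$, and this is all that (iv)--(vii) will use about it. For (iv), setting $w=\varphi^{-1}(v/y)$, the defining relation $w\log(1+1/w)=v/y$ rearranges to $yw=v/\log(1+1/w)$, the stated identity; as $v$ is fixed and $v/y\to0$, part (iii) gives $w\to0$, hence $\log(1+1/w)\to\infty$ and $yw\to0$. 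For (v), write $\varphi'(u)=\int_u^{u+1}\{1/t-1/(u+1)\}\,dt=\int_u^{u+1}\frac{(u+1)-t}{t(u+1)}\,dt$; since $t\le u+1$ on the interval, the integrand is at least $\{(u+1)-t\}/(u+1)^2$, and integrating gives $\varphi'(u)\ge\frac{1}{2(u+1)^2}$.

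Parts (vi) and (vii) are the substance, and I would first isolate two facts: the identity $w\varphi'(w)=\varphi(w)-w/(1+w)$ (clear the parenthesis in $\varphi'$), and the strict monotonicity of $h(w):=(1+w)\log(1+1/w)$, which is decreasing because $h'(w)=\log(1+1/w)-1/w<0$. Put $m_0:=\varphi^{-1}(1/2)$; since $\varphi(1)=\log 2>1/2$ we have $m_0\in(0,1)$, and from $\varphi(m_0)=1/2$ (i.e.\ $\log(1+1/m_0)=1/(2m_0)$) we get $\rho:=h(m_0)=(1+m_0)/(2m_0)>1$. In both parts put $p=\varphi^{-1}(1/y)$ and $q=\varphi^{-1}(v/y)$, so that $\varphi(p)=1/y$, $\varphi(q)=v/y$; the hypothesis $y>2v$ forces $\varphi(q)<1/2$ and hence $q<m_0$.

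For (vi), using the identity the asserted inequality is equivalent to $\frac{vp\,\varphi'(p)}{q\,\varphi'(q)}=\frac{v/y-vp/(1+p)}{v/y-q/(1+q)}\le\frac{1}{1-c_1}$, where both the numerator $vp\,\varphi'(p)$ and the denominator $q\,\varphi'(q)$ are positive since $\varphi'>0$. The numerator is $<v/y$ trivially, while $q/(1+q)=\varphi(q)/h(q)=(v/y)/h(q)<(v/y)/\rho$ because $h(q)>h(m_0)=\rho$; hence the denominator exceeds $(v/y)(\rho-1)/\rho$, the ratio is $<\rho/(\rho-1)$, and one may take $c_1=1/\rho=2m_0/(1+m_0)\in(0,1)$. (No use of the restriction $v>1$ is needed here, and the signs work for all $v>0$.)

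For (vii) we now have $v>1$, so $p<q<m_0$. Writing $\log(q/p)=\int_p^q w^{-1}\,dw$ and, from $\varphi(p)=1/y$, $\varphi(q)=v/y$, $\log v=\log\varphi(q)-\log\varphi(p)=\int_p^q \varphi'(w)/\varphi(w)\,dw$, and using $w^{-1}=\frac{\varphi(w)}{w\varphi'(w)}\cdot\frac{\varphi'(w)}{\varphi(w)}$ with $\frac{\varphi(w)}{w\varphi'(w)}=\frac{1}{1-1/h(w)}$, which on $(0,m_0]$ is increasing (as $h$ decreases and stays $\ge\rho>1$ there) and therefore $\le 1/(1-1/\rho)=\rho/(\rho-1)$, a comparison of integrands gives $\log(q/p)\le\frac{\rho}{\rho-1}\int_p^q\varphi'(w)/\varphi(w)\,dw=\frac{\rho}{\rho-1}\log v$, i.e.\ $q/p\le v^{c_2}$ with $c_2=\rho/(\rho-1)=(1+m_0)/(1-m_0)>0$. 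The one real obstacle is getting $c_1$ and $c_2$ uniform over all admissible $(v,y)$; this is resolved by the observation that, once $w\varphi'(w)=\varphi(w)-w/(1+w)$ is used, both ratios are governed by the single monotone function $h$, whose range on the relevant set is pinned down because $y>2v$ keeps the arguments $1/y$ and $v/y$ of $\varphi^{-1}$ inside $(0,1/2)$ and hence $p,q$ inside $(0,m_0)$.
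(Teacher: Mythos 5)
Your proof is correct and follows essentially the same route as the paper's: parts (i)--(v) by direct calculus, and parts (vi)--(vii) by uniformly bounding the ratio $\varphi(u)/\{u\varphi'(u)\}$ (equivalently $1/\{1-1/h(u)\}$ with $h(u)=(1+u)\log(1+1/u)$) over $u\in(0,\varphi^{-1}(1/2))$ and then integrating $d\log\varphi^{-1}$ against $d\log$. The only substantive difference is that you exploit the monotonicity of $h$ to identify the constants explicitly as $c_1=2m_0/(1+m_0)$ and $c_2=(1+m_0)/(1-m_0)$ with $m_0=\varphi^{-1}(1/2)$, where the paper only verifies that the corresponding suprema are finite (and $c_1<1$) via the series expansion of $\varphi'$.
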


\begin{proof}%
Parts (i), (ii), and (iii) are trivial. 
Part (iv) follows since 
\begin{align}
u / y &= \fai ( {\fai }^{- 1} (u / y)) = {\fai }^{- 1} (u / y) \log \{ 1 + 1 / {\fai }^{- 1} (u / y) \} \text{.} \non 
\end{align}
Part (v) follows since, by part (i), 
\begin{align}
{\fai }' (u) &= \sum_{k = 1}^{\infty } {1 \over k} {1 \over (1 + u)^k} - {1 \over 1 + u} = {1 \over 2 (1 + u)^2} + \sum_{k = 3}^{\infty } {1 \over k} {1 \over (1 + u)^k} \text{.} \non 
\end{align}
For part (vi), let 
\begin{align}
c_1 &= \sup_{0 < t < 1 / 2} {1 / \{ 1 + {\varphi }^{- 1} (t) \} \over \log \{ 1 + 1 / {\varphi }^{- 1} (t) \} } < \infty \text{.} \non 
\end{align}
Then 
\begin{align}
c_1 &= \sup_{0 < u < {\varphi }^{- 1} (1 / 2)} {1 / (1 + u) \over \log (1 + 1 / u)} \non \\
&\le \sup_{0 < u < {\varphi }^{- 1} (1 / 2)} {1 / (1 + u) \over 1 / (1 + u) + (1 / 2) / (1 + u)^2} = \sup_{0 < u < {\varphi }^{- 1} (1 / 2)} {1 + u \over 1 + u + 1 / 2} < 1 \text{.} \non 
\end{align}
Therefore, 
\begin{align}
{{\varphi }' ( {\varphi }^{- 1} (1 / y)) \over {\varphi }' ( {\varphi }^{- 1} (v / y))} &= {\log \{ 1 + 1 / {\varphi }^{- 1} (1 / y) \} - 1 / \{ 1 + {\varphi }^{- 1} (1 / y) \} \over \log \{ 1 + 1 / {\varphi }^{- 1} (v / y) \} - 1 / \{ 1 + {\varphi }^{- 1} (v / y) \} } \non \\
&\le {\log \{ 1 + 1 / {\varphi }^{- 1} (1 / y) \} \over \log \{ 1 + 1 / {\varphi }^{- 1} (v / y) \} - 1 / \{ 1 + {\varphi }^{- 1} (v / y) \} } \non \\
&= {\log \{ 1 + 1 / {\varphi }^{- 1} (1 / y) \} \over \log \{ 1 + 1 / {\varphi }^{- 1} (v / y) \} } / \Big[ 1 - {1 / \{ 1 + {\varphi }^{- 1} (v / y) \} \over \log \{ 1 + 1 / {\varphi }^{- 1} (v / y) \} } \Big] \non \\
&\le {\log \{ 1 + 1 / {\varphi }^{- 1} (1 / y) \} \over \log \{ 1 + 1 / {\varphi }^{- 1} (v / y) \} } {1 \over 1 - c_1} \non = {1 \over v} {{\varphi }^{- 1} (v / y) \over {\varphi }^{- 1} (1 / y)} {1 \over 1 - c_1} \text{,} \non 
\end{align}
where the last equality follows from part (iv). 
For part (vii), let 
\begin{align}
c_2 &= \sup_{0 < u < {\varphi }^{- 1} (1 / 2)} \Big\{ 1 + {1 \over 1 + u} / \sum_{k = 2}^{\infty } {1 \over k} {1 \over (1 + u)^k} \Big\} < \infty \text{.} \non 
\end{align}
Then, for any $1 < s < v$, since $s / y < 1 / 2$, 
\begin{align}
{s / y \over {\varphi }^{- 1} (s / y)} {1 \over {\varphi }' ( {\varphi }^{- 1} (s / y))} &\le \sup_{0 < t < 1 / 2} \Big\{ {t \over {\varphi }^{- 1} (t)} {1 \over {\varphi }' ( {\varphi }^{- 1} (t))} \Big\} \le \sup_{0 < u < {\varphi }^{- 1} (1 / 2)} \Big\{ {\varphi (u) \over u} {1 \over {\varphi }' (u)} \Big\} \non \\
&= \sup_{0 < u < {\varphi }^{- 1} (1 / 2)} {\log (1 + 1 / u) \over \log (1 + 1 / u) - 1 / (1 + u)} = c_2 \text{,} \non 
\end{align}
where the equality follows from part (i). 
Thus, 
\begin{align}
{{\varphi }^{- 1} (v / y) \over {\varphi }^{- 1} (1 / y)} &= \exp \Big\{ \Big[ \log {\varphi }^{- 1} (s / y) \Big] _{s = 1}^{s = v} \Big\} = \exp \Big\{ \int_{s = 1}^{s = v} {1 \over s} {s / y \over {\varphi }^{- 1} (s / y)} {1 \over {\varphi }' ( {\varphi }^{- 1} (s / y))} \mathrm{d}s \Big\} \non \\
&\le \exp \Big( \int_{s = 1}^{s = v} {c_2 \over s} \mathrm{d}s \Big) = v^{c_2} \text{.} \non 
\end{align}
This completes the proof. 
\end{proof}

\bigskip

For $\ep \geqslant 0$ and $j = 0, 1$, let 
\begin{align}
I_{j, \ep } (y; \de ) &= \int_{0}^{\infty } \pi (u) {\Ga (u + \de + 1) \over \Ga (u) (y + u)^{\de + 1}} \Big( {u \over y + u} \Big) ^u \Big( {u \over \de + u} \Big) ^j e^{- \ep u} \mathrm{d}u \non 
\end{align}
for $y \in (0, \infty )$ for $\de \in (0, \infty )$.

\begin{lem}
\label{lem:approximation} 
Let $\de \in (0, \infty )$. 
\begin{itemize}
\setlength{\leftskip}{12pt}
\item[{\rm{(i)}}]
For any $y \in (\de + 1, \infty )$ and any $\ep > 0$, 
\begin{align}
1 &\le {I_{1, 0} (y; \de ) / I_{1, \ep } (y; \de ) \over I_{0, 0} (y; \de ) / I_{0, \ep } (y; \de )} \le {I_{1, 0} ( \de + 1; \de ) \over I_{1, \ep } ( \de + 1; \de )} \text{.} \non 
\end{align}
\item[{\rm{(ii)}}]
As $\ep \to 0$, 
\begin{align}
{I_{1, 0} ( \de + 1; \de ) \over I_{1, \ep } ( \de + 1; \de )} = 1 \text{.} \non 
\end{align}
\end{itemize}
\end{lem}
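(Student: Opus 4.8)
The plan is to recast all four integrals as expectations against a single probability measure and finish with two applications of an elementary correlation inequality plus one monotonicity check. I would dispose of part (ii) first, since it is the easy half. At $y=\de+1$ the integrand of $I_{1,\ep }(\de+1;\de)$ is $\pi (u)$ times the $\ep$-independent factor $\{ \Ga (u+\de+1)/\Ga (u)\} (\de+1+u)^{-\de-1}\{ u/(\de+1+u)\} ^{u}\{ u/(\de+u)\} $ times $e^{-\ep u}$. That factor is continuous on $(0,\infty )$, tends to $0$ as $u\to 0$ and to $e^{-(\de+1)}$ as $u\to\infty $ (using $\Ga (u+\de+1)/\Ga (u)\sim u^{\de+1}$), hence is bounded, so the integrand is dominated by a constant multiple of the proper density $\pi $. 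Letting $\ep\downarrow 0$ we have $e^{-\ep u}\uparrow 1$, so monotone convergence gives $I_{1,\ep }(\de+1;\de)\uparrow I_{1,0}(\de+1;\de)\in(0,\infty )$, which is the stated limit $1$.

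For part (i), fix $y>\de+1$ and write $g(u)=\pi (u)\Ga (u+\de+1)u^{u}/\{ \Ga (u)(y+u)^{\de+1+u}\} $ for the $\ep=0$, $j=0$ integrand at this $y$, and let $g_0$ be the same expression with $y$ replaced by $\de+1$; set $h(u)=u/(\de+u)$, increasing and in $(0,1)$, and $k(u)=e^{-\ep u}$, decreasing and in $(0,1)$. Both $g$ and $g_0$ are integrable with strictly positive integral (same boundedness remark as above), so I can form the probability measures $\mu\propto g\,\mathrm{d}u$ and $\mu_0\propto g_0\,\mathrm{d}u$; then $I_{0,0}(y;\de),I_{0,\ep }(y;\de),I_{1,0}(y;\de),I_{1,\ep }(y;\de)$ are one common positive constant times $1,E_\mu[k],E_\mu[h],E_\mu[hk]$, and likewise $I_{1,0}(\de+1;\de)/I_{1,\ep }(\de+1;\de)=E_{\mu_0}[h]/E_{\mu_0}[hk]$. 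Hence the quantity to be bounded equals $E_\mu[h]\,E_\mu[k]/E_\mu[hk]$. The lower bound $\ge 1$ is immediate: $h$ is non-decreasing and $k$ is non-increasing, so Chebyshev's correlation inequality gives $E_\mu[hk]\le E_\mu[h]\,E_\mu[k]$.

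For the upper bound, since $k\le 1$ we have $E_\mu[k]\le 1$, so it is enough to show $E_\mu[h]/E_\mu[hk]\le E_{\mu_0}[h]/E_{\mu_0}[hk]$, that is, $E_{\mu_0}[hk]/E_{\mu_0}[h]\le E_\mu[hk]/E_\mu[h]$. Here I would introduce the probability measure $\rho\propto h\,g_0\,\mathrm{d}u$ and the likelihood ratio $R(u)=g(u)/g_0(u)=\{ (\de+1+u)/(y+u)\} ^{\de+1+u}$; then $E_{\mu_0}[hk]/E_{\mu_0}[h]=E_\rho[k]$ and $E_\mu[hk]/E_\mu[h]=E_\rho[kR]/E_\rho[R]$, so the goal reduces to $E_\rho[kR]\ge E_\rho[k]\,E_\rho[R]$. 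The one step that really uses $y\ge\de+1$ is the monotonicity of $R$: writing $t=(\de+1+u)/(y+u)\in(0,1)$, the derivative of $(\de+1+u)\log t$ equals $\log t+1-t$, which is negative because $\log t<t-1$, so $R$ is non-increasing. With $R$ and $k$ both non-increasing, the correlation inequality yields $E_\rho[kR]\ge E_\rho[k]\,E_\rho[R]$, completing the upper bound.

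The only genuinely non-mechanical point is recognizing that the comparison with $y=\de+1$ should be routed through the tilted reference measure $\rho\propto h\,g_0\,\mathrm{d}u$ and the monotone likelihood ratio $R=g/g_0$, whose sign computation pins down exactly where the hypothesis $y\ge\de+1$ enters; once that structure is in place, the rest is boundedness of a gamma‑ratio factor, one derivative sign, and two invocations of the single‑variable correlation inequality, so I do not expect further obstacles.
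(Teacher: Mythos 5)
Your proof is correct, and it matches the paper's argument in its overall architecture while differing in one substantive step. The lower bound is handled identically: both you and the paper express the double ratio as $E_\mu[h]E_\mu[k]/E_\mu[hk]$ for $h(u)=u/(\de+u)$ increasing and $k(u)=e^{-\ep u}$ decreasing, and invoke the covariance (Chebyshev correlation) inequality. The first reduction for the upper bound is also the same (discard the factor $I_{0,\ep}/I_{0,0}\le 1$). Where you diverge is in comparing $y$ with $\de+1$: the paper shows that $y\mapsto I_{1,0}(y;\de)/I_{1,\ep}(y;\de)$ is non-increasing by computing $\tfrac{\pd}{\pd y}\log\{I_{1,0}(y;\de)/I_{1,\ep}(y;\de)\}$ and applying the covariance inequality to the function $(u+\de+1)/(y+u)$, which is increasing in $u$ precisely when $y>\de+1$; you instead make a direct two-point comparison through the tilted measure $\rho\propto h\,g_0\,\mathrm{d}u$ and the likelihood ratio $R(u)=\{(\de+1+u)/(y+u)\}^{\de+1+u}$, whose non-increase (via $\log t+1-t\le 0$) encodes the same use of the hypothesis $y\ge\de+1$. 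Your version is the global form of the paper's infinitesimal argument; it buys a proof with no differentiation under the integral sign (and hence no implicit justification of that step), at the cost of introducing the auxiliary measure $\rho$. Your treatment of part (ii) by monotone convergence, together with the boundedness check on the gamma-ratio factor, fills in a step the paper dismisses as trivial, and is accurate.
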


\begin{proof}%
Let 
Part (ii) is trivial. 
For part (i), we have by the covariance inequality that 
\begin{align}
{I_{1, 0} (y; \de ) / I_{1, \ep } (y; \de ) \over I_{0, 0} (y; \de ) / I_{0, \ep } (y; \de )} &= {I_{1, 0} (y; \de ) \over I_{0, 0} (y; \de )} {I_{0, \ep } (y; \de ) \over I_{0, 0} (y; \de )} / {I_{1, \ep } (y; \de ) \over I_{0, 0} (y; \de )} \geqslant 1 \text{.} \non 
\end{align}
On the other hand, 
\begin{align}
{I_{1, 0} (y; \de ) / I_{1, \ep } (y; \de ) \over I_{0, 0} (y; \de ) / I_{0, \ep } (y; \de )} &\le {I_{1, 0} (y; \de ) \over I_{1, \ep } (y; \de )} \text{.} \non 
\end{align}
By the covariance inequality, 
\begin{align}
{\pd \over \pd y} \log {I_{1, 0} (y; \de ) \over I_{1, \ep } (y; \de )} =&  - \int_{0}^{\infty } {u + \de + 1 \over y + u} \mathrm{d}\mu (u) / \int_{0}^{\infty } \mathrm{d}\mu (u) \non \\
&+ \int_{0}^{\infty } {u + \de + 1 \over y + u} e^{- \ep u} \mathrm{d}\mu (u) / \int_{0}^{\infty } e^{- \ep u} \mathrm{d}\mu (u) \le 0 \text{,} \non 
\end{align}
where 
\begin{align}
\mathrm{d}\mu (u) &= \pi (u) {\Ga (u + \de + 1) \over \Ga (u) (y + u)^{\de + 1}} \Big( {u \over y + u} \Big) ^u {u \over \de + u} \mathrm{d}u \non 
\end{align}
for $u \in (0, \infty )$. 
Therefore, 
\begin{align}
{I_{1, 0} (y; \de ) \over I_{1, \ep } (y; \de )} \le {I_{1, 0} ( \de + 1; \de ) \over I_{1, \ep } ( \de + 1; \de )} \text{,} \non 
\end{align}
and this completes the proof. 
\end{proof}

In the remainder of this section, we fix $\de > 0$. 
For $\ep \geqslant 0$ and $j = 0, 1$, the integral $I_{j, \ep } (y; \de )$ is denoted by $I_{j, \ep } (y)$ for $y \in (0, \infty )$ for $\de \in (0, \infty )$. 
For $\ep > 0$, let 
\begin{align}
f_{j, \ep } (v; y) &= 1(0 < v < y) {1 \over {\varphi }' ( \varphi ^{- 1} (v / y))} \pi (y \varphi ^{- 1} (v / y)) \non \\
&\quad \times {\Ga (y \varphi ^{- 1} (v / y) + \de + 1) \over \Ga (y \varphi ^{- 1} (v / y)) y^{\de + 1} \{ 1 + \varphi ^{- 1} (v / y) \} ^{\de + 1}} e^{- v} \Big\{ {y \varphi ^{- 1} (v / y) \over \de + y \varphi ^{- 1} (v / y)} \Big\} ^j e^{- \ep y \varphi ^{- 1} (v / y)} \non 
\end{align}
for $v, y \in (0, \infty )$ for $j = 0, 1$.

\begin{lem}
\label{lem:change_of_variables} 
For any $y \in (0, \infty )$, any $\ep > 0$, and any $j = 0, 1$, we have 
\begin{align}
I_{j, \ep } (y) &= \int_{0}^{\infty } f_{j, \ep } (v; y) \mathrm{d}v \text{.} \non 
\end{align}
\end{lem}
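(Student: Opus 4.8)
The plan is to prove the identity by the substitution $v = y\varphi(u/y)$ in the integral defining $I_{j,\ep}(y)$. The motivating observation is that, since $\varphi(u) = u\log(1+1/u)$, for any $u,y>0$ we have
\[
\left( \frac{u}{y+u} \right)^{u} = \exp\left( - u \log\left( 1 + \frac{y}{u} \right) \right) = \exp\left( - y\,\varphi\left( \frac{u}{y} \right) \right),
\]
because $y\varphi(u/y) = y\cdot(u/y)\log(1+y/u) = u\log(1+y/u)$. Thus setting $v = y\varphi(u/y)$ converts the awkward factor $(u/(y+u))^{u}$ into the elementary $e^{-v}$, which is exactly the $e^{-v}$ appearing in $f_{j,\ep}(v;y)$; this is what suggests the substitution.

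Next I would verify that $u \mapsto v = y\varphi(u/y)$ is a genuine change of variables on $(0,\infty)$. By Lemma~\ref{lem:fai}(i), $\varphi' > 0$ on $(0,\infty)$, so $\varphi$ is strictly increasing and continuously differentiable; by Lemma~\ref{lem:fai}(iii), $\varphi$ maps $(0,\infty)$ onto $(0,1)$. Hence $u \mapsto y\varphi(u/y)$ is a strictly increasing $C^1$ bijection from $(0,\infty)$ onto $(0,y)$, with inverse $v \mapsto u = y\varphi^{-1}(v/y)$ and derivative $\mathrm{d}v/\mathrm{d}u = \varphi'(u/y)$, so that $\mathrm{d}u = \mathrm{d}v/\varphi'(\varphi^{-1}(v/y))$.

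Then I would simply substitute $u = y\varphi^{-1}(v/y)$ into $I_{j,\ep}(y) = \int_0^\infty \pi(u)\,[\Ga(u+\de+1)/\{\Ga(u)(y+u)^{\de+1}\}]\,(u/(y+u))^u(u/(\de+u))^j e^{-\ep u}\,\mathrm{d}u$, using $(u/(y+u))^u = e^{-v}$ from the first step, $y+u = y\{1+\varphi^{-1}(v/y)\}$ hence $(y+u)^{\de+1} = y^{\de+1}\{1+\varphi^{-1}(v/y)\}^{\de+1}$, $u/(\de+u) = y\varphi^{-1}(v/y)/\{\de+y\varphi^{-1}(v/y)\}$, $e^{-\ep u} = e^{-\ep y\varphi^{-1}(v/y)}$, and $\mathrm{d}u = \mathrm{d}v/\varphi'(\varphi^{-1}(v/y))$. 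Collecting the six factors, the integrand over $v\in(0,y)$ becomes term-for-term the expression $f_{j,\ep}(v;y)$, and the indicator $1(0<v<y)$ built into the definition of $f_{j,\ep}$ accounts precisely for the image of the substitution, so $\int_0^y(\cdots)\,\mathrm{d}v = \int_0^\infty f_{j,\ep}(v;y)\,\mathrm{d}v$, which is the claim.

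There is no real obstacle here: the lemma is a bookkeeping identity whose only substantive input is the clean behaviour of $\varphi$ from Lemma~\ref{lem:fai}. The one point that is not purely formal is that the substitution is a $C^1$ diffeomorphism onto the \emph{full} interval $(0,y)$ --- strict monotonicity of $\varphi$ together with the fact that its range is exactly $(0,1)$ --- since otherwise the limits of integration would not match the indicator in $f_{j,\ep}$. The rest is just carefully matching factors, which is routine.
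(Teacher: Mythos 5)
Your proof is correct and is essentially the paper's own argument: the paper carries out the same change of variables in three elementary steps (rescaling $u\mapsto yu$, then $t=\varphi(u)$, then $v=yt$), whose composition is exactly your single substitution $v=y\varphi(u/y)$, and both rely on the same facts from Lemma~\ref{lem:fai} that $\varphi$ is a strictly increasing $C^1$ bijection of $(0,\infty)$ onto $(0,1)$. The factor-by-factor matching you describe is precisely what the paper's displayed computation does.
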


\begin{proof}%
We have 
\begin{align}
I_{j, \ep } (y) &= \int_{0}^{\infty } \pi (u) {\Ga (u + \de + 1) \over \Ga (u) (y + u)^{\de + 1}} \Big( {u \over y + u} \Big) ^u \Big( {u \over \de + u} \Big) ^j e^{- \ep u} \mathrm{d}u \non \\
&= \int_{0}^{\infty } y \pi (y u) {\Ga (y u + \de + 1) \over \Ga (y u) y^{\de + 1} (1 + u)^{\de + 1}} \Big( {u \over 1 + u} \Big) ^{y u} \Big( {y u \over \de + y u} \Big) ^j e^{- \ep y u} \mathrm{d}u \non \\
&= \int_{0}^{1} y {1 \over {\varphi }' ( \varphi ^{- 1} (t))} \pi (y \varphi ^{- 1} (t)) {\Ga (y \varphi ^{- 1} (t) + \de + 1) \over \Ga (y \varphi ^{- 1} (t)) y^{\de + 1} \{ 1 + \varphi ^{- 1} (t) \} ^{\de + 1}} e^{- y t} \Big\{ {y \varphi ^{- 1} (t) \over \de + y \varphi ^{- 1} (t)} \Big\} ^j e^{- \ep y \varphi ^{- 1} (t)} \mathrm{d}t \non \\
&= \int_{0}^{y} \Big[ {1 \over {\varphi }' ( \varphi ^{- 1} (v / y))} \pi (y \varphi ^{- 1} (v / y)) \non \\
&\quad \times {\Ga (y \varphi ^{- 1} (v / y) + \de + 1) \over \Ga (y \varphi ^{- 1} (v / y)) y^{\de + 1} \{ 1 + \varphi ^{- 1} (v / y) \} ^{\de + 1}} e^{- v} \Big\{ {y \varphi ^{- 1} (v / y) \over \de + y \varphi ^{- 1} (v / y)} \Big\} ^j e^{- \ep y \varphi ^{- 1} (v / y)} \Big] \mathrm{d}v \text{,} \non 
\end{align}
which is the desired result. 
\end{proof}

\bigskip

Let 
\begin{align}
g_j (v; y) &= {\Ga ( \de + 1) \over \de ^j} {C y^{j + \al - \de - 1} \over {\varphi }' ( \varphi ^{- 1} (1 / y))} {\{ \varphi ^{- 1} (1 / y) \} ^{j + \al } \over [1 + \log \{ 1 + (1 / y) / \varphi ^{- 1} (1 / y) \} ]^{1 + \ga }} v^{j + \al } e^{- v} \non 
\end{align}
for $v \in (0, \infty )$ and $y \in (1, \infty )$ and $j = 0, 1$. 

\begin{lem}
\label{lem:g} 
Let $j = 0, 1$ and $\ep > 0$. 
\begin{itemize}
\setlength{\leftskip}{12pt}
\item[{\rm{(i)}}]
For any $v \in (0, \infty )$, we have $f_{j, \ep } (v; y) \sim g_j (v; y)$ as $y \to \infty $. 
\item[{\rm{(ii)}}]
There exists an integrable function $h_{j, \ep } (v)$ of $v \in (0, \infty )$ such that $v^{j + \al } e^{- v} f_{j, \ep } (v; y) / g_j (v; y) \le h_{j, \ep } (v)$ for all $v \in (0, \infty )$ and $y \in (1, \infty )$. 
\end{itemize}
\end{lem}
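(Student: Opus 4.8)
To prove Lemma~\ref{lem:g}, the plan is to read both parts off the explicit formula for $f_{j, \ep }(v; y)$ by feeding in known asymptotics and bounds: condition~\eqref{eq:C3} for $\pi $ near the origin, condition~\eqref{eq:C2} for a uniform tail bound on $\pi $, the Gamma asymptotics and two-sided Gamma bounds from the Preliminaries, and the properties of $\fai $ in Lemma~\ref{lem:fai}. Throughout fix $v > 0$ and $j \in \{ 0, 1 \} $ and write $w = \fai ^{- 1}(v / y)$, $w_0 = \fai ^{- 1}(1 / y)$, $s = y w$ and $s_0 = y w_0$; Lemma~\ref{lem:fai}(iv) gives $\log (1 + 1 / w) = v / s$ and $\log (1 + 1 / w_0 ) = 1 / s_0 = (1 / y) / \fai ^{- 1}(1 / y)$, so the logarithmic factor in $g_j(v; y)$ is exactly $\{ 1 + \log (1 + 1 / s_0 ) \} ^{1 + \ga }$. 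For part~(i) I fix $v$ and let $y \to \infty $: then $v / y, 1 / y \to 0$, so $w, w_0 \to 0$ by Lemma~\ref{lem:fai}(iii) and $s, s_0 \to 0$ by Lemma~\ref{lem:fai}(iv). Replacing in $f_{j, \ep }(v; y)$ the factor $\pi (s)$ by its \eqref{eq:C3}-equivalent, $\Ga (s + \de + 1) / \Ga (s)$ by $s \Ga ( \de + 1)$, $(1 + w)^{\de + 1}$ by $1$, $\{ s / ( \de + s) \} ^j$ by $(s / \de )^j$ and $e^{- \ep s}$ by $1$, and using $s = y w$ to combine the powers of $s$, one gets
\begin{align*}
&f_{j, \ep }(v; y) \sim {\Ga ( \de + 1) \over \de ^j} {C y^{j + \al - \de - 1} \over \fai ' (w)} {w^{j + \al } \over \{ 1 + \log (1 + 1 / s) \} ^{1 + \ga }} e^{- v}, \\
&{f_{j, \ep }(v; y) \over g_j(v; y)} \sim {\fai ' (w_0 ) \over \fai ' (w)} \Big( {w \over w_0 } \Big) ^{j + \al } \Big( {1 + \log (1 + 1 / s_0 ) \over 1 + \log (1 + 1 / s)} \Big) ^{1 + \ga } {1 \over v^{j + \al }}.
\end{align*}

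It remains to show the right-hand side tends to $1$, and the crux is that $w / w_0 \to v$. This holds because $\fai (w) / \fai (w_0 ) = (v / y) / (1 / y) = v$; because $u \fai ' (u) / \fai (u) = 1 - [1 / (1 + u)] / \log (1 + 1 / u) \to 1$ as $u \to 0$ by Lemma~\ref{lem:fai}(i),(iii), so that the scaling fact $\Psi (r x) / \Psi (x) \to r^c$ from the Preliminaries gives $\fai (r u) / \fai (u) \to r$ as $u \to 0$ for each $r > 0$; and because $\fai $ is increasing, so a subsequence argument excludes any limit value of $w / w_0 $ other than $v$. Granting $w / w_0 \to v$, we have $s / s_0 = w / w_0 $ bounded, hence $\fai ' (w_0 ) / \fai ' (w) \to 1$ (as $\fai ' (u) \sim \log (1 / u)$ while $\log (1 / w) - \log (1 / w_0 ) \to - \log v$) and $\{ 1 + \log (1 + 1 / s_0 ) \} / \{ 1 + \log (1 + 1 / s) \} \to 1$; therefore the displayed ratio tends to $v^{j + \al } / v^{j + \al } = 1$, proving part~(i).

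For part~(ii) I would take $h_{j, \ep }(v) = \sup_{y > \max \{ 1, v \} } v^{j + \al } e^{- v} f_{j, \ep }(v; y) / g_j(v; y)$, with the ratio $0$ when $y \leqslant v$, and show it is finite everywhere and integrable. Keeping exact rather than asymptotic factors above gives
\begin{align*}
v^{j + \al } e^{- v} {f_{j, \ep }(v; y) \over g_j(v; y)} &= e^{- v} {\fai ' (w_0 ) \over \fai ' (w)} \Big( {w \over w_0 } \Big) ^{j + \al } \Big( {1 + \log (1 + 1 / s_0 ) \over 1 + \log (1 + 1 / s)} \Big) ^{1 + \ga } \\
&\quad \times {\pi (s) \{ 1 + \log (1 + 1 / s) \} ^{1 + \ga } \over C s^{\al - 1}} {\Ga (s + \de + 1) \over \Ga ( \de + 1) \Ga (s + 1)} {1 \over (1 + w)^{\de + 1}} \Big( {\de \over \de + s} \Big) ^j e^{- \ep s},
\end{align*}
and I would bound each factor uniformly in $y$, splitting into $v / y < 1 / 2$ and $v / y \geqslant 1 / 2$. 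In the first regime $w \leqslant \fai ^{- 1}(1 / 2)$ and $s \leqslant v / \log \{ 1 + 1 / \fai ^{- 1}(1 / 2) \} $ are bounded, the normalized prior $\pi (s) \{ 1 + \log (1 + 1 / s) \} ^{1 + \ga } / (C s^{\al - 1})$ is bounded on $(0, \infty )$ by \eqref{eq:C2} and \eqref{eq:C3}, elementary Gamma-ratio bounds give $\Ga (s + \de + 1) / \Ga (s + 1) \leqslant C_{\de }(1 + s)^{\de }$, and Lemma~\ref{lem:fai}(v),(vi),(vii) control $\fai ' (w_0 ) / \fai ' (w)$, $w / w_0 $ and $s / s_0 $ (parts (vi),(vii) applying exactly when $y > \max \{ 1, 2 v \} $, resp.\ $v > 1$ and $y > 2 v$; for $v \leqslant 1$ one has trivially $w < w_0 $), which yields a bound of the form $C (1 + v)^{\de } e^{- v}$. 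In the second regime $v$ and $y$ are comparable and $> 1 / 2$ but $w$ and $s = y w$ may be large; there the power-law tail decay of $\pi $ from \eqref{eq:C2}, the bound $1 / \fai ' (w) \leqslant 2 (1 + w)^2$ from Lemma~\ref{lem:fai}(v), the identities $\log (1 + 1 / w_0 ) = 1 / s_0 $ and $\fai ' (w_0 ) \leqslant 1 / s_0 $, and the factor $e^{- \ep s}$ (with $s \geqslant v w \geqslant w / 2$) together absorb the at most polynomial growth of $(1 + s)^{\de }$ and $(1 + w)^{1 - \de }$, again giving a bound $C (1 + v)^m e^{- v}$; both bounds are integrable on $(0, \infty )$, and the same estimates give $h_{j, \ep }(v) < \infty $ for each $v$ (which reduces to $f_{j, \ep }(v; y) \to 0$ as $y \downarrow v$ when $v \geqslant 1$).

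The step I expect to be the main obstacle is this second regime, $v / y$ close to $1$: there $w = \fai ^{- 1}(v / y) \to \infty $ and $s = y w \to \infty $ simultaneously, and one must show that the growth $1 / \fai ' (w) \asymp (1 + w)^2$ together with $\Ga (s + \de + 1) / \Ga (s) \asymp s^{\de + 1} = (y w)^{\de + 1}$ is exactly compensated by the factor $(1 + w)^{- ( \de + 1)}$, by the power-law tail of $\pi (s)$, and by $e^{- \ep s}$, and that this compensation is uniform in $v$ as well. The bookkeeping is delicate precisely because the two large parameters $w$ and $s$ are tied together only through $s = y w$ and the identity $\log (1 + 1 / w) = v / s$.
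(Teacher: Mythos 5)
Your proposal is correct and follows essentially the same route as the paper's proof: the same exact factorization of $v^{j+\al}e^{-v}f_{j,\ep}(v;y)/g_j(v;y)$ into prior, Gamma-ratio, $\varphi$-ratio and exponential factors, the same use of Lemma \ref{lem:fai}(iv)--(vii) together with conditions \eqref{eq:C2}--\eqref{eq:C3} and the Stirling bounds, and the same three effective regimes ($v\le 1$, $1<v<y/2$, $v\ge y/2$). The regime you flag as the main obstacle ($v/y$ bounded away from $0$) is resolved in the paper exactly by the mechanism you describe: $e^{-\ep y\varphi^{-1}(v/y)}$ is split into three factors bounded respectively by $e^{-\ep\varphi^{-1}(v/y)/3}$, $e^{-\ep y\varphi^{-1}(v/y)/3}$ and $e^{-\ep y\varphi^{-1}(1/2)/3}$, which absorb the polynomial growth in $\varphi^{-1}(v/y)$, in $y\varphi^{-1}(v/y)$, and in $y$ (including $\{\varphi^{-1}(1/y)\}^{-(j+\al)}$ and the logarithmic factors) uniformly, so there is no gap.
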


\begin{proof}%
For part (i), 
\begin{align}
f_{j, \ep } (v; y) &\sim {\pi (y \varphi ^{- 1} (v / y)) \over {\varphi }' ( \varphi ^{- 1} (v / y))} \Ga ( \de + 1) {\varphi ^{- 1} (v / y) \over y^{\de }} e^{- v} {\{ y \varphi ^{- 1} (v / y) \} ^j \over \de ^j} \non \\
&\sim {1 \over {\varphi }' ( \varphi ^{- 1} (v / y))} {C \{ y \varphi ^{- 1} (v / y) \} ^{\al - 1} \over (1 + \log [1 + 1 / \{ y \varphi ^{- 1} (v / y) \} ])^{1 + \ga }} \Ga ( \de + 1) {\varphi ^{- 1} (v / y) \over y^{\de }} e^{- v} {\{ y \varphi ^{- 1} (v / y) \} ^j \over \de ^j} \non \\
&\sim {\Ga ( \de + 1) C y^{j + \al - \de - 1} \over \de ^j {\varphi }' ( \varphi ^{- 1} (v / y))} {\{ \varphi ^{- 1} (v / y) \} ^{j + \al } \over [1 + \log \{ 1 + (v / y) / \varphi ^{- 1} (v / y) \} ]^{1 + \ga }} e^{- v} \non 
\end{align}
as $y \to \infty $ by part (iv) of Lemma \ref{lem:fai}, condition (4) in Section 3, and the fact that $1 + \log (1 + 1 / u)$ is a slowly-varying function of $u \to 0$. 
Note that 
\begin{align}
{u {\varphi }' (u) \over \varphi (u)} \to 1 \non 
\end{align}
as $u \to 0$. 
Then 
\begin{align}
{t ( \varphi ^{- 1} )' (t) \over \varphi ^{- 1} (t)} = {\varphi ( \varphi ^{- 1} (t)) \over {\varphi }' ( \varphi ^{- 1} (t)) \varphi ^{- 1} (t)} \to 1 \non 
\end{align}
and 
\begin{align}
{t ( {\varphi }' \circ \varphi ^{- 1} )' (t) \over ( {\varphi }' \circ \varphi ^{- 1} ) (t)} = {{\varphi }^{- 1} (t) {\varphi }'' ( {\varphi }^{- 1} (t)) / {\varphi }' ( {\varphi }^{- 1} (t)) \over {\varphi }^{- 1} (t) {\varphi }' ( {\varphi }^{- 1} (t)) / \varphi ( {\varphi }^{- 1} (t))} \to 0
\end{align}
as $t \to 0$. 
Also, 
\begin{align}
\lim_{t \to 0} {t {\tilde{\fai }}' (t) \over \tilde{\fai } (t)} &= \lim_{t \to 0} {[t / \{ 1 + t / {\varphi }^{- 1} (t) \} ] (1 / {\varphi }^{- 1} (t) - [t / \{ {\varphi }^{- 1} (t) \} ^2 ] / {\varphi }' ( {\varphi }^{- 1} (t))) \over 1 + \log \{ 1 + t / {\varphi }^{- 1} (t) \} } \non \\
&= \lim_{u \to 0} {\varphi (u) / u \over 1 + \varphi (u) / u} {1 - 1 / [u {\varphi }' (u) / \varphi (u)] \over 1 + \log (1 + \varphi (u) / u)} = 0 \text{,} \non 
\end{align}
where $\tilde{\fai } (t) = 1 + \log (1 + t / {\varphi }^{- 1} (t))$ for $t \in (0, 1)$. 
Therefore, 
\begin{align}
&\varphi ^{- 1} (v / y) \sim v \varphi ^{- 1} (1 / y) \text{,} \non \\
&{\varphi }' ( \varphi ^{- 1} (v / y)) \sim {\varphi }' ( \varphi ^{- 1} (1 / y)) \text{,} \quad \text{and} \non \\
&1 + \log \{ 1 + (v / y) / \varphi ^{- 1} (v / y) \} \sim 1 + \log \{ 1 + (1 / y) / \varphi ^{- 1} (1 / y) \} \non 
\end{align}
as $y \to \infty $. 
Thus, 
\begin{align}
f_{j, \ep } (v; y) &\sim {\Ga ( \de + 1) C y^{j + \al - \de - 1} \over {\de }^j {\varphi }' ( \varphi ^{- 1} (1 / y))} {\{ v \varphi ^{- 1} (1 / y) \} ^{j + \al } \over [1 + \log \{ 1 + (1 / y) / \varphi ^{- 1} (1 / y) \} ]^{1 + \ga }} e^{- v} = g_j (v; y) \text{.} \non 
\end{align}

For part (ii), we assume that $y$ is sufficiently large and we consider the ratio 
\begin{align}
{f_{j, \ep } (v; y) \over g_j (v; y)} %
&= 1(0 < v < y) {{\varphi }' ( \varphi ^{- 1} (1 / y)) \over {\varphi }' ( \varphi ^{- 1} (v / y))} \pi (y \varphi ^{- 1} (v / y)) {\Ga (y \varphi ^{- 1} (v / y) + \de + 1) \over \Ga (y \varphi ^{- 1} (v / y)) y^{\de + 1} \{ 1 + \varphi ^{- 1} (v / y) \} ^{\de + 1}} \non \\
&\quad \times \Big\{ {y \varphi ^{- 1} (v / y) \over \de + y \varphi ^{- 1} (v / y)} \Big\} ^j e^{- \ep y \varphi ^{- 1} (v / y)} {\de ^j \over \Ga ( \de + 1)} {1 / v^{j + \al } \over C y^{j + \al - \de - 1}} {[1 + \log \{ 1 + (1 / y) / \varphi ^{- 1} (1 / y) \} ]^{1 + \ga } \over \{ \varphi ^{- 1} (1 / y) \} ^{j + \al }} \text{.} \non 
\end{align}
Since 
\begin{align}
(2 \pi )^{1 / 2} x^{x - 1 / 2} e^{- x} \le \Ga (x) \le (2 \pi )^{1 / 2} x^{x - 1 / 2} e^{- x} e^{1 / (12 x)} \non 
\end{align}
for all $x \in (0, \infty )$, 
\begin{align}
&{\Ga (y \varphi ^{- 1} (v / y) + \de + 1) \over \Ga (y \varphi ^{- 1} (v / y)) y^{\de + 1} \{ 1 + \varphi ^{- 1} (v / y) \} ^{\de + 1}} = {\Ga (y \varphi ^{- 1} (v / y) + \de + 1) \varphi ^{- 1} (v / y) \over \Ga (y \varphi ^{- 1} (v / y) + 1) y^{\de } \{ 1 + \varphi ^{- 1} (v / y) \} ^{\de + 1}} \non \\
&\le {\{ y \varphi ^{- 1} (v / y) + \de + 1 \} ^{y \varphi ^{- 1} (v / y) + \de + 1 - 1 / 2} e^{- \{ y \varphi ^{- 1} (v / y) + \de + 1 \} } \over \{ y \varphi ^{- 1} (v / y) + 1 \} ^{y \varphi ^{- 1} (v / y) + 1 - 1 / 2} e^{- \{ y \varphi ^{- 1} (v / y) + 1 \} }} {\exp (1 / [12 \{ y \varphi ^{- 1} (v / y) + \de + 1 \} ]) \over y^{\de } \{ 1 + \varphi ^{- 1} (v / y) \} ^{\de + 1} / \varphi ^{- 1} (v / y)} \non \\
&= \frac{ \displaystyle \Big\{ 1 + {\de + 1 \over y \varphi ^{- 1} (v / y)} \Big\} ^{{y \varphi ^{- 1} (v / y) \over \de + 1} ( \de + 1)} }{ \displaystyle \Big\{ 1 + {1 \over y \varphi ^{- 1} (v / y)} \Big\} ^{y \varphi ^{- 1} (v / y)} } {\{ y \varphi ^{- 1} (v / y) + \de + 1 \} ^{\de + 1 / 2} \over \{ y \varphi ^{- 1} (v / y) + 1 \} ^{1 / 2} e^{\de } y^{\de }} {\exp (1 / [12 \{ y \varphi ^{- 1} (v / y) + \de + 1 \} ]) \over \{ 1 + \varphi ^{- 1} (v / y) \} ^{\de + 1} / \varphi ^{- 1} (v / y)} \non \\
&\le {\{ y \varphi ^{- 1} (v / y) + \de + 1 \} ^{\de + 1 / 2} \over \{ y \varphi ^{- 1} (v / y) + 1 \} ^{1 / 2} y^{\de }} {\exp (1 + 1 / [12 \{ y \varphi ^{- 1} (v / y) + \de + 1 \} ]) \over \{ 1 + \varphi ^{- 1} (v / y) \} ^{\de + 1} / \varphi ^{- 1} (v / y)} \non \\
&\le {\{ y \varphi ^{- 1} (v / y) + \de + 1 \} ^{\de + 1 / 2} \over \{ y \varphi ^{- 1} (v / y) + 1 \} ^{1 / 2} y^{\de }} {\varphi ^{- 1} (v / y) \over \{ 1 + \varphi ^{- 1} (v / y) \} ^{\de + 1}} e^2 \text{,} \non 
\end{align}
where the second inequality follows from the fact that $1 \le (1 + 1 / x)^x \le e$ for all $x > 0$. 
Therefore, 
\begin{align}
{f_{j, \ep } (v; y) \over g_j (v; y)} &\le %
1(0 < v < y) {{\varphi }' ( \varphi ^{- 1} (1 / y)) \over {\varphi }' ( \varphi ^{- 1} (v / y))} \pi (y \varphi ^{- 1} (v / y)) \non \\
&\quad \times {\{ y \varphi ^{- 1} (v / y) + \de + 1 \} ^{\de + 1 / 2} \over \{ y \varphi ^{- 1} (v / y) + 1 \} ^{1 / 2}} {\varphi ^{- 1} (v / y) \over \{ 1 + \varphi ^{- 1} (v / y) \} ^{\de + 1}} e^2 \non \\
&\quad \times \Big\{ {y \varphi ^{- 1} (v / y) \over \de + y \varphi ^{- 1} (v / y)} \Big\} ^j e^{- \ep y \varphi ^{- 1} (v / y)} {{\de }^j \over \Ga ( \de + 1)} {1 / v^{j + \al } \over C y^{j + \al - 1}} {[1 + \log \{ 1 + (1 / y) / \varphi ^{- 1} (1 / y) \} ]^{1 + \ga } \over \{ \varphi ^{- 1} (1 / y) \} ^{j + \al }} \text{.} \non 
\end{align}
First, suppose that $v > y / 2$. 
Then $v / y > 1 / 2$ and $y \varphi ^{- 1} (v / y) \geqslant y \varphi ^{- 1} (1 / 2) \geqslant \de + 1 \geqslant 1$. 
Therefore, by parts (i) and (v) of Lemma \ref{lem:fai} and by condition (3) in Section 3, 
\begin{align}
{f_{j, \ep } (v; y) \over g_j (v; y)} &\le 1(0 < v < y) {{\varphi }' ( \varphi ^{- 1} (1 / y)) \over {\varphi }' ( \varphi ^{- 1} (v / y))} \pi (y \varphi ^{- 1} (v / y)) \non \\
&\quad \times 2^{\de + 1 / 2} {\{ y \varphi ^{- 1} (v / y) \} ^{\de + 1 / 2} \over \{ y \varphi ^{- 1} (v / y) \} ^{1 / 2}} {\varphi ^{- 1} (v / y) \over 1 + \varphi ^{- 1} (v / y)} e^2 \non \\
&\quad \times e^{- \ep y \varphi ^{- 1} (v / y)} {\de + 1 \over \Ga ( \de + 1)} {1 / v^{j + \al } \over C y^{j + \al - 1}} {[1 + \log \{ 1 + (1 / y) / \varphi ^{- 1} (1 / y) \} ]^{1 + \ga } \over \{ \varphi ^{- 1} (1 / y) \} ^{j + \al }} \non \\
&\le M_1 1(0 < v < y) \Big[ \log \Big\{ 1 + {1 \over \varphi ^{- 1} (1 / y)} \Big\} \Big] \{ 1 + \varphi ^{- 1} (v / y) \} ^2 {1 \over y \varphi ^{- 1} (v / y)} e^{- \ep \varphi ^{- 1} (v / y) / 3} \times y \non \\
&\quad \times 2^{\de + 1 / 2} \{ y \varphi ^{- 1} (v / y) \} ^{\de } e^2 e^{- \ep y \varphi ^{- 1} (v / y) / 3} \non \\
&\quad \times e^{- \ep y \varphi ^{- 1} (1 / 2) / 3} {1 / v^{j + \al } \over y^{j + \al - 1}} {[1 + \log \{ 1 + (1 / y) / \varphi ^{- 1} (1 / y) \} ]^{1 + \ga } \over \{ \varphi ^{- 1} (1 / y) \} ^{j + \al }} \times {1 \over y} \non \\
&\le M_2 1(0 < v < y) \Big[ \log \Big\{ 1 + {1 \over \varphi ^{- 1} (1 / y)} \Big\} \Big] \non \\
&\quad \times e^{- \ep y \varphi ^{- 1} (1 / 2) / 3} {1 / v^{j + \al } \over y^{j + \al }} {[1 + \log \{ 1 + (1 / y) / \varphi ^{- 1} (1 / y) \} ]^{1 + \ga } \over \{ \varphi ^{- 1} (1 / y) \} ^{j + \al }} \non 
\end{align}
for some $M_1 , M_2 > 0$. %
Thus, 
\begin{align}
{f_{j, \ep } (v; y) \over g_j (v; y)} &\le {M_2 \over v^{j + \al }} {e^{- \ep y \varphi ^{- 1} (1 / 2) / 3} \over y^{j + \al }} {[1 + \log \{ 1 + (1 / y) / \varphi ^{- 1} (1 / y) \} ]^{1 + \ga } \over \{ \varphi ^{- 1} (1 / y) \} ^{j + \al + 1}} \non \\
&\le {M_2 \over v^{j + \al }} {[1 + \log \{ 1 + (1 / y) / \varphi ^{- 1} (1 / y) \} ]^{1 + \ga } \over e^{{\ep }' y} \{ \varphi ^{- 1} (1 / y) \} ^{j + \al + 1}} \non \\
&\le {M_2 \over v^{j + \al }} \exp \Big( - \Big[ {{\ep }' \over 1 / y} - (j + \al + 2 + \ga ) \log \Big\{1 + {1 \over \varphi ^{- 1} (1 / y)} \Big\} \Big] \Big) \le {M_2 \over v^{j + \al }} \non 
\end{align}
for ${\ep }' = \ep \varphi ^{- 1} (1 / 2) / 3$, where the third inequality follows since 
\begin{align}
{1 \over \varphi ^{- 1} (1 / y)}, 1 + \log \Big\{ 1 + {1 / y \over \varphi ^{- 1} (1 / y)} \Big\} \le 1 + {1 \over \varphi ^{- 1} (1 / y)} \non 
\end{align}
and the last inequality follows since 
\begin{align}
{{\ep }' \over 1 / y} &= {{\ep }' \over \varphi ^{- 1} (1 / y) \log \{ 1 + 1 / \varphi ^{- 1} (1 / y) \} } \geqslant (j + \al + 2 + \ga ) \log \Big\{ 1 + {1 \over \varphi ^{- 1} (1 / y)} \Big\} \non 
\end{align}
for sufficiently large $y > 0$. 
Hence, $v^{j + \al } e^{- v} f_{j, \ep } (v; y) / g_j (v; y) \le M_2 e^{- v}$. 
Next, suppose that $1 < v < y / 2$. 
Then 
\begin{align}
{f_{j, \ep } (v; y) \over g_j (v; y)} &\le 1(0 < v < y) {\pi (y \varphi ^{- 1} (v / y)) \over C \{ y \varphi ^{- 1} (v / y) \} ^{\al - 1} / (1 + \log [1 + 1 / \{ y \varphi ^{- 1} (v / y) \} ])^{1 + \ga }} \non \\
&\quad \times (C y^{\al - 1} \{ \varphi ^{- 1} (v / y) \} ^{\al - 1} / [1 + \log \{ 1 + (1 / y) / \varphi ^{- 1} (v / y) \} ]^{1 + \ga } ) \non \\
&\quad \times {{\varphi }' ( \varphi ^{- 1} (1 / y)) \over {\varphi }' ( \varphi ^{- 1} (v / y))} {\{ y \varphi ^{- 1} (v / y) + \de + 1 \} ^{\de + 1 / 2} \over \{ y \varphi ^{- 1} (v / y) + 1 \} ^{1 / 2}} {\varphi ^{- 1} (v / y) \over \{ 1 + \varphi ^{- 1} (v / y) \} ^{\de + 1}} e^2 \non \\
&\quad \times \Big\{ {y \varphi ^{- 1} (v / y) \over \de + y \varphi ^{- 1} (v / y)} \Big\} ^j e^{- \ep y \varphi ^{- 1} (v / y)} {\de ^j \over \Ga ( \de + 1)} {1 / v^{j + \al } \over C y^{j + \al - 1}} {[1 + \log \{ 1 + (1 / y) / \varphi ^{- 1} (1 / y) \} ]^{1 + \ga } \over \{ \varphi ^{- 1} (1 / y) \} ^{j + \al }} \non \\
&\le 1(0 < v < y) {\pi (y \varphi ^{- 1} (v / y)) \over C \{ y \varphi ^{- 1} (v / y) \} ^{\al - 1} / (1 + \log [1 + 1 / \{ y \varphi ^{- 1} (v / y) \} ])^{1 + \ga }} \non \\
&\quad \times (C \{ \varphi ^{- 1} (v / y) \} ^{\al - 1} / [1 + \log \{ 1 + (1 / y) / \varphi ^{- 1} (v / y) \} ]^{1 + \ga } ) \non \\
&\quad \times {{\varphi }' ( \varphi ^{- 1} (1 / y)) \over {\varphi }' ( \varphi ^{- 1} (v / y))} {\{ y \varphi ^{- 1} (v / y) + \de + 1 \} ^{\de + 1 / 2} \over \{ y \varphi ^{- 1} (v / y) + 1 \} ^{1 / 2}} {\varphi ^{- 1} (v / y) \over \{ 1 + \varphi ^{- 1} (v / y) \} ^{\de + 1}} e^2 \non \\
&\quad \times {\{ \varphi ^{- 1} (v / y) \} ^j \over \Ga ( \de + 1)} e^{- \ep y \varphi ^{- 1} (v / y)} {1 / v^{j + \al } \over C} {[1 + \log \{ 1 + (1 / y) / \varphi ^{- 1} (1 / y) \} ]^{1 + \ga } \over \{ \varphi ^{- 1} (1 / y) \} ^{j + \al }} \non \\
&= 1(0 < v < y) {\pi (y \varphi ^{- 1} (v / y)) \over C \{ y \varphi ^{- 1} (v / y) \} ^{\al - 1} / (1 + \log [1 + 1 / \{ y \varphi ^{- 1} (v / y) \} ])^{1 + \ga }} \non \\
&\quad \times \Big[ {1 + \log \{ 1 + (1 / y) / \varphi ^{- 1} (1 / y) \} \over 1 + \log \{ 1 + (1 / y) / \varphi ^{- 1} (v / y) \} } \Big] ^{1 + \ga } {{\varphi }' ( \varphi ^{- 1} (1 / y)) \over {\varphi }' ( \varphi ^{- 1} (v / y))} \Big\{ {\varphi ^{- 1} (v / y) \over \varphi ^{- 1} (1 / y)} \Big\} ^{j + \al } \non \\
&\quad \times {\{ y \varphi ^{- 1} (v / y) + \de + 1 \} ^{\de + 1 / 2} \over \{ y \varphi ^{- 1} (v / y) + 1 \} ^{1 / 2}} {1 \over \{ 1 + \varphi ^{- 1} (v / y) \} ^{\de + 1}} e^2 {1 \over \Ga ( \de + 1)} e^{- \ep y \varphi ^{- 1} (v / y)} {1 \over v^{j + \al }} \non \\
&\le 1(0 < v < y) {\pi (y \varphi ^{- 1} (v / y)) \over C \{ y \varphi ^{- 1} (v / y) \} ^{\al - 1} / (1 + \log [1 + 1 / \{ y \varphi ^{- 1} (v / y) \} ])^{1 + \ga }} \non \\
&\quad \times \Big[ {1 + \log \{ 1 + (1 / y) / \varphi ^{- 1} (1 / y) \} \over 1 + \log \{ 1 + (1 / y) / \varphi ^{- 1} (v / y) \} } \Big] ^{1 + \ga } {{\varphi }' ( \varphi ^{- 1} (1 / y)) \over {\varphi }' ( \varphi ^{- 1} (v / y))} \Big\{ {\varphi ^{- 1} (v / y) \over \varphi ^{- 1} (1 / y)} \Big\} ^{j + \al } \non \\
&\quad \times \{ y \varphi ^{- 1} (v / y) + \de + 1 \} ^{\de } {( \de + 1)^{1 / 2} e^2 \over \Ga ( \de + 1)} e^{- \ep y \varphi ^{- 1} (v / y)} {1 \over v^{j + \al }} \text{.} \non 
\end{align}
Here, by conditions (3) and (4) in Section 3, 
\begin{align}
&1(0 < v < y) {\pi (y \varphi ^{- 1} (v / y)) \over C \{ y \varphi ^{- 1} (v / y) \} ^{\al - 1} / (1 + \log [1 + 1 / \{ y \varphi ^{- 1} (v / y) \} ])^{1 + \ga }} \non \\
&\le 1(0 < v < y) \Big[ 1(y \varphi ^{- 1} (v / y) < 1) M_3 + {1(y \varphi ^{- 1} (v / y) \geqslant 1) M_4 / \{ y \varphi ^{- 1} (v / y) \} \over C \{ y \varphi ^{- 1} (v / y) \} ^{\al - 1} / (1 + \log [1 + 1 / \{ y \varphi ^{- 1} (v / y) \} ])^{1 + \ga }} \Big] \le M_5 \non 
\end{align}
for some $M_3 , M_4 , M_5 > 0$. 
Also, by parts (vi) and (vii) of Lemma \ref{lem:fai}, 
\begin{align}
&\Big[ {1 + \log \{ 1 + (1 / y) / \varphi ^{- 1} (1 / y) \} \over 1 + \log \{ 1 + (1 / y) / \varphi ^{- 1} (v / y) \} } \Big] ^{1 + \ga } {{\varphi }' ( \varphi ^{- 1} (1 / y)) \over {\varphi }' ( \varphi ^{- 1} (v / y))} \Big\{ {\varphi ^{- 1} (v / y) \over \varphi ^{- 1} (1 / y)} \Big\} ^{j + \al } \non \\
&\le \Big[ {1 + \log \{ 1 + (1 / y) / \varphi ^{- 1} (1 / y) \} \over 1 + \log \{ 1 + (1 / y) / \varphi ^{- 1} (v / y) \} } \Big] ^{1 + \ga } {1 \over v} {\varphi ^{- 1} (v / y) \over \varphi ^{- 1} (1 / y)} {1 \over 1 - c_1} \Big\{ {\varphi ^{- 1} (v / y) \over \varphi ^{- 1} (1 / y)} \Big\} ^{j + \al } \non \\
&\le \Big[ {1 + \log \{ 1 + (1 / y) / \varphi ^{- 1} (1 / y) \} \over 1 + \log \{ 1 + (1 / y) / \varphi ^{- 1} (v / y) \} } \Big] ^{1 + \ga } {1 \over v} {1 \over 1 - c_1} v^{c_2 (j + \al + 1)} \non \\
&\le \Big\{ 1 + \log {\varphi ^{- 1} (v / y) \over \varphi ^{- 1} (1 / y)} \Big\} ^{1 + \ga } {1 \over v} {1 \over 1 - c_1} v^{c_2 (j + \al + 1)} \le (1 + c_2 \log v)^{1 + \ga } {1 \over v} {1 \over 1 - c_1} v^{c_2 (j + \al + 1)} \text{,} \non 
\end{align}
where the third inequality follows since 
\begin{align}
{1 + \log \{ 1 + (1 / y) / \varphi ^{- 1} (1 / y) \} \over 1 + \log \{ 1 + (1 / y) / \varphi ^{- 1} (v / y) \} } &\le 1 + \log {1 + (1 / y) / \varphi ^{- 1} (1 / y) \over 1 + (1 / y) / \varphi ^{- 1} (v / y)} \le 1 + \log {\varphi ^{- 1} (v / y) \over \varphi ^{- 1} (1 / y)} \non 
\end{align}
by the assumption that $v > 1$. 
Furthermore, 
\begin{align}
\{ y \varphi ^{- 1} (v / y) + \de + 1 \} ^{\de } {( \de + 1)^{1 / 2} e^2 \over \Ga ( \de + 1)} e^{- \ep y \varphi ^{- 1} (v / y)} {1 \over v^{j + \al }} \le {M_6 \over v^{j + \al }} \non 
\end{align}
for some $M_6 > 0$. 
Thus, 
\begin{align}
v^{j + \al } e^{- v} {f_{j, \ep } (v; y) \over g_j (v; y)} &\le v^{j + \al } e^{- v} M_5 (1 + c_2 \log v)^{1 + \ga } {1 \over v} {1 \over 1 - c_1} v^{c_2 (j + \al + 1)} {M_6 \over v^{j + \al }} \non \\
&\le M_7 (1 + c_2 \log v)^{1 + \ga } v^{c_2 (j + \al + 1) - 1} e^{- v} \non 
\end{align}
for some $M_7 > 0$. 
Finally, suppose that $0 < v < 1$. 
Then ${\varphi }' ( \varphi ^{- 1} (1 / y)) / {\varphi }' ( \varphi ^{- 1} (v / y)) \le 1$ by part (ii) of Lemma \ref{lem:fai}. 
Also, $y \varphi ^{- 1} (v / y) \le y \varphi ^{- 1} (1 / y) \le 1$ by part (iv) of Lemma \ref{lem:fai} since $y$ is sufficiently large. 
Therefore, by condition (4) in Section 3, 
\begin{align}
{f_{j, \ep } (v; y) \over g_j (v; y)} &\le M_8 {\{ y \varphi ^{- 1} (v / y) \} ^{\al - 1} \over (1 + \log [1 + 1 / \{ y \varphi ^{- 1} (v / y) \} ])^{1 + \ga }} \non \\
&\quad \times {\{ y \varphi ^{- 1} (v / y) + \de + 1 \} ^{\de + 1 / 2} \over \{ y \varphi ^{- 1} (v / y) + 1 \} ^{1 / 2}} {\varphi ^{- 1} (v / y) \over \{ 1 + \varphi ^{- 1} (v / y) \} ^{\de + 1}} \non \\
&\quad \times {\{ \varphi ^{- 1} (v / y) \} ^j \over \{ \de + y \varphi ^{- 1} (v / y) \} ^j} e^{- \ep y \varphi ^{- 1} (v / y)} {1 / v^{j + \al } \over y^{\al - 1}} {[1 + \log \{ 1 + (1 / y) / \varphi ^{- 1} (1 / y) \} ]^{1 + \ga } \over \{ \varphi ^{- 1} (1 / y) \} ^{j + \al }} \non \\
&= M_8 {[1 + \log \{ 1 + (1 / y) / \varphi ^{- 1} (1 / y) \} ]^{1 + \ga } \over (1 + \log [1 + 1 / \{ y \varphi ^{- 1} (v / y) \} ])^{1 + \ga }} \non \\
&\quad \times {\{ y \varphi ^{- 1} (v / y) + \de + 1 \} ^{\de + 1 / 2} \over \{ y \varphi ^{- 1} (v / y) + 1 \} ^{1 / 2}} {1 \over \{ 1 + \varphi ^{- 1} (v / y) \} ^{\de + 1}} \non \\
&\quad \times {1 \over \{ \de + y \varphi ^{- 1} (v / y) \} ^j} e^{- \ep y \varphi ^{- 1} (v / y)} {1 \over v^{j + \al }} {\{ \varphi ^{- 1} (v / y) \} ^{j + \al } \over \{ \varphi ^{- 1} (1 / y) \} ^{j + \al }} \non \\
&\le M_8 ( \de + 2)^{\de + 1 / 2} {1 \over \de ^j} {1 \over v^{j + \al }} \non 
\end{align}
and hence $v^{j + \al } e^{- v} f_{j, \ep } (v; y) / g_j (v; y) \le M_8 ( \de + 2)^{\de + 1 / 2} e^{- v} / \de ^j$. 
This completes the proof. 
\end{proof}

\section{Proof of Theorem 1}
We now prove Theorem 1. 

\begin{proof}[Proof of Theorem 1]
Note that 
\begin{align}
p( u_i \mid y_i ) &\propto \pi ( u_i ) {\Ga ( u_i + \de _i + 1) \over \Ga ( u_i ) ( \de _i y_i + u_i )^{\de _i + 1}} \Big( {u_i \over \de _i y_i + u_i} \Big) ^{u_i} \text{.} \non 
\end{align}
Then 
\begin{align}
E( \ka _i \mid y_i ) &= {I_{1, 0} ( \de _i y_i ; \de _i ) \over I_{0, 0} ( \de _i y_i ; \de _i )} = {I_{1, \ep } ( \de _i y_i ; \de _i ) \over I_{0, \ep } ( \de _i y_i ; \de _i )} {I_{1, 0} ( \de _i y_i ; \de _i ) / I_{1, \ep } ( \de _i y_i ; \de _i ) \over I_{0, 0} ( \de _i y_i ; \de _i ) / I_{0, \ep } ( \de _i y_i ; \de _i )} \text{.} \non 
\end{align}
Therefore, by Lemma \ref{lem:approximation}, 
\begin{align}
E( \ka _i \mid y_i ) &\sim {I_{1, \ep } ( \de _i y_i ; \de _i ) \over I_{0, \ep } ( \de _i y_i ; \de _i )} \non 
\end{align}
uniformly in $y_i$ as $\ep \to 0$. 
Furthermore, by Lemmas \ref{lem:change_of_variables} and \ref{lem:g}, it follows from the dominated convergence theorem that for each $\ep > 0$, 
\begin{align}
{I_{1, \ep } ( \de _i y_i ; \de _i ) \over I_{0, \ep } ( \de _i y_i ; \de _i )} &= \frac{ \int_{0}^{\infty } f_{1, \ep } (v; \de _i y_i ) \mathrm{d}v }{ \int_{0}^{\infty } f_{0, \ep } (v; \de _i y_i ) \mathrm{d}v } \Big| _{\de = \de _i} \non \\
&= {\de _i y_i \over \de _i} {\fai }^{- 1} (1 / ( \de _i y_i )) \frac{ \int_{0}^{\infty } v^{1 + \al } e^{- v} f_{1, \ep } (v; \de _i y_i ) / g_{1} (v; \de _i y_i ) \mathrm{d}v }{ \int_{0}^{\infty } v^{\al } e^{- v} f_{0, \ep } (v; \de _i y_i ) / g_{0} (v; \de _i y_i ) \mathrm{d}v } \non \\
&\sim {\de _i y_i \over \de _i} {\fai }^{- 1} (1 / ( \de _i y_i )) (1 + \al ) \non 
\end{align}
as $y_i \to \infty $. 
Thus, 
\begin{align}
E( \ka _i \mid y_i ) &\sim {\de _i y_i \over \de _i} {\fai }^{- 1} (1 / ( \de _i y_i )) (1 + \al ) = o(1) \non 
\end{align}
as $y_i \to \infty $, where the equality follows from part (iv) of Lemma \ref{lem:fai}. 
\end{proof}

\begin{rem}
It follows from the above proof that we have $\ka ^{*} (y) = y \fai ^{- 1} (1 / y)$, $y \in (0, \infty )$. 
It can be seen that $\ka ^{*} (y) = 1 / \log y$ as $y \to \infty $. 
\begin{proof}
By part (iv) of Lemma \ref{lem:fai}, 
\begin{align}
{1 \over y \fai ^{- 1} (1 / y)} &= \log \Big\{ 1 + {1 \over \fai ^{- 1} (1 / y)} \Big\} \non 
\end{align}
for all $y > 1$. 
Therefore, by parts (iii) and (iv) of Lemma \ref{lem:fai}, 
\begin{align}
{1 \over y \fai ^{- 1} (1 / y)} \sim {1 \over y \fai ^{- 1} (1 / y)} - \log {1 \over y \fai ^{- 1} (1 / y)} &= \log y + \log \{ 1 + \fai ^{- 1} (1 / y) \} \sim \log y \non 
\end{align}
as $y \to \infty $. 
\end{proof}
\end{rem}

\section{The behavior of $E( \ka _i \mid y_i )$ as $y_i \to 0$}
We can also consider the behavior of $E( \ka _i \mid y_i ) = I_{1, 0} ( \de _i y_i ; \de _i ) / I_{0, 0} ( \de _i y_i ; \de _i )$ 
as $y_i \to 0$. 
Results for the SB and IRB priors are summarized in the following proposition. 

\begin{prp}
\label{prp:0y} 
Let $\de > 0$ and let $\kah (y) = I_{1, 0} (y; \de ) / I_{0, 0} (y; \de )$ 
for $y \in (0, \infty )$. 
\begin{itemize}
\item[{\rm{(i)}}]
Suppose that $\pi (u) = \pi _{\rm{SB}} (u)$ for $u \in (0, \infty )$. 
Then, as $y \to 0$, 
\begin{align}
\kah (y) \to \begin{cases} C_2 \in (0, \infty ) \text{,} & \text{if $\de < a$} \text{,} \\ 0 \text{,} & \text{if $\de \ge a$} \text{,} \end{cases} \non 
\end{align}
for some constant $C_2$. 
\item[{\rm{(ii)}}]
Suppose that $\pi (u) = \pi _{\rm{IRB}} (u)$ for $u \in (0, \infty )$. 
Then, as $y \to 0$, 
\begin{align}
\kah (y) \to 0 \text{.} \non 
\end{align}
\end{itemize}
\end{prp}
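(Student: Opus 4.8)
The plan is to study the numerator $I_{1,0}(y;\de)$ and the denominator $I_{0,0}(y;\de)$ of $\kah(y)$ separately as $y\to 0$, and then take their ratio. The starting observation is that, for each fixed $u>0$, the integrand appearing in $I_{j,0}(y;\de)$ converges as $y\to 0$ to
\[
\pi(u)\,\frac{\Ga(u+\de+1)}{\Ga(u)\,u^{\de+1}}\left(\frac{u}{\de+u}\right)^{\!j},
\]
since $(y+u)^{\de+1}\to u^{\de+1}$ and $(u/(y+u))^u=(1+y/u)^{-u}\to 1$; moreover, both $y$-dependent factors $1/(y+u)^{\de+1}$ and $(1+y/u)^{-u}$ increase as $y\downarrow 0$, so the convergence is monotone. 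Writing $h(u)=\pi(u)\Ga(u+\de+1)/\{\Ga(u)\,u^{\de+1}\}$ and using $(y+u)^{\de+1}\ge u^{\de+1}$ together with $(u/(y+u))^u\le 1$, the $I_{j,0}(y;\de)$ integrand is also bounded by $h(u)$ uniformly in $y\in(0,1)$. By the Gamma asymptotics in the Preliminaries, $\Ga(u+\de+1)/\Ga(u)\sim\Ga(\de+1)u$ as $u\to 0$ and $\sim u^{\de+1}$ as $u\to\infty$, so under the SB prior $h(u)$ is comparable to $u^{a-\de-1}$ near the origin and to $u^{-1-b}$ at infinity, while under the IRB prior $h(u)$ is comparable to $u^{-\de-1}$ times a negative power of $\log(1/u)$ near the origin and to $u^{-1-b}$ at infinity.

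For part (i) in the case $\de<a$, the function $h$ is integrable on $(0,\infty)$ — integrability at the origin is exactly the condition $a-\de-1>-1$ — so the dominated convergence theorem gives $I_{j,0}(y;\de)\to\int_{0}^{\infty}h(u)(u/(\de+u))^{j}\,\mathrm{d}u\in(0,\infty)$ for $j=0,1$. Hence $\kah(y)$ converges to the ratio of these two finite positive limits, which is the desired constant $C_2\in(0,\infty)$.

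For part (i) in the case $\de\ge a$, and for part (ii), the key fact is that $h$ is no longer integrable near the origin, so by monotone convergence $I_{0,0}(y;\de)\uparrow\int_0^\infty h(u)\,\mathrm{d}u=\infty$ as $y\downarrow 0$; for the IRB prior this divergence holds for every $\de>0$ because the logarithmic factors in $\pi_{\mathrm{IRB}}$ cannot offset the polynomial blow-up $u^{-\de-1}$. To control the numerator I would fix $\ep_0>0$ and split $I_{1,0}(y;\de)$ at $u=\ep_0$: on $(0,\ep_0)$ bound $u/(\de+u)\le\ep_0/\de$, giving a contribution at most $(\ep_0/\de)\,I_{0,0}(y;\de)$; on $(\ep_0,\infty)$ use $u/(\de+u)<1$ together with the bound by $h$, which is integrable on $(\ep_0,\infty)$, to get a contribution at most some $M_{\ep_0}<\infty$ independent of $y\in(0,1)$. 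Dividing by $I_{0,0}(y;\de)$ and letting $y\to 0$ yields $\limsup_{y\to 0}\kah(y)\le\ep_0/\de$, and since $\ep_0$ is arbitrary, $\kah(y)\to 0$.

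The only genuinely delicate points are (a) justifying the dominating function $h$ and, in particular, recognizing that its integrability at the origin is precisely the threshold $\de<a$ for the SB prior, which is the source of the dichotomy in part (i); and (b) confirming the divergence $\int_0^\infty h(u)\,\mathrm{d}u=\infty$ in the complementary regime — for the SB prior when $\de\ge a$, and for the IRB prior for every $\de>0$ despite the logarithmic corrections. Everything else is routine estimation with the Stirling-type inequalities recorded in the Preliminaries, and the splitting argument in the previous paragraph is robust enough to dispose of the $\de\ge a$ case of the SB prior and the entire IRB case at once.
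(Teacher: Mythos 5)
Your proof is correct, and for the hard cases it takes a genuinely different route from the paper. The two arguments agree on the easy regime: when the limiting integrand $h(u)=\pi(u)\Ga(u+\de+1)/\{\Ga(u)u^{\de+1}\}$ is integrable at the origin (for the SB prior with $\de<a$, and more generally whenever the relevant integral converges), both you and the paper just invoke monotone/dominated convergence on numerator and denominator separately; the paper phrases this as "the result follows from the monotone convergence theorem" for $\de<a+1$ (SB) and $\de\le 1$ (IRB). Where you diverge is the regime in which numerator and denominator both blow up ($\de\ge a+1$ for SB, $\de>1$ for IRB). The paper rescales $u\mapsto yu$ so that the denominator's integrand converges to the fixed integrable function $\Ga(\de+1)u^{a}/(1+u)^{\de+1}$ while the numerator's acquires a vanishing factor $yu/(\de+yu)$, and then spends most of the proof building explicit $y$-uniform dominating functions from the Stirling-type bounds in the Preliminaries (with sub-cases such as $\de\le a+b$ versus $\de>a+b$). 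You instead observe that $I_{0,0}(y;\de)\uparrow\infty$ by monotone convergence and kill the numerator by truncation: on $(0,\ep_0)$ the factor $u/(\de+u)$ is at most $\ep_0/\de$, so that piece is dominated by $(\ep_0/\de)I_{0,0}(y;\de)$, while on $(\ep_0,\infty)$ the integrand is bounded by the integrable tail of $h$, giving $\limsup_{y\to 0}\kah(y)\le\ep_0/\de$ for every $\ep_0$. This is more elementary, avoids the change of variables and the Stirling bookkeeping entirely, and treats the SB case $\de\ge a$ and the whole IRB case in one stroke (in fact it also absorbs the paper's intermediate band $a\le\de<a+1$ without a separate case). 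What the paper's rescaling buys in exchange is quantitative information: it exhibits the limiting shape of the rescaled integrands and would yield a rate for $\kah(y)\to 0$, which your $\limsup$ argument does not; but the proposition only asserts convergence, so nothing is lost for the stated result.
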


\begin{proof}
For part (i), we have by definition that 
\begin{align}
\kah (y) &= %
\int_{0}^{\infty } {u^a \over (1 + u)^{a + b}} {\Ga (u + \de + 1) \over \Ga (u + 1) (y + u)^{\de + 1}} \Big( {u \over y + u} \Big) ^u {u \over \de + u} \mathrm{d}u \non \\
&\quad / \int_{0}^{\infty } {u^a \over (1 + u)^{a + b}} {\Ga (u + \de + 1) \over \Ga (u + 1) (y + u)^{\de + 1}} \Big( {u \over y + u} \Big) ^u \mathrm{d}u \text{.} \non 
\end{align}
for $y \in (0, \infty )$. 
First, suppose that $\de < a + 1$. 
Then the result follows from the monotone convergence theorem. 
Next, suppose that $\de \ge a + 1$. 
Then 
\begin{align}
\kah (y) &= \int_{0}^{\infty } u^a \Big( {1 + y \over 1 + y u} \Big) ^{a + b} {\Ga (y u + \de + 1) \over \Ga (y u + 1) (1 + u)^{\de + 1}} \Big( {u \over 1 + u} \Big) ^{y u} {y u \over \de + y u} \mathrm{d}u \non \\
&\quad / \int_{0}^{\infty } u^a \Big( {1 + y \over 1 + y u} \Big) ^{a + b} {\Ga (y u + \de + 1) \over \Ga (y u + 1) (1 + u)^{\de + 1}} \Big( {u \over 1 + u} \Big) ^{y u} \mathrm{d}u \non 
\end{align}
for $y \in (0, \infty )$. 
Note that there exist $M_{1, 1} , \dots , M_{1, 4} > 0$ such that for all $j = 0, 1$, $y \in (0, 1)$ and $u \in (0, 1)$, 
\begin{align}
&u^a \Big( {1 + y \over 1 + y u} \Big) ^{a + b} {\Ga (y u + \de + 1) \over \Ga (y u + 1) (1 + u)^{\de + 1}} \Big( {u \over 1 + u} \Big) ^{y u} \Big( {y u \over \de + y u} \Big) ^j \non \\
&\le 1(u \le 1) M_{1, 1} u^a \Big( 1 + {1 \over 1 + u} \Big) ^{a + b} {1 \over (1 + u)^{\de + 1}} \non \\
&\quad + 1(u > 1) u^a \Big( {1 + y \over 1 + y u} \Big) ^{a + b} {e^{1 / \{ 12 (y u + \de + 1) \} } (y u + \de + 1)^{y u + \de + 1 - 1 / 2} / e^{y u + \de + 1} \over (y u + 1)^{y u + 1 - 1 / 2} / e^{y u + 1}} {1 \over (1 + u)^{\de + 1}} \text{,} \non 
\end{align}
where the second term on the right side is less than or equal to 
\begin{align}
&1(u > 1) u^a \Big( {1 + y \over 1 + y u} \Big) ^{a + b} {(y u + \de + 1)^{\de } \over (1 + u)^{\de + 1}} \Big( {y u + \de + 1 \over y u + 1} \Big) ^{y u + 1 / 2} \non \\
&\le M_{1, 2} 1(u > 1) {u^a (y u + \de + 1)^{\de } \over (1 + y u)^{a + b} (1 + u)^{\de + 1}} \Big( 1 + {\de \over y u + 1} \Big) ^{y u + 1} \non \\
&\le \begin{cases} \displaystyle M_{1, 3} 1(u > 1) {u^a \over (1 + u)^{\de + 1}} \text{,} & \text{if $\de \le a + b$} \text{,} \\ \displaystyle M_{1, 4} 1(u > 1) {u^a (u + \de + 1)^{\de - a - b} \over (1 + u)^{\de + 1}} \text{,} & \text{if $\de > a + b$} \text{.} \end{cases} \non 
\end{align}
Then, by the dominated convergence theorem, 
\begin{align}
\kah (y) &\to \int_{0}^{\infty } 0 \mathrm{d}u / \int_{0}^{\infty } {u^a \over (1 + u)^{\de + 1}} \mathrm{d}u = 0 \non 
\end{align}
as $y \to 0$. 

For part (ii), we have 
\begin{align}
\kah (y) &= %
\int_{0}^{\infty } {1 \over 1 + u} {\{ \log (1 + 1 / u) \} ^{b - 1} \over \{ 1 + \log (1 + 1 / u) \} ^{b + a}} {\Ga (u + \de + 1) \over \Ga (u + 1) (y + u)^{\de + 1}} \Big( {u \over y + u} \Big) ^u {u \over \de + u} \mathrm{d}u \non \\
&\quad / \int_{0}^{\infty } {1 \over 1 + u} {\{ \log (1 + 1 / u) \} ^{b - 1} \over \{ 1 + \log (1 + 1 / u) \} ^{b + a}} {\Ga (u + \de + 1) \over \Ga (u + 1) (y + u)^{\de + 1}} \Big( {u \over y + u} \Big) ^u \mathrm{d}u \text{.} \non 
\end{align}
Therefore, if $\de \le 1$, the result follows from the monotone convergence theorem. 
Now, suppose that $\de > 1$. 
Then 
\begin{align}
\kah (y) &= \frac{ \displaystyle \int_{0}^{\infty } {y u \pi _{IRB} (y u) \over y \pi _{IRB} (y)} {\Ga (y u + \de + 1) \over \Ga (y u + 1) (1 + u)^{\de + 1}} \Big( {u \over 1 + u} \Big) ^{y u} {y u \over \de + y u} du }{ \displaystyle \int_{0}^{\infty } {y u \pi _{IRB} (y u) \over y \pi _{IRB} (y)} {\Ga (y u + \de + 1) \over \Ga (y u + 1) (1 + u)^{\de + 1}} \Big( {u \over 1 + u} \Big) ^{y u} du } \text{.} \label{p0yp1} 
\end{align}
Note that we have for any $0 < \ep < \min \{ 1, b \} $ that 
\begin{align}
&{y u \pi _{IRB} (y u) \over y \pi _{IRB} (y)} {\Ga (y u + \de + 1) \over \Ga (y u + 1) (1 + u)^{\de + 1}} \Big( {u \over 1 + u} \Big) ^{y u} \Big( {y u \over \de + y u} \Big) ^j \non \\
&\le 2 \Big[ {\log \{ 1 + 1 / (y u) \} \over \log (1 + 1 / y)} \Big] ^{b - 1} \Big[ {1 + \log (1 + 1 / y) \over 1 + \log \{ 1 + 1 / (y u) \} } \Big] ^{b + a} {\Ga (y u + \de + 1) \over \Ga (y u + 2) (1 + u)^{\de + 1}} \non \\
&\le 1(u \le 1) 2 \Big[ {\log \{ 1 + 1 / (y u) \} \over \log (1 + 1 / y)} \Big] ^b \sup_{u' \in (0, 1)} {\Ga ( u' + \de + 1) \over \Ga ( u' + 2)} \non \\
&\quad + 1(u > 1) 2 \Big[ {\log (1 + 1 / y) \over \log \{ 1 + 1 / (y u) \} } \Big] ^{1 - \ep } {\{ 1 + \log (1 + u) \} ^{b + a} \over (1 + u)^{\de + 1}} {\Ga (y u + \de + 1) \over \Ga (y u + 2)} \non \\
&\le 1(u \le 1) 2 \Big\{ 1 + {\log (1 + 1 / u) \over \log (1 + 1 / y)} \Big\} ^b \sup_{u' \in (0, 1)} {\Ga ( u' + \de + 1) \over \Ga ( u' + 2)} \non \\
&\quad + 1(u > 1) 2 u^{1 - \ep } {\{ 1 + \log (1 + u) \} ^{b + a} \over (1 + u)^{\de + 1}} {e^{1 / \{ 12 (y u + \de + 1) \} } (y u + \de + 1)^{y u + \de + 1 - 1 / 2} / e^{y u + \de + 1} \over (y u + 2)^{y u + 2 - 1 / 2} / e^{y u + 2}} \non 
\end{align}
for all $u \in (0, 1)$ and $y \in (0, 1)$ and all $j = 0, 1$. 
Then it can be seen that there exists an integrable function of $u \in (0, \infty )$ such that it does not depend on $y$ and is greater than the integrands of $(\ref{p0yp1})$. 
Thus, by the dominated convergence theorem, $\kah (y) \to 0$ as $y \to 0$. 
This completes the proof. 
\end{proof}

\section{Posterior propriety under improper priors}
We consider the propriety of the posterior 
\begin{align}
p( \be , \ta | y) \propto \pi ( \be ) \pi ( \ta ) \prod_{i = 1}^{n} \int_{0}^{\infty } \pi ( u_i ) f_i ( \be , \ta , u_i ) d{u_i} \text{,} \non 
\end{align}
where 
\begin{align}
f_i ( \be , \ta , u_i ) &= {( \be \ta u_i )^{\ta u_i + 1} \over ( \be \ta u_i + \de _i y_i )^{\ta u_i + \de _i + 1}} {\Ga ( \ta u_i + \de _i + 1) \over \Ga ( \ta u_i + 1)} \non 
\end{align}
for $i = 1, \dots , n$. 

\begin{prp}
\label{prp:posterior_propriety} 
\hfill
\begin{itemize}
\item[{\rm{(i)}}]
The posterior is improper if $\int_{1}^{\infty } \pi ( \ta ) d\ta = \infty $. 
\item[{\rm{(ii)}}]
The posterior is improper if $\pi ( \be ) \sim 1$ as $\be \to \infty $ and if $\lim_{\ta \to 0} \pi ( \ta ) \in (0, \infty ]$. 
\item[{\rm{(iii)}}]
The posterior is improper if $\pi ( \be ) \sim \be ^{A - 1}$ as $\be \to \infty $ for some $A > 0$ and if $\pi ( u_i ) = \pi _{\rm{IRB}} ( u_i )$ for all $u_i \in (0, \infty )$.  
\item[{\rm{(iv)}}]
The posterior is proper if $\min_{1 \le i \le n} \de _i$ is moderately large, $\pi ( \be ) = \be ^{A - 1}$ for all $\be \in (0, \infty )$ for some $0 \le A \le 1$, $\pi ( \ta )$ is a gamma distribution with a moderately large shape parameter, $\pi ( u_i )$ is the SB prior, and $n$ is sufficiently large. 
\item[{\rm{(v)}}]
The posterior is proper if $\pi ( \be ) = \be ^{- 1}$ for all $\be \in (0, \infty )$, $\pi ( \ta )$ is a gamma distribution, and either $\pi ( u_i )$ is the IRB prior and $n \ge 2$ or $\pi ( u_i )$ is the SB prior. 
\end{itemize}
\end{prp}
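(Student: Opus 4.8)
I write $J_i(\be,\ta)=\int_0^\infty\pi(u_i)f_i(\be,\ta,u_i)\,\mathrm{d}u_i$, so that posterior propriety is equivalent to $I:=\int_0^\infty\int_0^\infty\pi(\be)\pi(\ta)\prod_{i=1}^nJ_i(\be,\ta)\,\mathrm{d}\be\,\mathrm{d}\ta<\infty$. Two elementary facts about $f_i$ do most of the work. First, the change of variable $w=\be\ta u_i$ turns $\int_0^\infty\be^{A-1}f_i(\be,\ta,u_i)\,\mathrm{d}\be$ into a Beta integral, equal to $(\ta u_i)^{-A}(\de_iy_i)^{A-\de_i}\Ga(\de_i-A)\,\Ga(\ta u_i+A+1)/\Ga(\ta u_i+1)$ whenever $0\le A<\de_i$; for $A=0$ it degenerates to the constant $(\de_iy_i)^{-\de_i}\Ga(\de_i)$, free of $u_i$ and $\ta$. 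Second, $f_i(\be,\ta,u_i)$ is maximised in $\be$ at $\be^{*}=(\ta u_i+1)y_i/(\ta u_i)$, and Stirling's inequalities give $\sup_{\be>0}f_i(\be,\ta,u_i)=y_i^{-\de_i}g_i(\ta u_i)\le C_i$, a constant depending only on $\de_i,y_i$ (here $g_i(s)=(s+1)^{s+1}(s+\de_i+1)^{-(s+\de_i+1)}\Ga(s+\de_i+1)/\Ga(s+1)$ is bounded on $[0,\infty)$, with limits $\Ga(\de_i+1)(\de_i+1)^{-(\de_i+1)}$ at $0$ and $e^{-\de_i}$ at $\infty$), so also $\sup_{\be>0}J_i(\be,\ta)\le C_i$. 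For later use I record the boundary shapes of $f_i$: $f_i\sim(\be\ta u_i)^{-\de_i}\Ga(\ta u_i+\de_i+1)/\Ga(\ta u_i+1)$ as $\be\to\infty$; $f_i(\be,\ta,u_i)\le\be\,\Phi_i(\ta u_i)$ for $0<\be\le1$ with $\Phi_i(s)=s^{s+1}(\de_iy_i)^{-s-\de_i-1}\Ga(s+\de_i+1)/\Ga(s+1)$; and, for $\ta u_i$ small, $f_i(\be,\ta,u_i)\sim\Ga(\de_i+1)\,\be\ta u_i\,(\de_iy_i)^{-\de_i-1}$.

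\textbf{Impropriety, parts (i)--(iii).} In each case I locate a subregion of $(\be,\ta)$-space on which the integrand is bounded below by a non-integrable function, after first shrinking the inner $u_i$-integrals. For (i): pick a compact box $\be\in[b_1,b_2]$, $u_i\in[u_1,u_2]$ so that $\pi$ has positive mass on $[u_1,u_2]$ and $\pi(\be)$ is bounded below on $[b_1,b_2]$; since $f_i(\be,\ta,u_i)\to\be^{-\de_i}e^{-\de_iy_i/\be}$ uniformly on the box as $\ta\to\infty$, one gets $\prod_iJ_i(\be,\ta)\ge c>0$ for all large $\ta$, whence $I\ge c'\int_1^\infty\pi(\ta)\,\mathrm{d}\ta=\infty$. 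For (ii): bound $I\ge\tfrac12\int_0^{\ep}\pi(\ta)\big[\int_R^\infty\prod_iJ_i(\be,\ta)\,\mathrm{d}\be\big]\mathrm{d}\ta$ using $\pi(\be)\ge\tfrac12$ at infinity, expand $\prod_iJ_i$ as an $n$-fold integral against $\prod_i\pi(u_i)$, rescale each $\be$-integral by $w=\be\ta$, and let $\ta\to0$: by dominated convergence (when $\sum_i\de_i>1$; the case $\sum_i\de_i\le1$ is immediate) the rescaled integral tends to a strictly positive limit, so $\int_R^\infty\prod_iJ_i(\be,\ta)\,\mathrm{d}\be\gtrsim1/\ta$, and then $\int_0^{\ep}\pi(\ta)\ta^{-1}\,\mathrm{d}\ta=\infty$ because $\lim_{\ta\to0}\pi(\ta)>0$. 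For (iii) with $\pi(u_i)=\pi_{\mathrm{IRB}}(u_i)$: in $J_i$ substitute $v=\be\ta u_i$, keep only $v\in[1,2]$, and use $\pi_{\mathrm{IRB}}(x)\sim x^{-1}\{\log(1/x)\}^{-1-a}/B(b,a)$ as $x\to0$ to get $J_i(\be,\ta)\gtrsim\{\log\be\}^{-1-a}$ for all large $\be$, uniformly over $\ta$ in a compact interval on which $\pi(\ta)$ has positive mass; hence $I\gtrsim\int_R^\infty\be^{A-1}\{\log\be\}^{-n(1+a)}\,\mathrm{d}\be=\infty$, since $A>0$ lets the polynomial factor overwhelm the logarithm.

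\textbf{Propriety, parts (iv)--(v).} Now the bounds run the other way. For (v) ($A=0$): use $\prod_iJ_i\le J_1(\be,\ta)\prod_{i\ge2}\sup_{\be>0}J_i(\be,\ta)\le J_1(\be,\ta)\prod_{i\ge2}C_i$, then Tonelli and the $A=0$ Beta identity give $\int_0^\infty\be^{-1}J_1(\be,\ta)\,\mathrm{d}\be=(\de_1y_1)^{-\de_1}\Ga(\de_1)$; since $\pi(\ta)$ is a proper (gamma) density, $I\le(\de_1y_1)^{-\de_1}\Ga(\de_1)\prod_{i\ge2}C_i<\infty$, and this argument is identical for the SB and IRB priors and for any $n\ge1$. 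For (iv) ($0<A\le1$): this crude estimate wastes too much when the SB parameter $a$ is small, so I split the $\be$-integral at $1$. On $\be<1$, the bound $f_i\le\be\,\Phi_i(\ta u_i)$ gives $\prod_iJ_i\lesssim\be^n$ times a $\ta$-factor, so $\int_0^1\be^{A-1+n}\,\mathrm{d}\be<\infty$. On $\be>1$, keeping the decay of every factor, $\prod_iJ_i\lesssim\be^{-\sum_i\min(a,\de_i)}$ times a $\ta$-factor, which is $\be^{A-1}$-integrable on $(1,\infty)$ precisely when $\sum_i\min(a,\de_i)>A$; this is what ``$\min_i\de_i$ moderately large and $n$ sufficiently large'' buys (note $\de_i>A$ is also needed for the Beta identity). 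In both pieces the accumulated $\ta$-factor is polynomially bounded as $\ta\to\infty$, hence harmless against the exponential gamma tail of $\pi(\ta)$, and is $O(\ta^{-A})$ as $\ta\to0$, so that $\int_0^\infty\pi(\ta)\ta^{-A}\,\mathrm{d}\ta<\infty$ once $\pi(\ta)$ is gamma with shape exceeding $A$ --- the ``moderately large shape'' requirement. Reassembling the four cases ($\be$ and $\ta$ each above or below $1$) completes the proof.

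\textbf{The main obstacle.} The substance is all in the propriety direction, specifically in the upper bounds on $J_i(\be,\ta)$ that must be uniform in both $\be$ and $\ta$. The delicate regime is $\be\to0$ with $\ta u_i$ large: the pointwise inequality $f_i(\be,\ta,u_i)\le\be\,\Phi_i(\ta u_i)$ is true but useless by itself, because $\Phi_i$ grows super-exponentially and is not integrable against a polynomially-tailed $\pi(u_i)$; one must instead exploit the competing factor $\big(\be\ta u_i/(\be\ta u_i+\de_iy_i)\big)^{\ta u_i}\le\exp\{-\de_iy_i\ta u_i/(\be\ta u_i+\de_iy_i)\}$, split according to whether $\be\ta u_i$ is below or above $\de_iy_i$, and track how each resulting estimate depends on $\ta$ so that it survives integration against the gamma prior; for the IRB prior one must in addition pin down the logarithmic factors controlling the $\be\to\infty$ tail. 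By contrast the impropriety parts are comparatively routine once the Beta identity and the observation that $\int_0^{\ep}\pi(\ta)\ta^{-s}\,\mathrm{d}\ta=\infty$ for $s\ge1$ whenever $\pi(\ta)$ does not vanish at the origin are in hand; the only mildly non-obvious move there is, in (ii), working over $\{\be>R,\ \ta<\ep\}$ and passing to the $\ta\to0$ limit under the rescaling $w=\be\ta$, rather than trying to argue on a product of intervals.
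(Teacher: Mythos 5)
Your treatment of the impropriety parts (i)--(iii) is essentially the paper's: in each case you minorize the integrand on the same critical regions (large $\ta$ on a compact $\be$-box for (i), the strip $\be \ta \asymp 1$ for (ii), and $u_i \asymp 1/\be$ with the IRB left tail $\pi_{\rm{IRB}}(u) \sim u^{-1}\{\log(1/u)\}^{-1-a}$ for (iii)); your logarithmic exponent in (iii) matches the paper's and the conclusion $\int^\infty \be^{A-1}(\log\be)^{-n(1+a)}\,\mathrm{d}\be = \infty$ is correct. Your part (v) is a genuinely different and, in my view, better argument than the paper's: the exact identity
\begin{align*}
\int_0^\infty \be^{-1} f_i(\be,\ta,u_i)\,\mathrm{d}\be = \frac{\Ga(\ta u_i+\de_i+1)}{\Ga(\ta u_i+1)}\,(\de_i y_i)^{-\de_i}\,B(\ta u_i+1,\de_i) = (\de_i y_i)^{-\de_i}\Ga(\de_i),
\end{align*}
being independent of $\ta u_i$, combined with $\sup_{\be}f_i(\be,\ta,u_i)=y_i^{-\de_i}g_i(\ta u_i)\le C_i$ uniformly in $(\ta,u_i)$, gives propriety under $\pi(\be)=\be^{-1}$ for \emph{any} proper $\pi(\ta)$ and $\pi(u_i)$ and any $n\ge 1$. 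The paper instead routes (v) through its general Lemma S6 and consequently needs $n\ge 2$ for the IRB prior; your two-line argument removes that restriction.

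Part (iv), however, has a genuine gap, and you half-admit it yourself. Your bound on $\be<1$ is $f_i\le\be\,\Phi_i(\ta u_i)$ with $\Phi_i(s)\asymp s^{s+1+\de_i}(\de_i y_i)^{-s}$ growing super-exponentially, so the ``$\ta$-factor'' $\prod_i\int\pi(u_i)\Phi_i(\ta u_i)\,\mathrm{d}u_i$ is $+\infty$ for the polynomially-tailed SB prior: the displayed inequality $\prod_i J_i\lesssim\be^n\times(\text{$\ta$-factor})$ is vacuous, and $\int_0^1\be^{A-1+n}\,\mathrm{d}\be<\infty$ proves nothing. You correctly diagnose in your closing paragraph that one must instead use $\{1+(\de_i y_i/\be)/(\ta u_i)\}^{-\ta u_i}\le\{1+(\de_i y_i/\be)/(\de_i+1)\}^{-(\de_i+1)}$ on $\{\ta u_i>\de_i+1\}$ (which yields the uniform bound $f_i\lesssim \be/(1+\be)^{\de_i+1}$, as in the paper's Lemma S6) and handle the complementary region $\{\ta u_i\le\de_i+1\}$ separately --- but diagnosing the repair is not carrying it out, and the remaining region is where the paper needs its auxiliary nondecreasing functions $h_1,h_2,h_3$ to control the simultaneous limits $\ta\to 0$, $u_i\to 0$, $\be\to\infty$. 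Relatedly, your assertion that the accumulated $\ta$-factor is $O(\ta^{-A})$ as $\ta\to 0$ is unsubstantiated and looks wrong: the contribution of the region $\be\ta u_i=O(1)$ scales like $(\be\ta)^{-a}$ per coordinate, i.e.\ $\ta^{-na}$ overall, which is exactly why the paper demands a gamma prior on $\ta$ with a ``moderately large'' (not merely $>A$) shape parameter. As written, part (iv) is an outline of the right strategy rather than a proof.
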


\begin{proof}
For all $i = 1, \dots , n$, we have 
\begin{align}
f_i ( \be , \ta , u_i ) %
&\ge {( \be \ta u_i )^{\ta u_i + 1} \over ( \be \ta u_i + \de _i y_i )^{\ta u_i + \de _i + 1}} {( \ta u_i + \de _i + 1)^{\ta u_i + \de _i + 1 - 1 / 2} / e^{\ta u_i + \de _i + 1} \over e^{1 / \{ 12 ( \ta u_i + 1) \} } ( \ta u_i + 1)^{\ta u_i + 1 - 1 / 2} / e^{\ta u_i + 1}} \non \\
&\ge {1 \over \{ 1 + ( \de _i y_i ) / ( \be \ta u_i ) \} ^{\ta u_i}} {\be \ta u_i \over ( \be \ta u_i + \de _i y_i )^{\de _i + 1}} {( \ta u_i + \de _i + 1)^{\de _i} \over e^{1 / 12 + \de _i}} \non \\
&\ge {1 \over \exp \{ \de _i y_i / \be + 1 / 12 + \de _i \} } {\be \ta u_i ( \ta u_i + \de _i + 1)^{\de _i} \over ( \be \ta u_i + \de _i y_i )^{\de _i + 1}} \text{.} \non 
\end{align}
Therefore, 
\begin{align}
&\int_{(0, \infty )^2} p( \be , \ta | y) d( \be , \ta ) \non \\
&\ge C_{1, 1} \int_{(1, \infty ) \times (0, \infty )} \pi ( \be ) \pi ( \ta ) \Big\{ \prod_{i = 1}^{n} \int_{0}^{\infty } \pi ( u_i ) {\be \ta u_i ( \ta u_i + \de _i + 1)^{\de _i} \over ( \be \ta u_i + \de _i y_i )^{\de _i + 1}} d{u_i} \Big\} d( \be , \ta ) \non \\
&\ge \begin{cases} \displaystyle C_{1, 1} \int_{(1, 2) \times (1, \infty )} \pi ( \be ) \pi ( \ta ) \Big\{ \prod_{i = 1}^{n} \int_{1}^{2} \pi ( u_i ) {\ta ( \ta + \de _i + 1)^{\de _i} \over (4 \ta + \de _i y_i )^{\de _i + 1}} d{u_i} \Big\} d( \be , \ta ) \text{,} \\ \displaystyle C_{1, 1} \int_{1}^{\infty } \pi ( \be ) \Big[ \int_{1 / (2 \be )}^{2 / \be } \pi ( \ta ) \Big\{ \prod_{i = 1}^{n} \int_{1}^{2} \pi ( u_i ) {(1 / 2) ( \de _i + 1)^{\de _i} \over (4 + \de _i y_i )^{\de _i + 1}} d{u_i} \Big\} d\ta \Big] d\be \text{,} \\ \displaystyle C_{1, 1} \int_{(1, \infty ) \times (1, 2)} \pi ( \be ) \pi ( \ta ) \Big[ \prod_{i = 1}^{n} \Big\{ \int_{1 / (2 \be )}^{2 / \be } \pi ( u_i ) d{u_i} {(1 / 2) ( \de _i + 1)^{\de _i} \over (4 + \de _i y_i )^{\de _i + 1}} \Big\} \Big] d( \be , \ta ) \text{,} \end{cases} \non 
\end{align}
where 
\begin{align}
\int_{1 / (2 \be )}^{2 / \be } \pi ( u_i ) d{u_i} &\ge {1 \over B(b, a)} \int_{1 / (2 \be )}^{2 / \be } {\be / 2 \over 1 + 2 / \be } {\{ \log (1 + \be / 2) \} ^b / \log (1 + 2 \be ) \over \{ 1 + \log (1 + 2 \be ) \} ^{b + a}} d{u_i} \non \\
&\ge {1 \over B(b, a)} {1 \over \be } {\be / 2 \over 1 + 2} {\{ \log (1 + 1 / 2) \} ^b / \log (1 + 2 \be ) \over \{ 1 + \log (1 + 2 \be ) \} ^{b + a}} \non 
\end{align}
for all $\be \in (1, \infty )$. 
This proves parts (i), (ii), and (iii). 
Parts (iv) and (v) follow from Lemma \ref{lem:propriety}. 
\end{proof}

\begin{lem}
\label{lem:propriety} 
Let $0 \le A \le 1$ and suppose that $\pi ( \be ) \sim \be ^{A - 1}$ for all $\be \in (0, \infty )$. 
Let $h_1 , h_2 , h_3 \colon (0, \infty ) \to (0, \infty )$ be nondecreasing functions. 
Suppose that 
\begin{align}
&\int_{0}^{\infty } \pi (u) \Big\{ 1 + {1 \over h_3 (u)} \Big\} du \text{,} \quad \int_{0}^{\infty } \pi ( \ta ) \Big\{ 1 + {1 \over h_2 ( \ta )} \Big\} ^n d\ta \text{,} \quad \text{and} \non \\
&\int_{1}^{\infty } \pi ( \be ) \Big( \prod_{i = 1}^{n} \Big[ {1 \over ( \be ^{\min \{ 1, 1 / \sum_{j = 1}^{n} \de _j \} } + 1)^{\de _i}} + h_2 (1 / h_1 ( \be )) + h_3 ( h_1 ( \be ) / \be ^{1 - 1 / \sum_{j = 1}^{n} \de _j} ) \Big] \Big) d\be \non 
\end{align}
are finite. 
Then the posterior is proper. 
\end{lem}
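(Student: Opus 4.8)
The plan is to bound the total posterior mass $\int_{(0,\infty)^2}p(\be,\ta\mid y)\,d(\be,\ta)$ and, after an application of Tonelli's theorem, reduce it to the three finiteness conditions in the hypotheses. The first step is a pointwise bound on the inner integrand $f_i(\be,\ta,u_i)$. Writing $w=\ta u_i$ and applying the two-sided Stirling estimates from the Preliminaries to $\Ga(w+\de_i+1)/\Ga(w+1)$, together with $1\le(1+s/m)^m\le e$, one obtains a bound of the schematic form
\begin{align*}
f_i(\be,\ta,u_i)\ \le\ C_i\,(w+\de_i+1)^{\de_i}\,\frac{1}{(\be w+\de_i y_i)^{\de_i}}\,\frac{\be w}{\be w+\de_i y_i}\,\Big(1+\frac{\de_i y_i}{\be w}\Big)^{-w},
\end{align*}
i.e.\ a ratio of powers, a linear-in-$\be w$ factor coming from the leftover ``$+1$'' in the exponent of the numerator, and a factor that is always $\le1$ and becomes genuinely small once $w$ is large compared with $\de_i y_i/\be$. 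All the subsequent estimates are carried out by splitting the ranges of $w$ (equivalently $u_i$), of $\ta$, and of $\be$ according to the thresholds at which these factors change behaviour.

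The region $\be\in(0,1)$ is treated first and separately; it is the delicate case when $\pi(\be)=\be^{-1}$ is improper. Here $\big(1+\de_i y_i/(\be w)\big)^{-w}\le\exp\{-\de_i y_i w/(\be w+\de_i y_i)\}$ is at most $e^{-w/2}$ when $\be w\le\de_i y_i$ and at most $e^{-\de_i y_i/(2\be)}$ otherwise, while the leftover factor $\be w/(\be w+\de_i y_i)\le\be w/(\de_i y_i)$ is $O(\be)$ on the first subregion. Multiplying by $\pi(u_i)$ and integrating, the first subregion contributes $O(\be)$ and the second $O\big(\be^{-\de_i}e^{-\de_i y_i/(2\be)}\big)=o(\be)$, so $\int_0^\infty\pi(u_i)f_i\,du_i=O(\be)$ as $\be\to0$. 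Hence $\prod_{i=1}^n\int_0^\infty\pi(u_i)f_i\,du_i=O(\be^n)$ and $\int_0^1\pi(\be)(\cdots)\,d\be<\infty$ for every $A\ge0$; this step uses only properness of $\pi$ and none of the three hypotheses.

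The main work is the tail $\be\ge1$, where the exponential factor no longer helps. For each $i$ I would split $\int_0^\infty\pi(u_i)f_i\,du_i$, and one level up the $\ta$-integral, at thresholds built from the single nondecreasing function $h_1$ — schematically $\ta$ below or above $1/h_1(\be)$, and $u_i$ below or above a level that keeps $w=\ta u_i$ inside the band on which $\be w+\de_i y_i\asymp\be w$. Bounding $f_i$ by the ratio of powers $(w+\de_i+1)^{\de_i}/(\be w+\de_i y_i)^{\de_i}$ times $\be w/(\be w+\de_i y_i)$ on that band and by the tail mass $\int\pi(u_i)\,du_i$ on its complement, one finds that for $\be\ge1$ the $i$-th factor $\int_0^\infty\pi(u_i)f_i\,du_i$ is bounded by a constant times $(\be^{\,c}+1)^{-\de_i}+h_2\big(1/h_1(\be)\big)+h_3\big(h_1(\be)/\be^{\,1-c}\big)$, up to a remainder that is integrated against $\pi(\ta)$ to produce the factor $1+1/h_2(\ta)$, with $c=\min\{1,1/\sum_{j}\de_j\}$. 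The exponent $c$ is exactly the largest power of $\be$ that can be extracted uniformly across all $n$ coordinates from the ratio-of-powers bound: it makes the product of the $n$ leading terms obey $\prod_{i=1}^n(\be^{\,c}+1)^{-\de_i}\le\be^{-\sum_i c\de_i}=\be^{-\min\{1,\sum_j\de_j\}}$, which is precisely the power of $\be$ that appears against $\pi(\be)$ in the third hypothesized integral. Taking the product over $i$, integrating over $\ta$ with $\int_0^\infty\pi(\ta)(1+1/h_2(\ta))^n\,d\ta<\infty$, using $\int_0^\infty\pi(u)(1+1/h_3(u))\,du<\infty$ to control the $u_i$-integrals, and finally integrating over $\be\in(1,\infty)$, leaves exactly the third hypothesized integral, which is finite. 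Combining this with the $\be\in(0,1)$ estimate and using Tonelli throughout establishes properness.

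The step I expect to be the main obstacle is the bookkeeping in the previous paragraph: the thresholds — all expressed through $h_1$ — must be chosen so that \emph{simultaneously} the ``main band'' reproduces $(\be^{\,c}+1)^{-\de_i}$ with the correct power of $\be$, the small-$\ta$ region reproduces $h_2(1/h_1(\be))$, the small/large-$u_i$ regions reproduce $h_3(h_1(\be)/\be^{\,1-c})$, and the residual powers of $\ta$ and $u_i$ left after pulling out $\pi(\ta)$ and $\pi(u_i)$ are absorbed into $(1+1/h_2(\ta))^n$ and $(1+1/h_3(u))$; keeping the $\be$-dependence coherent across the $n$-fold product, subject to $0\le A\le1$, is what pins down the value of $c$ and is the only genuinely non-routine part of the argument.
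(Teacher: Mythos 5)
Your plan follows essentially the same route as the paper's proof: a Stirling-type pointwise bound on $f_i(\be,\ta,u_i)$, a case split governed by the threshold $1/h_1(\be)$ in $\ta$ and by comparing $\ta u_i$ with $\be^{1/\sum_{j}\de_j-1}$, monotonicity of $h_2,h_3$ to generate the terms $h_2(1/h_1(\be))$ and $h_3\bigl(h_1(\be)/\be^{1-1/\sum_{j}\de_j}\bigr)$, and a Tonelli factorization of the posterior mass into the three hypothesized integrals with the exponent $c=\min\{1,1/\sum_j\de_j\}$ arising exactly as you describe. The only cosmetic difference is on $\be\in(0,1)$, where you invoke the exponential factor to get $O(\be^n)$ while the paper extracts the same power from the elementary bound $\be\ta u_i/(\be\ta u_i+\de_i y_i)\le C\,\be/(\be+1)$; both work.
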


\begin{proof}
Fix $i = 1, \dots , n$. 
Then 
\begin{align}
f_i ( \be , \ta , u_i ) &\le {( \be \ta u_i )^{\ta u_i + 1} \over ( \be \ta u_i + \de _i y_i )^{\ta u_i + \de _i + 1}} {e^{1 / \{ 12 ( \ta u_i + \de _i + 1) \} } ( \ta u_i + \de _i + 1)^{\ta u_i + \de _i + 1 - 1 / 2} / e^{\ta u_i + \de _i + 1} \over ( \ta u_i + 1)^{\ta u_i + 1 - 1 / 2} / e^{\ta u_i + 1}} \non \\
&\le {( \be \ta u_i )^{\ta u_i + 1} \over ( \be \ta u_i + \de _i y_i )^{\ta u_i + \de _i + 1}} {e^{1 / \{ 12 ( \de _i + 1) \} } ( \ta u_i + \de _i + 1)^{\ta u_i + \de _i + 1 - 1 / 2} \over ( \ta u_i + 1)^{\ta u_i + 1 - 1 / 2} / e^{\de _i}} \non \\
&\le N_1 {( \be \ta u_i )^{\ta u_i + 1} ( \ta u_i + \de _i + 1)^{\de _i} \over ( \be \ta u_i + \de _i y_i )^{\ta u_i + \de _i + 1}} \non 
\end{align}
for some $N_1 > 0$. 
Therefore, %
\begin{align}
&f_i ( \be , \ta , u_i ) 1( \ta u_i \le \de _i + 1) \non \\
&\le N_2 {\be \over \be + 1} {1( \ta u_i \le \de _i + 1) \over ( \be \ta u_i + \de _i y_i )^{\de _i}} \non \\
&\le \begin{cases} \displaystyle N_2 {\be \over \be + 1} {1 \over ( \be ^{1 / \sum_{j = 1}^{n} \de _j} + \de _i y_i )^{\de _i}} \text{,} \\ \displaystyle \quad \text{if $\ta u_i \ge \be ^{1 / \sum_{j = 1}^{n} \de _j - 1}$} \text{,} \\ \displaystyle N_3 {\be \over \be + 1} \Big\{ 1( \be \le 1) + 1( \be > 1) {h_2 (1 / h_1 ( \be )) \over h_2 ( \ta )} \Big\} \text{,} \\ \displaystyle \quad \text{if $\ta u_i < \be ^{1 / \sum_{j = 1}^{n} \de _j - 1}$ and $\ta \le 1 / h_1 ( \be )$} \text{,} \\ \displaystyle N_4 {\be \over \be + 1} \Big\{ 1( \be \le 1) + 1( \be > 1) {h_3 ( h_1 ( \be ) / \be ^{1 - 1 / \sum_{j = 1}^{n} \de _j} ) \over h_3 ( u_i )} \Big\} \text{,} \\ \displaystyle \quad \text{if $\ta u_i < \be ^{1 / \sum_{j = 1}^{n} \de _j - 1}$ and $\ta > 1 / h_1 ( \be )$} \text{,} \end{cases} \non \\
&\le N_5 \Big[ 1( \be \le 1) {\be \over \be + 1} + 1( \be > 1) \Big\{ {1 \over ( \be ^{1 / \sum_{j = 1}^{n} \de _j} + \de _i y_i )^{\de _i}} + h_2 (1 / h_1 ( \be )) \non \\
&\quad + h_3 ( h_1 ( \be ) / \be ^{1 - 1 / \sum_{j = 1}^{n} \de _j} ) \Big\} \Big\{ 1 + {1 \over h_2 ( \ta )} + {1 \over h_3 ( u_i )} \Big\} \Big] \non 
\end{align}
for some $N_2 , \dots , N_5 > 0$. 
On the other hand, 
\begin{align}
f_i ( \be , \ta , u_i ) 1( \ta u_i > \de _i + 1) &\le N_1 {1( \ta u_i > \de _i + 1) \over \{ 1 + ( \de _i y_i / \be ) / ( \ta u_i ) \} ^{\ta u_i}} {( \ta u_i + \de _i + 1)^{\de _i} \over ( \be \ta u_i + \de _i y_i )^{\de _i}} \non \\
&\le N_6 {1( \ta u_i > \de _i + 1) \over \{ 1 + ( \de _i y_i / \be ) / ( \de _i + 1) \} ^{\de _i + 1}} {1 \over \be ^{\de _i}} \le N_7 {\be \over (1 + \be )^{\de _i + 1}} \non 
\end{align}
for some $N_6, N_7 > 0$. 
Thus, 
\begin{align}
f_i ( \be , \ta , u_i ) &\le N_5 \Big[ 1( \be \le 1) {\be \over \be + 1} + 1( \be > 1) \Big\{ {1 \over ( \be ^{1 / \sum_{j = 1}^{n} \de _j} + \de _i y_i )^{\de _i}} + h_2 (1 / h_1 ( \be )) \non \\
&\quad + h_3 ( h_1 ( \be ) / \be ^{1 - 1 / \sum_{j = 1}^{n} \de _j} ) \Big\} \Big\{ 1 + {1 \over h_2 ( \ta )} + {1 \over h_3 ( u_i )} \Big\} \Big] + N_7 {\be \over (1 + \be )^{\de _i + 1}} \non \\
&\le N_8 \Big( 1( \be \le 1) \be + 1( \be > 1) \Big[ {1 \over ( \be ^{\min \{ 1, 1 / \sum_{j = 1}^{n} \de _j \} } + 1)^{\de _i}} + h_2 (1 / h_1 ( \be )) \non \\
&\quad + h_3 ( h_1 ( \be ) / \be ^{1 - 1 / \sum_{j = 1}^{n} \de _j} ) \Big] \Big\{ 1 + {1 \over h_2 ( \ta )} + {1 \over h_3 ( u_i )} \Big\} \Big) \text{.} \non 
\end{align}
Since $i = 1, \dots , n$ is arbitrary, it follows that 
\begin{align}
&\int_{(0, \infty )^2} p( \be , \ta | y) d( \be , \ta ) \non \\
&\le N_9 \int_{(0, \infty )^2} \Big( \pi ( \be ) \pi ( \ta ) \Big\{ 1 + {1 \over h_2 ( \ta )} \Big\} ^n \Big[ \prod_{i = 1}^{n} \int_{0}^{\infty } \Big\{ \pi ( u_i ) \Big\{ 1 + {1 \over h_3 ( u_i )} \Big\} \Big( 1( \be \le 1) \be + 1( \be > 1) \non \\
&\quad \times \Big[ {1 \over ( \be ^{\min \{ 1, 1 / \sum_{j = 1}^{n} \de _j \} } + 1)^{\de _i}} + h_2 (1 / h_1 ( \be )) + h_3 ( h_1 ( \be ) / \be ^{1 - 1 / \sum_{j = 1}^{n} \de _j} ) \Big] \Big) \Big\} d{u_i} \Big] \Big) d( \be , \ta ) \non \\
&\le N_9 \int_{0}^{\infty } \Big( \pi ( \be ) \Big[ \prod_{i = 1}^{n} \Big( 1( \be \le 1) \be + 1( \be > 1) \non \\
&\quad \times \Big[ {1 \over ( \be ^{\min \{ 1, 1 / \sum_{j = 1}^{n} \de _j \} } + 1)^{\de _i}} + h_2 (1 / h_1 ( \be )) + h_3 ( h_1 ( \be ) / \be ^{1 - 1 / \sum_{j = 1}^{n} \de _j} ) \Big] \Big) \Big] \Big) d\be \non \\
&\quad \times \Big[ \prod_{i = 1}^{n} \int_{0}^{\infty } \pi ( u_i ) \Big\{ 1 + {1 \over h_3 ( u_i )} \Big\} d{u_i} \Big] \int_{0}^{\infty } \pi ( \ta ) \Big\{ 1 + {1 \over h_2 ( \ta )} \Big\} ^n d\ta \non \\
&\le N_{10} \int_{0}^{\infty } \Big\{ \pi ( \be ) \Big( 1( \be \le 1) \be ^n \non \\
&\quad + 1( \be > 1) \prod_{i = 1}^{n} \Big[ {1 \over ( \be ^{\min \{ 1, 1 / \sum_{j = 1}^{n} \de _j \} } + 1)^{\de _i}} + h_2 (1 / h_1 ( \be )) + h_3 ( h_1 ( \be ) / \be ^{1 - 1 / \sum_{j = 1}^{n} \de _j} ) \Big] \Big) \Big\} d\be \non \\
&\le N_{10} \int_{0}^{1} \be \pi ( \be ) d\be \non \\
&\quad + N_{10} \int_{1}^{\infty } \pi ( \be ) \Big( \prod_{i = 1}^{n} \Big[ {1 \over ( \be ^{\min \{ 1, 1 / \sum_{j = 1}^{n} \de _j \} } + 1)^{\de _i}} + h_2 (1 / h_1 ( \be )) + h_3 ( h_1 ( \be ) / \be ^{1 - 1 / \sum_{j = 1}^{n} \de _j} ) \Big] \Big) d\be \text{,} \non 
\end{align}
which is finite by assumption. 
\end{proof}

\section{Proof of Theorem 2}
Here, we prove Theorem 2. 
Let $\lat = 1 / \la $ and $\lat _0 = 1 / \la _0$.

\begin{proof}[Proof of Theorem 2]
Let $\rho (x) = x - 1 - \log x$ for $x \in (0, \infty )$. 
Then $A_{\varepsilon } ( \lambda _0 ) = \{ \lambda \in (0, \infty ) \mid \rho ( \lambda _0 / \lambda ) < \varepsilon / \de \} = \{ 1 / \lat \mid \lat \in \At _{\ep } ( \lat _0 ) \} $, where $\At _{\ep } ( \lat _0 ) = \{ \lat \in (0, \infty ) \mid \rho ( \lat / \lat _0 ) < \varepsilon / \de \} $. 
Since ${\rho }' (x) \gtreqless 0$ if and only if $x \gtreqless 1$ for any $x \in (0, \infty )$, there exist $c_{\varepsilon / \de }^{\rm{L}} \in (0, 1)$ and $c_{\varepsilon / \de }^{\rm{U}} \in (0, \infty )$ such that $\rho (1 - c_{\varepsilon / \de }^{\rm{L}} ) = \rho (1 + c_{\varepsilon / \de }^{\rm{U}} ) = \varepsilon / \de $ and $\At _{\ep } ( \lat _0 ) = ( \lat _0 - c_{\varepsilon / \de }^{\rm{L}} \lat _0 , \lat _0 + c_{\varepsilon / \de }^{\rm{U}} \lat _0 )$. 
Since $c_{\varepsilon / \de }^{\rm{U}} = \varepsilon / \de + \log (1 + c_{\varepsilon / \de }^{\rm{U}} ) > \varepsilon / \de $, we have $\int_{\At _{\ep } ( \lat _0 )} d\lat > \varepsilon \lat _0 / \de $. 

Now, suppose first that $\la _0 \neq 1$. 
Then 
\begin{align*}
\mathrm{pr}( \lambda \in A_{\varepsilon } ( \lambda_0 )) &= \mathrm{pr}( \lat \in \At _{\ep } ( \lat _0 )) \non \\
&= \int_{\At _{\ep } ( \lat _0 )} \left\{ \int_{0}^{\infty } \pi (u) {u^u \over \Gamma (u)} \lat ^u e^{- \lat u} \mathrm{d}u \right\} \mathrm{d}\lat \\
&\geqslant \int_{\At _{\ep } ( \lat _0 )} \left\{ \int_{0}^{1} \pi (u) {u^u \over \Gamma (u)} (1 - c_{\varepsilon / \de }^{\rm{L}} )^u {\lat _0}^u e^{- ( \lat _0 + c_{\varepsilon / \de }^{\rm{U}} \lat _0 ) u} \mathrm{d}u \right\} \mathrm{d}\lat \\
&\geqslant {\varepsilon \lat _0 \over \de } (1 - c_{\varepsilon / \de }^{\rm{L}} ) e^{- ( \lat _0 + c_{\varepsilon / \de }^{\rm{U}} \lat _0 )} \int_{0}^{1} \pi (u) {u^u \over \Gamma (u)} {\lat _0}^u \mathrm{d}u < \infty.
\end{align*}
Therefore, by the inequality (5) in Section 3, 
\begin{align*}
R_n &\le \varepsilon + {1 \over n} \left[ \log {\de \over \varepsilon } + \log \left\{ {e^{\lat _0 + c_{\varepsilon / \de }^{\rm{U}} \lat _0} \over \lat _0 (1 - c_{\varepsilon / \de }^{\rm{L}} )} \ \Bigg/ \  \int_{0}^{1} \pi (u) {u^u \over \Gamma (u)} {\lat _0}^u \mathrm{d}u \right\} \right] 
\end{align*}
which is $O( n^{- 1} \log n)$ when $\varepsilon = 1 / n$. 

Next, suppose that $\lambda_0 = 1$ and that $\pi (u) \sim u^{- 1 - b}$ as $u \to \infty $ for some $0 < b \le 1 / 2$. 
Then we have
\begin{align*}
\mathrm{pr}( \lambda \in A_{\varepsilon } ( \lambda_0 )) &= \mathrm{pr}( \lat \in \At _{\ep } ( \lat _0 )) = \int_{\At _{\ep } ( \lat _0 )} \left\{ \int_{0}^{\infty } \pi (u) {u^u \over \Gamma (u)} \lat ^u e^{- \lat u} \mathrm{d}u \right\} \mathrm{d}\lat \\
&= \int_{\At _{\ep } ( \lat _0 )} \left\{ \int_{0}^{\infty } \pi (u) {u^u e^{- u} \over \Gamma (u)} e^{- \rho ( \lat ) u} \mathrm{d}u \right\} \mathrm{d}\lat \\
&\geqslant \int_{\At _{\ep } ( \lat _0 )} \left\{ \int_{0}^{\infty } \pi (u) {u^u e^{- u} \over \Gamma(u)} e^{- \varepsilon u / \de } \mathrm{d}u \right\} \mathrm{d}\lat \\
&\geqslant {\varepsilon \over \de } \int_{0}^{\infty } \pi (u) {u^u e^{- u} \over \Gamma(u)} e^{- \varepsilon u / \de } \mathrm{d}u.
\end{align*}
Since the inequality 
\begin{align*}
{u^u e^{- u} \over \Gamma(u)} &\geqslant {1 \over (2 \pi )^{1 / 2}} u^{1 / 2} e^{- 1 / (12 u)} 
\end{align*}
holds for all $u \in (0, \infty )$, 
\begin{align*}
\int_{0}^{\infty } \pi (u) {u^u e^{- u} \over \Gamma(u)} e^{- \varepsilon u / \de } \mathrm{d}u &\geqslant \int_{0}^{\infty } \pi (u) {1 \over (2 \pi )^{1 / 2}} u^{1 / 2} e^{- 1 / (12 u)} e^{- \varepsilon u / \de } \mathrm{d}u \\
&\geqslant {e^{- 1 / 12} \over (2 \pi )^{1 / 2}} \int_{1}^{\infty } \pi (u) u^{1 / 2} e^{- \varepsilon u / \de } \mathrm{d}u. 
\end{align*}
Since $\pi (u) \geqslant C' u^{- 1 - b}$ for all $u \geqslant 1$ for some $C' > 0$ by assumption, 
\begin{align*}
{1 \over C'} \int_{1}^{\infty } \pi (u) u^{1 / 2} e^{- \varepsilon u / \de } \mathrm{d}u &\geqslant \int_{1}^{\infty } u^{- 1 / 2 - b} e^{- \varepsilon u / \de } \mathrm{d}u \geqslant \int_{1}^{\infty } u^{- 1} e^{- \varepsilon u / \de } \mathrm{d}u  \\
&= \left[ ( \log u) e^{- \varepsilon u / \de } \right] _{1}^{\infty } + {\varepsilon \over \de } \int_{1}^{\infty } ( \log u) e^{- \varepsilon u / \de } \mathrm{d}u = \int_{\varepsilon / \de }^{\infty } \left( \log {u \over \varepsilon / \de } \right) e^{- u} \mathrm{d}u \\
&\geqslant \int_{1}^{\infty } \left( \log {u \over \varepsilon / \de } \right) e^{- u} \mathrm{d}u = \left( \log {1 \over \varepsilon / \de } \right) \int_{1}^{\infty } {\log \{ u / ( \varepsilon / \de ) \} \over \log \{ 1 / ( \varepsilon / \de ) \} } e^{- u} \mathrm{d}u 
\end{align*}
when $\varepsilon < \de $. 
Note that $ \log \{ u / ( \varepsilon / \de ) \}  / \log \{ 1 / ( \varepsilon / \de ) \} \to 1$ as $\varepsilon \to 0$ and that if $\varepsilon < \de / 2$, then 
\begin{align*}
{\log \{ u / ( \varepsilon / \de ) \} \over \log \{ 1 / ( \varepsilon / \de ) \} } &= \exp \left( \left[ \log \log {s \over \ep / \de } \right] _{s = 1}^{s = u} \right) \\
&= \exp \left[ \int_{1}^{u} {1 \over \log \{ s / ( \varepsilon / \de ) \} } {1 \over s} \mathrm{d}s \right] \\
&\le \exp \left( \int_{1}^{u} {1 \over \log 2} {1 \over s} \mathrm{d}s \right) = u^{1 / \log 2} 
\end{align*}
for all $u > 1$. 
Then, by the dominated convergence theorem, 
\begin{align*}
\int_{1}^{\infty } {\log \{ u / ( \varepsilon / \de ) \} \over \log \{ 1 / ( \varepsilon / \de ) \} } e^{- u} \mathrm{d}u \to \int_{1}^{\infty } e^{- u} \mathrm{d}u = e^{- 1}  
\end{align*}
as $\varepsilon \to 0$. 
Thus, 
\begin{align*}
\mathrm{pr}( \lambda \in A_{\varepsilon } ( \la _0 )) &\geqslant {\varepsilon \over \de } {e^{- 1 / 12} \over (2 \pi )^{1 / 2}} C' \left( \log {\de \over \varepsilon } \right) \int_{1}^{\infty } {\log \{ u / ( \varepsilon / \de ) \} \over \log \{ 1 / ( \varepsilon / \de ) \} } e^{- u} \mathrm{d}u 
\end{align*}
and it follows from (5) that 
\begin{align*}
R_n &\le \varepsilon + {1 \over n} \left( \log {\de \over \ep } - \log \log {\de \over \varepsilon } + \log \left[ {e^{1 / 12} (2 \pi )^{1 / 2} \over C'} \ \Bigg/ \  \int_{1}^{\infty } {\log \{ u / ( \varepsilon / \de ) \} \over \log \{ 1 / ( \varepsilon / \de ) \} } e^{- u} \mathrm{d}u \right] \right),
\end{align*}
which is $O( n^{- 1} ( \log n - \log \log n))$ when $\varepsilon = 1 / n$. 
This completes the proof. 
\end{proof}

\section{Posterior sampling under the IRB prior}
By using the integral expression for the IRB density given in the following proposition, we can construct an MCMC algorithm in which $u_1 , \dots , u_n$ are easily updated. 

\begin{prp}
\label{prp:IRB} 
The IRB prior can be expressed as 
\begin{align}
&B(b, a) {\rm{IRB}} ( u_i \mid b, a) \non \\
&= \int_{(0, \infty )^3} {{s_i}^{- b} \over \Ga (1 - b)} {{w_i}^{b + a - 1} \over \Ga (b + a)} e^{- w_i} {{z_i}^{s_i + w_i} \over \Ga ( s_i + w_i + 1)} e^{- z_i} {u_i}^{s_i + w_i - 1} e^{- z_i u_i} \mathrm{d}( s_i , w_i , z_i ) \non \\
&= \int_{(0, \infty )^3} \Big[ {\rm{Ga}} ( s_i \mid 1 - b, \log (1 + 1 / u_i )) {\rm{Ga}} ( w_i \mid b + a, 1 + \log (1 + 1 / u_i )) {\rm{Ga}} ( z_i \mid s_i + w_i + 1, 1 + u_i ) \non \\
&\quad \times {1 \over u_i (1 + u_i )} {1 \over \{ \log (1 + 1 / u_i ) \} ^{1 - b}} {1 \over \{ 1 + \log (1 + 1 / u_i ) \} ^{b + a}} \Big] \mathrm{d}( s_i , w_i , z_i ) \text{.} \non 
\end{align}
\end{prp}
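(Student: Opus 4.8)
The plan is to verify the first integral identity by integrating out the latent variables $z_i$, then $s_i$, then $w_i$, one at a time, and then to obtain the second (gamma‑density) expression as a purely algebraic regrouping of the same integrand.

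Since every factor in the integrand is nonnegative, Tonelli's theorem lets the iterated integral be computed in any order, and I would take $z_i$ first. Writing $c := s_i + w_i$, the $z_i$-dependent part of the integrand is $z_i^{c} e^{-z_i} e^{-z_i u_i}/\Gamma(c+1)$, and since $\int_0^\infty z_i^{c} e^{-z_i(1+u_i)}\,\mathrm{d}z_i = \Gamma(c+1)/(1+u_i)^{c+1}$, the $z_i$-integral equals $(1+u_i)^{-(s_i+w_i+1)}$. Multiplying by the factor $u_i^{s_i+w_i-1}$ and using the identity $u_i^{s_i+w_i}/(1+u_i)^{s_i+w_i} = e^{-(s_i+w_i)\log(1+1/u_i)}$, the remaining integrand collapses to $\{u_i(1+u_i)\}^{-1}\,\Gamma(1-b)^{-1} s_i^{-b} e^{-s_i\log(1+1/u_i)}\,\Gamma(b+a)^{-1} w_i^{b+a-1} e^{-w_i\{1+\log(1+1/u_i)\}}$.

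Next I would integrate over $s_i$ using $\int_0^\infty s_i^{(1-b)-1} e^{-\beta s_i}\,\mathrm{d}s_i = \Gamma(1-b)\,\beta^{-(1-b)}$ with $\beta = \log(1+1/u_i)$, which contributes the factor $\{\log(1+1/u_i)\}^{b-1}$, and then over $w_i$ using $\int_0^\infty w_i^{(b+a)-1} e^{-\beta' w_i}\,\mathrm{d}w_i = \Gamma(b+a)\,(\beta')^{-(b+a)}$ with $\beta' = 1+\log(1+1/u_i)$, which contributes $\{1+\log(1+1/u_i)\}^{-(b+a)}$. Collecting these gives exactly $\{u_i(1+u_i)\}^{-1}\{\log(1+1/u_i)\}^{b-1}\{1+\log(1+1/u_i)\}^{-(b+a)} = B(b,a)\,\mathrm{IRB}(u_i\mid b,a)$, which is the first claimed identity; here $b<1$ is needed so that $\Gamma(1-b)$ and the $s_i$-density make sense, and the default value $b = 1/2$ lies in this range.

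For the second expression, I would check termwise that its integrand, namely the product $\mathrm{Ga}(s_i\mid 1-b,\log(1+1/u_i))\,\mathrm{Ga}(w_i\mid b+a,1+\log(1+1/u_i))\,\mathrm{Ga}(z_i\mid s_i+w_i+1,1+u_i)$ times the stated leftover factor, coincides as a function of $(s_i,w_i,z_i,u_i)$ with the integrand in the first line: the normalizing constants $\{\log(1+1/u_i)\}^{1-b}$ and $\{1+\log(1+1/u_i)\}^{b+a}$ of the first two gamma densities cancel against the leftover factor, the power $(1+u_i)^{s_i+w_i+1}\cdot\{u_i(1+u_i)\}^{-1} = u_i^{-1}(1+u_i)^{s_i+w_i}$ combines with $e^{-(s_i+w_i)\log(1+1/u_i)} = \{u_i/(1+u_i)\}^{s_i+w_i}$ to give $u_i^{s_i+w_i-1}$, and $e^{-z_i(1+u_i)} = e^{-z_i}e^{-z_i u_i}$; this reproduces the first integrand, so the two representations agree. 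There is no real obstacle here, since the whole argument is bookkeeping of exponents; the one point deserving care is the repeated conversion between the power $\{u_i/(1+u_i)\}^{s_i+w_i}$ and the exponential $e^{-(s_i+w_i)\log(1+1/u_i)}$, which couples the three latent variables and manufactures the logarithmic factors characteristic of the IRB density. I would also record the routine convergence checks (at $s_i,w_i\to 0$ and at the tails), all of which hold under $a,b>0$ and $b<1$.
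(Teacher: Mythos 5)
Your proposal is correct and is essentially the paper's own argument read in the reverse direction: the paper starts from the IRB density and introduces $s_i$, $w_i$, and $z_i$ via the same three gamma-integral identities ($\{\log(1+1/u_i)\}^{b-1}$, $\{1+\log(1+1/u_i)\}^{-(b+a)}$, and $(1+u_i)^{-(s_i+w_i+1)}$ as Laplace/gamma transforms), whereas you integrate those variables out of the triple integral; the final regrouping into gamma densities is identical. No substantive difference, and your side remarks (Tonelli, the requirement $b<1$ for $\Gamma(1-b)$) are consistent with the paper's implicit assumptions.
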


\begin{proof}
We have 
\begin{align}
&B(b, a) {\rm{IRB}} ( u_i \mid b, a) \non \\
&= {1 \over u_i (1 + u_i )} \int_{0}^{\infty } {{s_i}^{- b} \over \Ga (1 - b)} e^{- s_i \log (1 + 1 / u_i )} \mathrm{d}{s_i} \int_{0}^{\infty } {{w_i}^{b + a - 1} \over \Ga (b + a)} e^{- w_i} e^{- w_i \log (1 + 1 / u_i )} \mathrm{d}{w_i} \non \\
&= \int_{(0, \infty )^2} {{s_i}^{- b} \over \Ga (1 - b)} {{w_i}^{b + a - 1} \over \Ga (b + a)} e^{- w_i} {1 \over u_i (1 + u_i )} {1 \over (1 + 1 / u_i )^{s_i + w_i}} \mathrm{d}( s_i , w_i ) \non \\
&= \int_{(0, \infty )^3} {{s_i}^{- b} \over \Ga (1 - b)} {{w_i}^{b + a - 1} \over \Ga (b + a)} e^{- w_i} {{z_i}^{s_i + w_i} \over \Ga ( s_i + w_i + 1)} e^{- z_i} {u_i}^{s_i + w_i - 1} e^{- z_i u_i} \mathrm{d}( s_i , w_i , z_i ) \non \\
&= \int_{(0, \infty )^3} {{s_i}^{- b} \over \Ga (1 - b)} {{w_i}^{b + a - 1} \over \Ga (b + a)} e^{- w_i} {(1 + u_i )^{s_i + w_i + 1} {z_i}^{s_i + w_i} e^{- z_i (1 + u_i )} \over \Ga ( s_i + w_i + 1)} {{u_i}^{s_i + w_i - 1} \over (1 + u_i )^{s_i + w_i + 1}} \mathrm{d}( s_i , w_i , z_i ) \non \\
&= \int_{(0, \infty )^3} {{s_i}^{- b} \over \Ga (1 - b)} {{w_i}^{b + a - 1} \over \Ga (b + a)} e^{- w_i} {\rm{Ga}} ( z_i \mid s_i + w_i + 1, 1 + u_i ) {{u_i}^{s_i + w_i - 1} \over (1 + u_i )^{s_i + w_i + 1}} \mathrm{d}( s_i , w_i , z_i ) \non \\
&= \int_{(0, \infty )^3} \Big[ {\rm{Ga}} ( s_i \mid 1 - b, \log (1 + 1 / u_i )) {\rm{Ga}} ( w_i \mid b + a, 1 + \log (1 + 1 / u_i )) {\rm{Ga}} ( z_i \mid s_i + w_i + 1, 1 + u_i ) \non \\
&\quad \times {1 \over u_i (1 + u_i )} {1 \over \{ \log (1 + 1 / u_i ) \} ^{1 - b}} {1 \over \{ 1 + \log (1 + 1 / u_i ) \} ^{b + a}} \Big] \mathrm{d}( s_i , w_i , z_i ) \non 
\end{align}
and this proves the proposition. 
\end{proof}

We make the change of variables $\nu _i = \ta u_i$ for $i = 1, \dots , n$. 
We consider $(s, w, z) \in (0, \infty )^{3 n}$ as a set of additional latent variables. 
The overall posterior distribution of $(\lambda,\beta,\tau, s, w, z, \nu )$ given $y$ is 
\begin{align*}
&p( \lambda , \beta , \tau , s, w, z, \nu \mid y ) \non \\
&\propto \pi _{\beta} ( \beta ) \pi _{\tau } ( \tau ) {1 \over \tau ^n} \prod_{i = 1}^{n} \Big\{ {\be ^{\nu _i + 1} {\nu _i}^{\nu _i} \over \Ga ( \nu _i )} {1 \over {\lambda_i}^{\nu _i + 2}} e^{- \beta \nu _i / \la _i} {1 \over {\lambda_i}^{\de _i}} \exp \Big( - {\de _i y_i \over \lambda_i} \Big) \non \\
&\quad \times {{s_i}^{- b} \over \Ga (1 - b)} {{w_i}^{b + a - 1} \over \Ga (b + a)} e^{- w_i} {{z_i}^{s_i + w_i} \over \Ga ( s_i + w_i + 1)} e^{- z_i} {{\nu _i}^{s_i + w_i - 1} \over \ta ^{s_i + w_i - 1}} e^{- z_i \nu _i / \ta } \Big\} \non \\
&= \pi _{\beta} ( \beta ) \pi _{\tau } ( \tau ) {1 \over \tau ^n} \prod_{i = 1}^{n} \Big[ {\be ^{\nu _i + 1} {\nu _i}^{\nu _i} \over \Ga ( \nu _i )} {1 \over {\lambda_i}^{\nu _i + 2}} e^{- \beta \nu _i / \la _i} {1 \over {\lambda_i}^{\de _i}} \exp \Big( - {\de _i y_i \over \lambda_i} \Big) \non \\
&\quad \times {\rm{Ga}} ( s_i \mid 1 - b, \log (1 + \ta / \nu _i )) {\rm{Ga}} ( w_i \mid b + a, 1 + \log (1 + \ta / \nu _i )) {\rm{Ga}} ( z_i \mid s_i + w_i + 1, 1 + \nu _i / \ta ) \non \\
&\quad \times {1 \over ( \nu _i / \ta ) (1 + \nu _i / \ta )} {1 \over \{ \log (1 + \ta / \nu _i ) \} ^{1 - b}} {1 \over \{ 1 + \log (1 + \ta / \nu _i ) \} ^{b + a}} \Big] \text{.} \non 
\end{align*}

The variables $\lambda$, $\beta$, $\tau$, $(s, w, z)$, and $\nu $ are updated in the following way. 
\begin{itemize}
\setlength{\leftskip}{12pt}
\item[-]
Sample $\lambda_i \sim {\rm{IG}} ( \de _i + \nu _i + 1, \de _i y_i + \beta \nu _i )$ independently for $i = 1, \dots , n$. 
\item[-]
Sample $\beta \sim {\rm{Ga}} \big( \sum_{i=1}^n \nu _i + n + a_{\be } , \sum_{i=1}^n \nu _i / \lambda_i + b_{\be } \big) $. 
\item[-]
Sample $\tau \sim {\rm{GIG}} \big( - \sum_{i = 1}^{n} ( s_i + w_i ) + a_{\ta } , 2 b_{\ta } , 2 \sum_{i=1}^n z_i \nu _i \big) $, which has density proportional to $\tau^{- \sum_{i = 1}^{n} ( s_i + w_i ) + a_{\ta } - 1} e^{- b_{\ta } \ta - \sum_{i=1}^n z_i \nu _i / \tau }$. 
\item[-]
Independently for $i = 1, \dots , n$, 
\begin{itemize}
\setlength{\leftskip}{12pt}
\item[1.]
sample $s_i \sim {\rm{Ga}} (1 - b, \log (1 + \ta / \nu _i ))$ and $w_i \sim {\rm{Ga}} (b + a, 1 + \log (1 + \ta / \nu _i ))$ independently and 
\item[2.]
sample $z_i \sim {\rm{Ga}} ( s_i + w_i + 1, 1 + \nu _i / \ta )$. 
\end{itemize}
\item[-]
The full conditional distribution of $\nu $ is proportional to 
\begin{align*}
&\prod_{i = 1}^{n} \{ {\rm{Ga}} ( \nu _i \mid s_i + w_i , z_i / \tau ) {\rm{Ga}} (1 / \lambda_i \mid \nu _i , \beta \nu _i ) \} , 
\end{align*}
which can be accurately approximated by using the method of \cite{Miller2019} for each $i = 1, \dots , n$. 
We use the approximate full conditional distributions as proposal distributions in independent MH steps. 
\end{itemize}

\end{document}